\documentclass[a4paper,11pt]{article}  % add "twoside" for page waving"

\def\refcolour{green!60!black}
\def\citecolour{cyan}
\usepackage{algorithm}
\usepackage[noend]{algpseudocode}
\algnewcommand{\algorithmicassumption}{\textbf{Requirement:}}
\algnewcommand{\Assume}{\item[\algorithmicassumption]}
\algrenewcomment[1]{\hfill \(\triangleright\) \emph{\small #1}} % to italicize
\algnewcommand{\InlineIf}[2]{% single line if-then
  \algorithmicif\ #1\ \algorithmicthen\ #2}
\algnewcommand{\InlineElse}[1]{% single line else
  \algorithmicelse\ #1}
\algnewcommand{\InlineIfElse}[3]{% single line if-then-else
  \algorithmicif\ #1\ \algorithmicthen\ #2\ \algorithmicelse\ #3}
\algnewcommand{\InlineFor}[2]{\algorithmicfor\ #1\ \algorithmicdo\ #2}
\algnewcommand{\CommentLine}[1]{\(\triangleright\) \emph{\small #1}}

\algnewcommand{\algorithmicand}{\textbf{and}}
\algnewcommand{\algorithmicor}{\textbf{or}}
\algnewcommand{\FOR}{\algorithmicfor}
\algnewcommand{\OR}{\algorithmicor}
\algnewcommand{\AND}{\algorithmicand}
\algnewcommand{\IF}{\algorithmicif}
\algnewcommand{\THEN}{\algorithmicthen}
\algnewcommand{\ELSE}{\algorithmicelse}
\algnewcommand{\True}{\textsc{True}}
\makeatletter
\newcommand{\StateX}[1]{%
  \setlength\@tempdima{\algorithmicindent}%
  \Statex\hskip\dimexpr#1\@tempdima\relax}
\makeatother
\makeatletter
\newcommand{\algoCaptionLabel}[3]{
     \caption[\textproc{#1}]{\textproc{#1}\ifthenelse{\equal{#2}{}}{}{$(#2)$} }%
     \NR@gettitle{\textproc{#1}}%
      \label{algo:#3}
     }%
\makeatother

\usepackage[utf8]{inputenc}
\usepackage[T1]{fontenc}
\usepackage{pgfplots}
\pgfplotsset{compat=newest}
\usepgfplotslibrary{fillbetween}
\usetikzlibrary{patterns}
\usetikzlibrary{arrows.meta}    % required for custom arrowheads in tikz
\usepackage{amsmath}
\usepackage{amssymb}
\usepackage{mathtools}	    % for \coloneqq and \prescript
\usepackage[hypertexnames=false]{hyperref}
\usepackage{a4wide}
\usepackage{array}
\usepackage{titling}
\usepackage{setspace}
\usepackage{geometry}       % [left=2cm,right=2cm,top=2cm,bottom=3cm]
\hypersetup{				% for colours of clickable stuff
    colorlinks,				% show colours
    citecolor=\citecolour,	% colour for \cite{} (NOT brackets, number only)
    filecolor=red,			% colour for internal file reference
    linkcolor=\refcolour,	% colour for \ref{} and \nameref{}
    urlcolor=black			% colour for \url{}
}
\usepackage{comment}
\usepackage{amsthm}
\usepackage[capitalize,nameinlink]{cleveref}	% for smart clickable stuff
\usepackage{bm}			% for bold math symbols
\usepackage{mathrsfs}
\usepackage[shortlabels]{enumitem}
\usepackage[scr=boondoxo]{mathalpha}
\usepackage{anyfontsize}
\usepackage{float}
\usepackage{graphicx}
\usepackage[indent]{parskip}
\usepackage{multirow}
\usepackage{authblk}

\usepackage{etoolbox}

\makeatletter

\patchcmd\deferred@thm@head
  {\addvspace{-\parskip}}
  {}
  {}{\typeout{\string\deferred@thm@head patch failed!}}

\makeatletter 

\newtheorem{theorem}{Theorem}[subsection]

\renewcommand*{\thetheorem}{%
  \ifnum\value{subsection}<1 % if `subsection` counter is less than 1
    \thesection% show \thesection, e.g., 1
  \else% otherwise
    \thesubsection% show \thesubsection, e.g., 1.1
  \fi
  .\arabic{theorem}%
}

\newtheorem{lemma}[theorem]{Lemma}
\AddToHook{env/lemma/begin}{\crefalias{theorem}{lemma}}
\Crefname{lemma}{Lemma}{Lemmas}
\newtheorem{corollary}[theorem]{Corollary}
\AddToHook{env/corollary/begin}{\crefalias{theorem}{corollary}}
\Crefname{corollary}{Corollary}{Corollaries}

\AddToHook{env/definition/begin}{\crefalias{theorem}{definition}}
\Crefname{definition}{Definition}{Definitions}
\newtheorem{proposition}[theorem]{Proposition}
\AddToHook{env/proposition/begin}{\crefalias{theorem}{proposition}}
\Crefname{proposition}{Proposition}{Propositions}
\AddToHook{env/example/begin}{\crefalias{theorem}{example}}
\Crefname{example}{Example}{Examples}
\newtheorem{remark}[theorem]{Remark}
\AddToHook{env/remark/begin}{\crefalias{theorem}{remark}}
\Crefname{remark}{Remark}{Remarks}

\newtheorem{innercustomgeneric}{\customgenericname}
\providecommand{\customgenericname}{}
\newcommand{\newcustomtheorem}[2]{%
  \newenvironment{#1}[1]
  {%
   \ifdefined\crefalias\crefalias{innercustomgeneric}{#2}\fi
   \renewcommand\customgenericname{#2}%
   \renewcommand\theinnercustomgeneric{##1}%
   \innercustomgeneric
  }
  {\endinnercustomgeneric}%
  \ifdefined\crefname\crefname{#2}{#2}{#2s}\fi
}

\newcustomtheorem{customthm}{Theorem}

\allowdisplaybreaks

\crefname{equation}{}{} % Makes \Cref{equationline} write "(#)" over "Eq.(#)"

\makeatletter
\renewcommand\@makefnmark{
  \hbox{\@textsuperscript{\normalfont\color{blue}\@thefnmark}}}
\makeatother

\DeclarePairedDelimiter{\norm}{\lVert}{\rVert}
\DeclarePairedDelimiter{\abs}{\lvert}{\rvert}
\DeclareMathOperator{\htt}{ht}
\DeclareMathOperator{\Jac}{Jac}
\DeclareMathOperator{\GL}{GL}

\DeclareMathOperator{\crit}{crit}

\DeclareMathOperator{\grad}{grad}

\DeclareMathOperator{\rank}{rank}

\DeclareMathOperator{\I}{I}
\DeclareMathOperator{\T}{T}
\DeclareMathOperator{\V}{V}

\DeclareMathOperator{\der}{d}

\newcommand{\bma}{\bm{a}}
\newcommand{\bmb}{\bm{b}}

\newcommand{\bmd}{\bm{d}}
\newcommand{\bme}{\bm{e}}
\newcommand{\bmn}{\bm{n}}
\newcommand{\bmp}{\bm{p}}
\newcommand{\bmq}{\bm{q}}
\newcommand{\bmr}{\bm{r}}
\newcommand{\bms}{\bm{s}}
\newcommand{\bmu}{\bm{u}}
\newcommand{\bmv}{\bm{v}}
\newcommand{\bmx}{\bm{x}}
\newcommand{\bmX}{\bm{X}}
\newcommand{\bmy}{\bm{y}}
\newcommand{\bmz}{\bm{z}}
\newcommand{\bmzero}{\bm{0}}
\newcommand{\balpha}{{\bm{\alpha}}}
\newcommand{\betta}{{\bm{\beta}}}
\newcommand{\bmeta}{{\bm{\eta}}}
\newcommand{\bxi}{{\bm{\xi}}}
\newcommand{\bell}{{\bm{\ell}}}

\newcommand{\field}{\mathbb{K}}
\newcommand{\Nintegers}{\mathbb{N}}
\newcommand{\Zintegers}{\mathbb{Z}}
\newcommand{\ratfield}{\mathbb{Q}}
\newcommand{\realfield}{\mathbb{R}}
\newcommand{\compfield}{\mathbb{C}}

\newcommand{\assS}{\textbf{(S)}}
\newcommand{\OGassumS}{\hypertarget{S}{\textcolor{\refcolour}{\assS}}}
\newcommand{\assumS}{\hyperlink{S}{\assS}}

\newcommand{\assB}{\textbf{(B)}}
\newcommand{\OGassumB}{\hypertarget{B}{\textcolor{\refcolour}{\assB}}}
\newcommand{\assumB}{\hyperlink{B}{\assB}}

\newcommand{\assHa}{\textbf{(A1)}}
\newcommand{\OGassumHa}{\hypertarget{H1}{\textcolor{\refcolour}{\assHa}}}
\newcommand{\assumHa}{\hyperlink{H1}{\assHa}}
\newcommand{\assHb}{\textbf{(A2)}}
\newcommand{\OGassumHb}{\hypertarget{H2}{\textcolor{\refcolour}{\assHb}}}
\newcommand{\assumHb}{\hyperlink{H2}{\assHb}}

\newcommand{\ttRatParam}{\texttt{ZeroDimParam}}

\newcommand{\ttApproximationList}{\texttt{ParamApproximation}}
\newcommand{\ttParamMaxValue}{\texttt{UpperBoundValue}}
\newcommand{\ttParamMinValue}{\texttt{LowerBoundValue}}

\newcommand{\ttSLPPolar}{\texttt{Polar}}

\newcommand{\ttExtensionRealRoots}{\texttt{ExtensionIsolation}}

\newcommand{\procCrit}{\textproc{CriticalPoints}}
\newcommand{\procApprox}{\textproc{Approximation}}
\newcommand{\procIsol}{\textproc{Isolation}}

\newcommand{\GamSLP}{\Gamma}
\newcommand{\LamSLP}{\Lambda}

\newcommand{\scrQ}{\mathscr{Q}}
\newcommand{\scrM}{\mathscr{M}}
\newcommand{\scrm}{\mathscr{m}}
\newcommand{\scrc}{\mathscr{c}}
\newcommand{\scrd}{\bm{\mathscr{d}}}
\newcommand{\scrA}{\mathscr{A}}
\newcommand{\scrI}{\mathscr{I}}
\newcommand{\scrC}{\mathscr{C}}
\newcommand{\scrP}{\mathscr{P}}
\newcommand{\scrB}{\mathscr{B}}
\newcommand{\scrH}{\mathscr{H}}
\newcommand{\scrO}{\mathscr{O}}

\newcommand{\scrN}{\mathscr{N}}

\newcommand{\setD}{X}
\newcommand{\setE}{Y}

\newcommand{\epsproba}{\epsilon}
\newcommand{\changeofvarmatrix}{\mathbf{A}}

\newcommand{\genericcoordsigma}{\bm{\sigma}}

\newcommand{\critpoints}{\text{crit}}
\newcommand{\probaP}{\mathbb{P}}
\newcommand{\approxheight}{\gamma}

\newcommand{\totaldegree}{\bm{D}}
\newcommand{\multitotaldegree}{\bm{\Delta}}
\newcommand{\multitotalheight}{\bm{H}}

\newcommand{\calB}{\mathcal{B}}

\newcommand{\calP}{\mathcal{P}}

\newcommand{\calV}{\mathcal{V}}
\newcommand{\calZ}{\mathcal{Z}}
\newcommand{\frakA}{\mathfrak{A}}

\newcommand{\frakS}{\mathfrak{S}}

\newcommand{\varT}{t}
\newcommand{\varU}{u}

\newcommand{\upMat}{C}
\newcommand{\lowMat}{B}
\newcommand{\lowEnt}{b}

\makeatletter
\newcommand{\vast}{\bBigg@{3}}
\newcommand{\Vast}{\bBigg@{4}}
\makeatother
\floatstyle{ruled}
\newfloat{subroutine}{htbp}{lok}
\floatname{subroutine}{Subroutine}

\crefname{subroutine}{subroutine}{subroutines}
\Crefname{subroutine}{Subroutine}{Subroutines}

\makeatletter
\newcommand{\subrCaptionLabel}[3]{
    \caption[\textproc{#1}]{\textproc{#1}\ifthenelse{\equal{#2}{}}{}{$(#2)$} }%
    \NR@gettitle{\textproc{#1}}%
      \label{subr:#3}
    }%
\makeatother

\makeatletter
\AddToHook{env/algorithmic/begin}{\def\@currentcounter{ALG@line}}
\makeatother

\crefalias{ALG@line}{line}

\title{Computing points in connected components defined by a real inequation: 
algorithms, complexity and implementations, Part I}
\author[1]{J{\'e}r{\'e}my Berthomieu}
\author[1]{Edern Gillot}
\author[1]{Mohab {Safey El Din}}
\affil[1]{Sorbonne Université, CNRS, LIP6, Paris, France}
\date{\today}
\begin{document}
%\pagenumbering{gobble}

\maketitle

\begin{abstract}
    We consider the problem of computing sample points in each connected component
of a semi-algebraic set defined by the non-vanishing or the positivity of an
$n$-variate polynomial of degree $d$, with rational coefficients of bit size
bounded by $\tau$. 

Such a problem is a basic routine in effective real algebraic geometry, used
in higher-level algorithms for solving polynomial systems over the reals and
finds many applications in sciences. 

We design a probabilistic algorithm for solving this problem, which is based on
reductions to different routines for solving zero-dimensional polynomial
systems. It assumes that the input polynomial satisfies sufficiently generic
properties (namely, smoothness of its defining hypersurface). This is done
through the computations of critical points of well-chosen maps to capture the
connected components of the semi-algebraic set under study.

We derive a bit complexity estimate for the cost of this algorithm, which is, in
terms of the B\'ezout bound $d(d-1)^{n-1}$, essentially cubic for obtaining
parametrisations of the sought-for real points. Moreover, we also consider the
case of obtaining rational approximations of those points, which are precise
enough to lie in the same connected components as their exact counterparts,
which yields a cost that is essentially quartic in the B\'ezout bound. In these
complexity estimates, we take into account the degree structure of the input
polynomial and its partial derivatives, allowing for a more refined bit
complexity when the partial derivative of the input polynomial have degree lower
than expected. We also analyse the probability of success of those algorithms.

We report on practical experiments, benchmarking with random dense
input polynomials as well as polynomials coming from applications, which were
out of reach of the state-of-the-art implementations, and hence illustrate the
practical efficiency of these new algorithms.
\end{abstract}

\section{Introduction}\label{sec:intro}

\subsection{Background}

% Understanding the set of solutions, and particularly real solutions, to
% polynomial systems with constraints arise in a wide range of applications, as
% such systems encode non-linear algebraic relations, and are hence used to model
% many physical and biological properties. However, answering even basic queries
% about those sets is difficult; for instance, deciding if such a set is empty or
% not is shown to be an NP-hard problem \cite[Appendix A.7]{GS79}. Nonetheless,
% these sets enjoy remarkable properties, such as the finiteness of their
% connected components \cite{whitney,loja}.

Computing sample points of connected components of real solutions to real
polynomial systems with constraints is a basic subroutine of real algebraic
geometry. It allows in particular to decide the emptiness of the solution set of
such a system, a problem that is known to be NP-hard \cite[Appendix A.7]{GS79}.
The computation of these sample points also finds applications in fields such as
robotics \cite{CPSSW22}, biology \cite{FS22}, optimisation
\cite{DBLP:phd/hal/Ferguson22}, program verification \cite{GHMM23}, or
combinatorics \cite{IS24}. There is therefore a need for efficient algorithms,
both in theory and in practice, for computing these sample points. Moreover, as
examples arising from applications are typically structured, having for instance
symmetries, invariance, multi- or weighted-homogeneity, these algorithms should
make use of these structures to improve computations. In this paper, we focus on
the particular case of semi-algebraic sets defined by a single polynomial
inequation: sets of the form $S \coloneqq \left\{\bmx \in \mathbb{R}^n \ | \
f(\bmx) \neq 0\right\}$, where $f$ is a polynomial with rational coefficients in
$n$ variables.

Given such a polynomial $f$ of degree $d$, the number of connected components of
$S$ is known to be at most $O(d)^n$, by the
\emph{Oleinik--Petrovski--Thom--Milnor bound} (original works
in \cite{OP49,thom,milnor}, see \cite{BPR} for a modern statement). In
particular, any algorithm computing at least one point per connected component
of $S$ will have complexity at least exponential in $n$.
% Actual bound is \binom{s}{dim} O(d)^n, where s = number of ineq. and dim =
% dimension of solution space of the equalities (see ON THE BETTI NUMBERS OF
% SIGN CONDITIONS) by BPR. More usable bound (slightly worse) is s^k O(d)^k
% where s is the number of inequalities

This problem can be seen as an instance of \emph{real quantifier elimination},
and as such is known to be solvable algorithmically since Tarski \cite{tarski}.
The first algorithm whose complexity is elementary recursive to solve this
problem is the \emph{cylindrical algebraic decomposition} (CAD) given by Collins
\cite{collins}, and performs $d^{2^{O(n)}}$ arithmetic operations. Moreover, if
the bitsize of the coefficients of $f$ is at most $\tau$, the CAD performs $\tau
d^{2^{O(n)}}$ bit operations \cite{BPR,BHPSS21}. The CAD is a classical
algorithm for solving quantifier elimination problems, and is available in
various computer algebra systems, such as \texttt{Maple} \cite{maple} and
\texttt{Mathematica} \cite{mathematica}. However, its doubly exponential
complexity in $n$ makes it impractical to use when $n$ is large; for instance, 
on examples of fixed degree $4$ and small bitsize, its runtime becomes
prohibitive whenever $n$ exceeds $4$.

This doubly exponential complexity in the number of variables contrasts with the
Oleinik--Petrovski--Thom--Milnor bound, which is singly exponential in $n$. To
bridge this gap, the \emph{critical point} method was developed
\cite{GV88,GV92,HRS94}, and eventually resulted in Basu, Pollack and Roy's
algorithm \cite[Theorem 13.22]{BPR}, which performs $d^{O(n)}$ arithmetic
operations, and $\tau d^{O(n)}$ bit operations, and is thus polynomial in the
Oleinik--Petrovski--Thom--Milnor bound. However, the constant hidden by the
big-Oh notation is actually too large for practical implementation \cite{ARS02},
due to the manipulation of infinitesimals, and it does not make use of any
structure inherent to the input system.

To resolve this, new, \emph{probabilistic}, algorithms have been developed
\cite{BGHM97,SS03,din07,BGHLMS15,LeS21,LeS22}, where a random change of
variables is performed to ensure certain properties with high probability.
On one hand, one could reduce the problem of computing sample points in a
semi-algebraic set to one in an \textit{algebraic} set, by introducing a new
variable $x_{n+1}$ and taking $x_{n+1} f - 1$ as input. In this case, using
$\widetilde{O}$ to denote the omission of logarithmic factors, the
state-of-the-art complexity result is in
$\widetilde{O}\big(\log(1/\epsproba)(\tau + \log(1/\epsproba))d^{3n+6}\big)$ bit
operations \cite{SS03,EGS20,EGS23}, where $0 < \epsproba < 1$ is an input
parameter controlling the probability of success. Note that this result is only
applicable on inputs satisfying some regularity assumption.
On the other hand, algorithms specific to the semi-algebraic case have also been
developed, yielding a state-of-the-art complexity result of
$\widetilde{O}\big(\binom{n+d}{d} 8^n d^{2n+1}\big)$ \textit{arithmetic}
operations \cite{LeS22}, without analysis of the bit complexity.
Both these approaches enjoy better practical results than the CAD.

In the following paragraphs, we recall the notion of height (a measure for the
bit size of polynomials), as well as certain data structures used to represent
polynomials with rational coefficients and finite sets of algebraic points,
which are needed to express our main result.

\subsection{Bit size and data structures}

Throughout this paper, we use $\log$ to denote logarithms in base $2$.

\paragraph*{Height.} We define the (logarithmic) \emph{height} of a non-zero
rational number $q = u/v$ as $\htt(q) \coloneqq \max(\log \abs{u}, \log(v))$,
where $u \in \Zintegers$, $v \in \Nintegers \setminus \{0\}$, and $\gcd(u,v)=1$,
and formally define $\htt(0) \coloneqq 0$. For a non-zero polynomial $g \in
\ratfield[x_1,\dots,x_n]$, if $v \in \Nintegers \setminus \{0\}$ is the minimal
common denominator of all non-zero coefficients of $g$, then the \emph{height}
of the polynomial $g$ is defined as $\htt(g) \coloneqq \max(\log(v),
\log\norm{vg}_\infty)$, where $\norm{\cdot}_\infty$ denotes the maximum of the
absolute values of the coefficients of the polynomial. Knowing the height of a
polynomial with rational coefficients, along with its degree, allows us to
obtain an upper bound on the size of its bit representation.

\paragraph*{Straight-line programs.} Given polynomials $g_1, \dots, g_r \in
\ratfield[x_1, \dots, x_n]$, we say that a sequence $\GamSLP$ of elementary
operations $+$,$-$,$\times$ is a \emph{straight-line program evaluating the
polynomials} $g_1, \dots, g_r$ if $\GamSLP$ evaluates $g_1, \dots, g_r$ from the
input variables $x_1, \dots, x_n$. The \emph{length}, or \emph{size}, of a
straight-line program is the number of elementary operations it performs. This
representation of polynomials is commonly used in polynomial system solving
\cite{GHMP95,GHHMMP97,BGHM97,GHMMP98,GLS01,SS03,JS07,SS18}. 

\paragraph*{Zero-dimensional rational parametrisations.} Suppose that we are
given a zero-dimensional algebraic set $Q \subset \overline{\ratfield}^n$
defined over $\ratfield$ (that is, a finite set of algebraic numbers). A
\emph{zero-dimensional rational parametrisation} $\scrQ = (w, v_1, \dots, v_n,
\nu)$ of $Q$ consists of univariate polynomials $w,v_1,\dots,v_n \in
\ratfield[\varT]$ for a new variable $\varT$, and a $\ratfield$-linear form
$\nu$ in $n$ variables, such that:
\begin{itemize}
    \item $w$ is monic, squarefree, and $\deg(w) > \deg(v_i)$ for all $i$,
    \item $\displaystyle \nu(v_1, \dots, v_n) \equiv \varT w' \mod w$,
    \item and \(\displaystyle Q = \left\{\left(\frac{v_1(\varT)}{w'(\varT)}, 
    \dots, \frac{v_n(\varT)}{w'(\varT)}\right) \in \compfield^n \ \middle| 
    \ w(\varT) = 0\right\}\).
\end{itemize}
The constraint on $\nu$ says that the roots of $w$ are precisely the values
taken by $\nu$ on the points of $Q$, and implies that $\nu$ takes distinct
values at distinct points of $Q$; it is therefore referred to as a
\emph{separating} linear form \emph{associated} to $\scrQ$. This data structure
is a staple of effective algebra, being first introduced by Kronecker and
Macaulay \cite{kronecker,macaulay} and having been used in many algorithms since
\cite{GM89,GHMP95,ABRW96,GHMMP98,rouillier99,GLS01,SS03,PS13,SS18}.

Note that, although the denominators $w'$ are not strictly necessary to
represent the sets in this way, they allow for precise control of the heights of
the polynomials of the parametrisation \cite{ABRW96,GLS01,rouillier99}, which we
exploit in this work. By convention, we denote by $\scrQ = (w \equiv 1)$ the
parametrisation that represents the empty set. Moreover, we define the
\textit{degree} of a parametrisation as the degree of its defining polynomial
$w$.

\subsection{Main result}

This work is split into two parts. Our first contribution to the problem,
covered in this paper, is a new Monte-Carlo algorithm for computing points per
connected components of semi-algebraic sets defined by a single polynomial
inequation, in the case where the hypersurface defined by the vanishing of this
polynomial is smooth. Our second contribution will be a similar algorithm, which
does not require any smoothness assumption.

The smooth-case algorithm relies on the use of efficient multi-homogeneous
polynomial system solving techniques \cite{SS18} for solving zero-dimensional
polynomial systems, and thus inherits their probabilistic aspects. As such, our
algorithm implicitly relies on an \emph{oracle} $\scrO$, which takes as input a
positive integer $b$, and returns a prime number in the set $\{b+1, \dots, 2b\}$
uniformly at random among the primes in this interval. Such an oracle exists,
see for example \cite[Section 18.4]{MCA}.

We adapt the algorithm from \cite{SS03} to compute points on the boundaries of
each connected component of the semi-algebraic set $S$ under study, and obtain
from them parametrisations of finite sets of points meeting all connected
components of $S$, by computing well-chosen translations of these critical
points. Once the parametrisations of the sample points are obtained, it
suffices to compute rational approximations of those points of low bit length,
but sufficiently precise to lie in the same connected component as their exact
counterpart. This idea is not new (see e.g.\ \cite[Theorem 13.16]{BPR}), but we
apply recent results concerning rational parametrisations (chiefly due to
\cite{MS21}) in order to make explicit the exponent of $d$.

Concerning the computation of these rational approximations, our first result
concerns the case where the input polynomial $f$ has a general multi-homogeneous
structure:

\begin{customthm}{1}\label{thm:mainmulti}
    Suppose that $f \in \ratfield[\bmx_1, \dots, \bmx_m]$ is squarefree, where
    $\bmx_i = \Big(x_{1 +\sum_{j=1}^{i-1}n_j}, \dots$, $x_{n_i +
    \sum_{j=1}^{i-1}n_j}\Big)$, $n = n_1 + \dots + n_m$, and $\deg_{\bmx_i}(f)
    \leq d_i$. Suppose that $d_1 \leq \cdots \leq d_m$, and that $f$ has height
    $\tau$. Assume that the hypersurface $\V(f) \subset \compfield^n$ is smooth,
    and that $\GamSLP$ is a straight-line program of length $L$ evaluating $f$.
    Let $S \coloneqq \left\{\bmx \in \mathbb{R}^n \ | \ f(\bmx) \neq 0\right\}$
    and let $0 < \epsproba < 1$. 
    Then there exists an algorithm which takes $\GamSLP$ and $\epsproba$ as
    inputs, and returns one of the following outputs:
    \begin{itemize}
        \item either a set of at most $2n\multitotaldegree + 1$ points of
        $\ratfield^n$,
        \item or \emph{\texttt{fail}}.
    \end{itemize}
    With probability at least $1-\epsproba$, the first outcome occurs, and the
    computed set of points intersects every connected component of $S$. In any
    case, the algorithm performs at most
    \[\widetilde{O}\big(\log(1/\epsproba)n^3\multitotaldegree
    \multitotalheight(L+nd+n^3) + nd^5\multitotaldegree^3(\multitotalheight 
    + n\multitotaldegree)\big)\]
    bit operations, where:
    \begin{itemize}
        \item $d \coloneqq \deg(f)$,
        \item for $1 \leq c \leq m$, $1 \leq k \leq n$, $\ell > k$, and
        constants $a_{i,j}$ and $b_{i,j,k}$, 
        \[\delta_{c,\ell,k} \coloneqq \deg_{{\bmx_c}}\bigg(\frac{\partial
        f}{\partial x_{\ell}} + \sum_{i=1}^k
        \bigg(\sum_{j=1}^{n-k}a_{i,(j+k)}\lowEnt_{j,\ell-k,k}\bigg)
        \frac{\partial f}{\partial x_i}\bigg),\]
        \item $\approxheight \in \widetilde{O}\left(n\log(1/\epsproba) + n^2 + 
        \tau + d\right)$,
        \item $\multitotaldegree$ is the sum of the coefficients of the
        polynomial
        \[(d_1\theta_1 + \dots + d_m\theta_m)\prod_{i=2}^{n}
        \left(\delta_{1,i,1}\theta_1 + \dots + \delta_{m,i,1}\theta_m\right)
        \mod \left<\theta_1^{n_1+1}, \dots, \theta_m^{n_m+1}\right>,\]
        which is bounded from above by $d_1^{n_1}\cdots d_m^{n_m}
        \frac{n!}{n_1!\cdots n_m!}$,
        \item $\multitotalheight$ is the sum of the coefficients of the
        polynomial
        \begin{align*}
            &(\approxheight\zeta + d_1\theta_1 + \dots + d_m\theta_m)
            \prod_{i=2}^n \left(\approxheight\zeta + \delta_{1,i,1}\theta_1 + 
            \dots + \delta_{m,i,1}\theta_m\right) \\
            &\mod \left<\zeta^2, \theta_1^{n_1+1}, \dots, \theta_m^{n_m+1}
            \right>.
        \end{align*}
    \end{itemize}
    The oracle $\scrO$ is called $n$ times. Upon success, every
    rational coordinate in the output has height in
    \[\widetilde{O}\big(d^5\multitotaldegree^2(\multitotalheight +
    n\multitotaldegree)\big).\] 
\end{customthm}

Note that the squarefreeness of $f$ is a mild assumption, as the non-vanishing
of a polynomial defines the same semi-algebraic set as the non-vanishing of its
squarefree part. As $\multitotaldegree \leq d_1^{n_1}\cdots d_m^{n_m}
\frac{n!}{n_1!\cdots n_m!}$, the complexity result stated above can be
subexponential in $n$ in some cases, for instance when some of the $d_i$'s are
equal to $1$.

Our second result is a consequence of the first one, when the input polynomial
$f$ does not have any particular associated multi-degree structure.

\begin{customthm}{2}\label{thm:mainresult}
    Suppose that $f \in \ratfield[x_1, \dots, x_n]$ is squarefree, of degree $d$
    and height $\tau$, and that \(\deg\left(\frac{\partial f}{\partial
    x_1}\right) \leq \dots \leq \deg\left(\frac{\partial f}{\partial
    x_n}\right)\). Suppose that the hypersurface $\V(f) \subset \compfield^n$ is
    smooth. Let $S \coloneqq \left\{\bmx \in \mathbb{R}^n \ | \
    f(\bmx) \neq 0\right\}$. Suppose that $\GamSLP$ is a straight-line
    program of length $L$ evaluating $f$. Let $0 < \epsproba < 1$, and define
    \[\totaldegree \coloneqq d \times \deg\left(\frac{\partial f}{\partial
    x_2}\right) \times \dots \times \deg\left(\frac{\partial f}{\partial
    x_n}\right).\]
    Then there exists an algorithm which takes $\GamSLP$ and $\epsproba$ as
    inputs, and returns one of the following outputs:
    \begin{itemize}
        \item either a set of at most $2n\totaldegree + 1$ points of
        $\ratfield^n$,
        \item or \emph{\texttt{fail}}.
    \end{itemize}
    With probability at least $1-\epsproba$, the first outcome occurs, and the
    computed set of points intersects every connected component of $S$. In any
    case, the algorithm performs at most
    \[\widetilde{O}\big(n^2\totaldegree^2(n\log(1/\epsproba) + n^2 + 
    \tau + d)(\log(1/\epsproba)n^2(L+nd+n^3) + 
    d^5\totaldegree^2)\big)\]
    bit operations. The oracle $\scrO$ is called $n$ times. Upon success, every
    rational coordinate in the output has height in
    \[\widetilde{O}\big(nd^5\totaldegree^3(n\log(1/\epsproba) + n^2 + 
    \tau + d)\big).\]
\end{customthm}

The assumption on the degrees of the partial derivatives of $f$ is also mild,
since it can always be satisfied by re-labelling the variables.

Our third main result concerns the complexity of computing only the rational
parame\-trisations of the points per connected components of $S$, without
computing the rational approximations, which yields the following result:

\begin{customthm}{3}\label{cor:mainresult}
    Under the notations of \emph{\Cref{thm:mainmulti}} and
    \emph{\Cref{thm:mainresult}}, there exists an algorithm which takes
    $\GamSLP$ and $\epsproba$ as inputs, and returns one of the following
    outputs:
    \begin{itemize}
        \item either a point of $\ratfield^n$ and $2n$ zero-dimensional rational
        parametrisations,
        \item or \emph{\texttt{fail}}.
    \end{itemize}
    With probability at least $1-\epsproba$, the first outcome occurs, and the
    union of the point and the zeroes of the parametrisations contains at least
    one point per connected component of $S$. In any case, the algorithm
    performs at most
    \[\widetilde{O}\big(\log(1/\epsproba)n^3\multitotaldegree
    \multitotalheight(L+nd+n^3) + nd^3\multitotaldegree^2(\multitotalheight 
    + n\multitotaldegree)\big)\]
    bit operations in the multi-homogeneous setting, or 
    \[\widetilde{O}\big(n^2\totaldegree^2(n\log(1/\epsproba) + n^2 + 
    \tau + d)(\log(1/\epsproba)n^2(L+nd+n^3) + 
    d^3\totaldegree)\big)\]
    bit operations in the setting of \emph{\Cref{thm:mainresult}}. 
    The oracle $\scrO$ is called $n$ times. Upon success, every
    polynomial in each parametrisation has degree at most $\multitotaldegree$,
    respectively $\totaldegree$, and height in 
    \(\widetilde{O}\big(d^3\multitotaldegree (\multitotalheight +
    n\multitotaldegree)\big)\), respectively \(\widetilde{O}\big(nd^3
    \totaldegree^2 (n\log(1/\epsproba) + n^2 + \tau + d)\big)\).
\end{customthm}

Moreover, the quantities $\multitotaldegree$ and $\totaldegree$ are
(multi-homogeneous) B\'ezout bounds on the number of complex solutions to a
polynomial system involving $f$ and its partial derivatives. As such, these
quantities are bounded above by $d(d-1)^{n-1}$, and
\Cref{thm:mainmulti,thm:mainresult,cor:mainresult} imply that the bit complexity
of computing rational parametrisations of points per connected components of $S$
is essentially cubic in $d(d-1)^{n-1}$, whereas the bit complexity of computing
the actual rational sample points is essentially quartic in $d(d-1)^{n-1}$.

% Should one happen to know additional information about this system
% that allows to determine a better bound, such as a multi-homogeneous
% structure for a multi-homogeneous bound \cite{Waerden27}, the complexity
% analysis above remains valid when replacing $\totaldegree$ by this bound.
%
% sparsity for the BKK bound
% \cite{Bernstein75, Kouchnirenko76, Khovanskii78}
% \cite{Waerden27} for multi-homogeneous bound

We have implemented this algorithm and observed practical improvements over the
state-of-the-art algorithms on several examples. Notably, on examples where the
structure can be leveraged, e.g.\ where $\totaldegree$ is much lower than
$d(d-1)^{n-1}$, we observe competitive timings even compared to state-of-the-art
numerical solvers. See \Cref{sec:exp} for more details.

The results used to show the correctness of our algorithm are, due to its
nature, inherently geometric, and are interesting on their own. For this reason,
we have grouped all such results in \Cref{sec:correct}.

\subsection{Organisation of the paper}

We start by fixing notation and recalling basic notions in \Cref{sec:prelim}. We
then describe the algorithm and its subroutines in \Cref{sec:algo}.
\Cref{sec:correct} provides proofs of correctness of these subroutines and of
the main algorithm, \nameref{algo:smooth}. We analyse the probability of success
of the algorithm in \Cref{sec:proba}, and then analyse its bit cost in
\Cref{sec:comp}. Finally, \Cref{sec:exp} is devoted to practical
implementations.

\section{Notation and preliminaries}\label{sec:prelim}

\subsection{Basic notions}

We begin by recalling some basic algebraic and geometric notions which we use in
this paper. Those are standard, and we refer interested readers to
\cite{mumford,eisenbud,shafarevich,cox} for more details.

\paragraph*{Algebraic sets.}
Let $\field$ be a field and $\overline{\field}$ denote its algebraic closure. A
$\field$\emph{-algebraic set} $W \subset \overline{\field}^n$ is the set of
common solutions, in $\overline{\field}^n$, to $n$-variate polynomial equations,
where those polynomials have coefficients in $\field$. When the base field is
clear from context, we simply refer to those sets as \emph{algebraic sets}.
If explicit polynomials $g_1, \dots, g_s \in \field[x_1, \dots, x_n]$ are given,
we denote by $\V(g_1, \dots, g_s) \subset \overline{\field}^n$ the algebraic set
defined by $g_1 = \dots = g_s = 0$. Conversely, the \emph{ideal}, in
$\field[x_1, \dots, x_n]$, associated to a set $W
\subset \overline{\field}^n$ is the set of $n$-variate polynomials that vanish
on all points of $W$, and is denoted $\I(W)$.

An algebraic set $W \subset \overline{\field}^n$ is said to be
\emph{irreducible} when $W = W_1 \cup W_2$, with $W_1,W_2$ algebraic sets,
implies $W=W_1$ or $W=W_2$. Any algebraic set $W$ can be decomposed into a
finite union of irreducible algebraic sets, called the \emph{irreducible
components} of $W$, which are uniquely defined up to order. An algebraic set is
said to be a \textit{hypersurface} when it is defined by a single polynomial.

For an algebraic set $W$, its \emph{dimension}, denoted $\dim(W)$, is the Krull
dimension of the coordinate ring of $W$. By convention, we set $\dim(\emptyset)
\coloneqq -1$; zero-dimensional algebraic sets are precisely non-empty finite
algebraic sets. An algebraic set $W = \V(g_1, \dots, g_s)$ is said to be
$d$\emph{-equidimensional} when all its irreducible components have the same
dimension $d$.

\paragraph*{Zariski topology and genericity.}
The \emph{Zariski topology} on $\field^n$ is a topology where the closed sets
are of the form $\V(F) \cap \field^n$, where $F$ is any set of $n$-variate
polynomials with coefficients in $\field^n$. The \emph{Zariski closure} of
some set $X \subset \field^n$, denoted $\overline{X}^\calZ$, is the smallest
Zariski closed set containing $X$. Note that this is typically different from
the Euclidean closure $\overline{X}$.

A property $P$ depending on parameters $\bmp \in \field^n$ is said to be
\emph{generic} if there exists a non-empty Zariski open set $U \subseteq
\field^n$ such that $P(\bmp)$ is true for all $\bmp \in U$. 

\paragraph*{Quantitative aspects.}
The \textit{degree} of a zero-dimensional algebraic set is its cardinality.
For an irreducible algebraic set $W$, its \emph{degree} (in the sense of
\cite{Heintz83}), denoted $\deg(W)$, is the finite number of points contained in
the intersection of $W$ with $\dim(W)$ generic hyperplanes. For an arbitrary
algebraic set, its degree is the sum of the degrees of its irreducible
components of highest dimension $\dim(W)$. If the algebraic set is a
hypersurface, its degree is simply the degree of its defining polynomial, and we
formally define $\deg(\emptyset) = 1$.

For a $d$-equidimensional algebraic set $W \subset \compfield^n$ defined over
$\ratfield$, its \emph{height} (in the sense of \cite{Phil95,KPS01,DS04,DKS12})
is a measure of the bit complexity of its representation. Formally, it is
defined as
\[\htt(W) \coloneqq \sum_{p \text{ prime}} \ell_p(C_W) + m(C_W, d+1, n+1) +
(d+1)\deg(W)\sum_{i=1}^n \frac{1}{2i},\]
where $C_W$ denotes any Chow form of $W$ defined over $\ratfield$ \cite[Section
1]{KPS01}, $m(f,r,n)$ denotes the $(r,n)$-Mahler measure of $f$ over the complex
sphere of dimension $n$ \cite[Section 1]{KPS01}, and $\ell_p$ denotes the
$p$-adic height \cite[Section 1]{KPS01}. As a formal definition of all these
quantities would be rather technical, and not the chief topic of this paper, we
refer interested readers to e.g.\ \cite[Section 3]{DKS12} for a comprehensive
definition. 

\paragraph*{Smoothness, projections and critical points.}
% Given $g_1, \dots, g_s \in \field[x_1, \dots, x_n]$, we define the
% \emph{Jacobian} of $g_1, \dots, g_s$ by
% \[\Jac(g_1, \dots, g_s) \coloneqq \begin{bmatrix}
% 	\frac{\partial g_1}{\partial x_1} & \dots & \frac{\partial g_1}
%     {\partial x_n} \\
% 	\vdots & \ddots & \vdots \\
% 	\frac{\partial g_s}{\partial x_1} & \dots & \frac{\partial g_s}
%     {\partial x_n} \\
% \end{bmatrix}.\]
% We write $\Jac_{\bmy}(g_1, \dots, g_s)$ for the evaluation of the Jacobian at a
% point $\bmy \in \field^n$.
Assume that $W$ is a $d$-equidimen\-sional algebraic set. The \emph{Zariski
tangent space} to $W$ at $\bmx \in W$, denoted ${\T}_{\bmx}W \subset
\overline{\field}^n$, is the vector space defined by
\[\bmv \in {\T}_{\bmx}W \iff \grad(g)(\bmx) \cdot \bmv = 0 \text{ for all } g
\in \I(W).\]
If, as a vector space, $\dim({\T}_{\bmx}W) = d$, then $\bmx$ is said to be a
\emph{regular} or \emph{smooth} point of $W$. Otherwise, $\bmx$ is said to
be a \emph{singular} point of $W$. The algebraic set $W$ is said to be
\emph{smooth} if it does not have any singular point.
Furthermore, if $G = \{g_1, \dots, g_s\}$ generates $\I(W)$, then at any regular
point $\bmx \in W$, $\Jac_{\bmx}(G)$ has rank $n - d$ and its right kernel is
${\T}_{\bmx}W$, where $\Jac_{\bmx}(G)$ is the Jacobian matrix of $G$ evaluated
at $\bmx$.

Let $W \subset \compfield^n$ be a $d$-equidimensional algebraic set, and let
$\phi: \compfield^n \to \compfield^m$ be a polynomial map. A point $\bmx$ of $W$
is said to be a \emph{critical point} of the restriction of $\phi$ to $W$ if it
is a regular point of $W$ such that $\mathbin{{\der}\phi}\left(\T_{\bmx}W\right)
\neq \compfield^m$, where $\mathbin{{\der}\phi}$ denotes the
\emph{differential map} of the restriction of $\phi$ to $W$, defined as
$\mathbin{{\der}\phi}:\T_{\bmx}W \to \compfield^m, \bmu \mapsto
\Jac_{\bmx}(\phi)\cdot\bmu$.

We denote by $\crit(\phi, W)$ the set of critical points of $\phi$ on $W$. A
\emph{critical value} of $\phi$ on $W$ is a value $\bmy \in \compfield^m$ such
that $\exists \bmx \in \crit(\phi, W)$ with $\phi(\bmx) = \bmy$. As the above
definition of critical points does not allow us to directly compute them, we
recall the following lemma:

\noindent \cite[Lemma A.2]{SS17} \emph{Let $W$ be a $d$-equidimensional
algebraic set, and suppose that $G = \{g_1, \dots, g_s\}$ generates $\I(W)$. Let
$\phi: \compfield^n \to \compfield^m$ be a polynomial map. Then}
\[\crit(\phi, W) = \left\{\bmx \in W \ \middle| \
\rank\left(\Jac_{\bmx}(G)\right) = n-d, \quad \rank
\begin{bmatrix}
    \Jac_{\bmx}(G)\\
    \Jac_{\bmx}(\phi)
\end{bmatrix}
< n - d + m\right\}.\]

Throughout this paper, we chiefly focus on critical points of projection maps
onto some coordinates, which we denote as follows, for $1 \leq i \leq n$:
\[\pi_i : \compfield^n \to \compfield^i, \quad \pi_i(x_1, \dots, x_n) = (x_1,
\dots, x_i).\]

\paragraph*{Change of variables.}
For a matrix $\changeofvarmatrix \in \field^{n \times n}$ and a polynomial $g
\in \field[x_1, \dots, x_n]$, we define
\[g^\changeofvarmatrix \coloneqq g(\changeofvarmatrix \bmX) \in \field[x_1,
\dots, x_n],\]
where $\bmX$ denotes the column vector $(x_1, \dots, x_n)^{\T}$, where the $\T$
superscript denotes transpose. Similarly, if $G = \{g_1, \dots, g_s\}$, we
define $G^\changeofvarmatrix \coloneqq \big(g_1^\changeofvarmatrix, \dots,
g_s^\changeofvarmatrix\big)$. Finally, for an algebraic set $W \subset
\overline{\field}^n$ and $\changeofvarmatrix$ invertible, we define
\[W^\changeofvarmatrix \coloneqq \left\{\changeofvarmatrix^{-1}\bmx \ \middle| \
\bmx \in W \right\}.\]
Note that this notation is consistent, in the sense that
$\V(G^\changeofvarmatrix) = \V(G)^\changeofvarmatrix$.

Part of the randomness of the algorithm comes from choosing a ``suitable''
$\changeofvarmatrix$ to change variables before performing computations. We show
that part of this ``suitability'' arises from the very structure of
$\changeofvarmatrix$, and as such, we introduce notation to denote the parts of
$\changeofvarmatrix$ of interest. We denote by $a_{i,j}$ the entries of
$\changeofvarmatrix$. For $0 \leq k \leq n$, we define the matrices $\upMat_k$
and $\lowMat_k$ as the right blocks of $\changeofvarmatrix$ of sizes $k \times
(n-k)$ and $(n-k) \times (n-k)$ respectively:
\[\changeofvarmatrix = \begin{bmatrix} *&\upMat_k \\ *&\lowMat_k\end{bmatrix}.\]
Finally, we denote by $\lowEnt_{i,j,k}$ the $(i,j)$-th entry of
$\lowMat_k^{-1}$, when it exists. In particular, $\lowEnt_{i,j,0}$ denotes the
$(i,j)$-th entry of $\changeofvarmatrix^{-1}$, when it exists.

We also sometimes need to consider the entries of the change-of-variables matrix
as indeterminates. When this is the case, we refer to those variables
as $\frakA$, and a polynomial in those variables with coefficients in
$\field$ as an element of $\field[\frakA]$.

\subsection{Notation and assumptions}

Throughout this paper, we denote by $f \in \ratfield[x_1, \dots, x_n]$ the input
polynomial defining the semi-algebraic set $S \coloneqq \left\{\bmx \in
\mathbb{R}^n \ | \ f(\bmx) \neq 0\right\}$. We denote by $V \coloneqq
\V(f) \subset \compfield^n$ the hypersurface defined by the vanishing of $f$.

The algorithm requires the use of an arbitrary point of $\ratfield^{n-1}$, which
is denoted $\genericcoordsigma$ in the remainder of this paper. We sometimes
need to consider the coordinates of $\genericcoordsigma$ as indeterminates. When
this is the case, we denote them $\frakS$.

There are four key assumptions that frequently appear in our statements. To
simplify notation, we define and name them here as follows:
\begin{itemize}
    \item \OGassumS: The hypersurface $V$ is smooth, and $f$ is squarefree.
    \item \OGassumHa: We have $\changeofvarmatrix \in \GL_n(\compfield)
    \setminus \calZ$, where $\calZ$ is the proper Zariski closed set of
    \cite[Theorem 1]{SS03}. This assumption is here to guarantee that
    the computed critical points are in finite number.
    \item \OGassumHb: The matrices $\lowMat_k$ are invertible for all $1\leq k
    \leq n$. This assumption is here to guarantee that the polynomials
    determining the critical points we compute are well-defined.
    \item \OGassumB: We have $\genericcoordsigma \in \compfield^{n-1} \setminus
    \calV$, where $\calV$ is the proper Zariski closed set of \cite[Theorem
    2.2]{EGS20}. This assumption is here to guarantee that the smoothness of the
    boundary of the semi-algebraic set $S$ is preserved under instantiation of
    variables to these values.
\end{itemize}

Assumption \assumS\ is the central assumption of this paper. We require $f$ to
be squarefree for the ideal generated by $f$ to be radical, in order to apply
\cite[Lemma A.2]{SS17} for critical points computations. As stated before, this
is a mild assumption, as both $f$ and its squarefree part define the same
semi-algebraic set $S$.

Assumptions \assumHa, \assumHb\ and \assumB, on the other hand, explicitly state
the sufficiently generic conditions that $\changeofvarmatrix$ and
$\genericcoordsigma$ need to satisfy in order to be considered ``suitable'' for
our algorithm. We show that these conditions suffice for the smooth-case
algorithm in \Cref{sec:correct}, as well as analyse the probability that
randomly chosen $\genericcoordsigma$ and $\changeofvarmatrix$ with bounded
entries satisfy those conditions, in \Cref{sec:proba}.

\section{Algorithm}\label{sec:algo}

We start with a brief overview of the main algorithm. It takes as inputs a
straight-line program $\GamSLP$ evaluating $f \in \ratfield[x_1, \dots, x_n]$
satisfying \assumS, and a real number $\epsproba$ between $0$ and $1$, which
allows to control the probability of success of the algorithm. The parameter
$\epsproba$ comes into play when choosing random integers for the change of
coordinates and the specialisation values: they are chosen uniformly at random
within a set whose size is a multiple of $\epsproba^{-1}$. Therefore, the
smaller $\epsproba$ is, the larger the bitsize of the quantities involved, but
the lower the chance of failure.
The algorithm is based on four key steps:
\begin{enumerate}
    \item\label{step:1} Project $V \cap \realfield^n$ on a randomly chosen line,
    such that the critical points of this projection are finite, and such that
    the image of $V \cap \realfield^n$ under this projection is closed.
    \begin{figure}[H]
    \centering
    \includegraphics{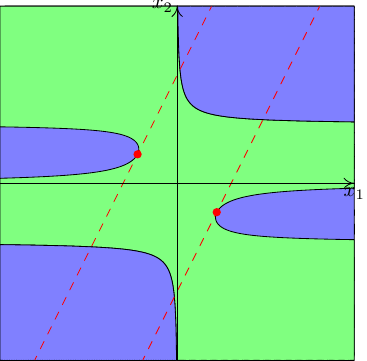}
    \caption{Illustration of the first step of algorithm applied to $f = 
    4x_1(x_2^3-x_2)-1$}
    \label{fig:step1}
    \end{figure}
    \item\label{step:2} Compute a rational parametrisation of these critical
    points, and obtain ``sufficiently precise'' isolation intervals for these
    points, in a transverse direction to $V \cap \realfield^n$.
    \begin{figure}[H]
    \centering
    \includegraphics{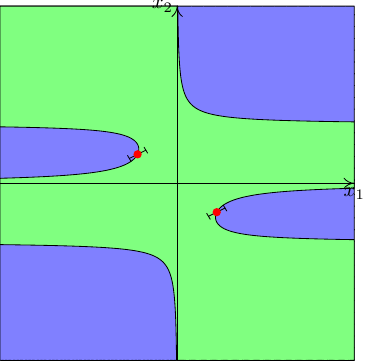}
    \caption{Illustration of the second step of algorithm applied to $f = 
    4x_1(x_2^3-x_2)-1$}
    \label{fig:step2}
    \end{figure}
    \item\label{step:3} For each endpoint of these intervals, compute a
    sufficiently small $n$-dimensional box around them, such that $f$ never
    vanishes at any point in the box. Pick one rational point in each such box.
    \begin{figure}[H]
    \centering
    \includegraphics{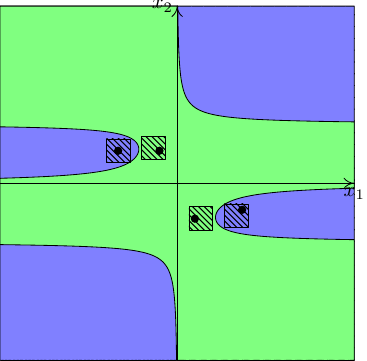}
    \caption{Illustration of the third step of algorithm applied to $f = 
    4x_1(x_2^3-x_2)-1$ (Exaggerated box size for visual clarity)}
    \label{fig:step3}
    \end{figure}
    \item\label{step:4} Instantiate $x_1$ to a randomly chosen value, say
    $\sigma_1$, in $f$ and perform the above steps with the newly obtained
    polynomial as
    input.
\end{enumerate}

We now introduce the subroutines we use in the main algorithm.

\noindent 
\(\bullet\) \ttSLPPolar\ takes as inputs the straight-line program $\GamSLP$
evaluating $f$, a matrix $\changeofvarmatrix\in\GL_n(\ratfield)$ satisfying
\assumHb, an integer $1\leq k \leq n$ and the specialisation point
$\genericcoordsigma \in \ratfield^{n-1}$. It outputs a straight-line program
evaluating the polynomials $\calP_k$, defined as
\begin{align*}
&\calP_k \coloneqq \Bigg(\sum_{i=1}^n \lowEnt_{1,i,0}x_i - \sigma_1, \dots, 
\sum_{i=1}^n 
\lowEnt_{k-1,i,0}x_i-\sigma_{k-1}, f, \\
&\indent \indent \frac{\partial f}{\partial x_{k+1}} + 
\sum_{i=1}^k \bigg(\sum_{j=1}^{n-k}a_{i,(j+k)}\lowEnt_{j,1,k}\bigg)
\frac{\partial f}{\partial x_i},\dots,\frac{\partial f}{\partial x_n} + 
\sum_{i=1}^k \bigg(\sum_{j=1}^{n-k}a_{i,(j+k)}\lowEnt_{j,n-k,k}\bigg)
\frac{\partial f}{\partial x_i}\Bigg)
\end{align*}
where $a_{i,j}$ denotes the entries of $\changeofvarmatrix$ and
$\lowEnt_{i,j,k}$ of $\lowMat_k^{-1}$.
%\cite[Theorem 1]{BS83} 

\noindent
\(\bullet\) \ttRatParam\ takes as inputs a straight-line program evaluating a
zero-dimensional polynomial system and $k\in \Nintegers \setminus \{0\}$. When
$k=1$, it outputs a zero-dimensional rational parametrisation of its regular
solutions with probability of success greater than or equal to $1 - (11/32)$.
Otherwise, it either outputs a parametrisation of a proper subset of the
solutions, or \texttt{fail}. When $k>1$, it repeats the procedure $k$ times, and
hence has a probability of success greater than or equal to $1 - (11/32)^k$,
since bad outputs can be ruled out by their degree. Such a subroutine can be
found in \cite[Algorithm 2]{SS18}.

We now state our first subroutine, \nameref{subr:critpts}, which returns a
rational para\-metrisation of the critical points of the projection.

\begin{subroutine}[H]
\subrCaptionLabel{CriticalPoints}{\GamSLP, \changeofvarmatrix, k,
\genericcoordsigma, r}{critpts}
\begin{algorithmic}[1]
    \Require $\GamSLP$, $\changeofvarmatrix$, $k$, $\genericcoordsigma$, $r$,
    where $\GamSLP$ is a straight-line program evaluating $f \in \ratfield[x_1,
    \dots, x_n]$, $\changeofvarmatrix \in \ratfield^{n \times n}$, $1 \leq k
    \leq n$, $\genericcoordsigma \in \ratfield^{n-1}$ and $r \in
    \Nintegers \setminus \{0\}$.
    \Assume $f$, $\changeofvarmatrix$ and $\genericcoordsigma$ satisfy \assumS,
    \assumHa, \assumHb\ and \assumB.
    \Ensure A zero-dimensional rational parametrisation $\scrQ = (w, \bmv,\nu)$.
    \State\label{lin:subr1:SLP} $\LamSLP \gets$ \ttSLPPolar$(\GamSLP,
    \changeofvarmatrix, k,\genericcoordsigma)$;
    \State\label{lin:subr1:param} $\scrQ \gets$ \ttRatParam$\left(\LamSLP,
    r\right)$;
    \State\label{lin:subr1:out} \Return $\scrQ$
\end{algorithmic}
\end{subroutine}

\noindent
\(\bullet\) \ttExtensionRealRoots\ takes as inputs a univariate polynomial $p
\in \ratfield[\varT]$ and a bivariate polynomial $q \in \ratfield[\varT,\varU]$.
Denoting by $t_1 < \dots < t_j$ the real algebraic roots of $p$ and by
$u_{i,1} < \dots < u_{i,\ell_i}$ the real roots of $q(t_i,\varU)$, and assuming
$q(t_i,\varU)$ is squarefree for all $i$, it outputs $j$ lists consisting, for
any $1 \leq i \leq j$, of $\ell_i$ disjoint intervals with rational endpoints
$[q_{i,k}^-, q_{i,k}^+]$ such that $u_{i,k} \in [q_{i,k}^-, q_{i,k}^+]$.
% \cite[Theorem 16]{STRZEBONSKI19}

We now state our second subroutine, \nameref{subr:isol}, which computes two
rational para\-metrisations of the endpoints of the isolation intervals of the
critical points in a direction transverse to $V \cap \realfield^n$. These
endpoints are the sample points of the semi-algebraic set $S$.

\begin{subroutine}[H]
\subrCaptionLabel{Isolation}{\GamSLP, \bma, \scrQ}{isol}
\begin{algorithmic}[1]
    \Require $\GamSLP$, $\bma$, $\scrQ$, where $\GamSLP$ is a straight-line
    program evaluating $f \in \ratfield[x_1, \dots, x_n]$, $\bma \in
    \ratfield^{n}$, and $\scrQ = (w, \bmv, \nu)$ is a zero-dimensional rational
    parametrisation, where $\bmv = (v_1, \dots, v_n)$.
    \Assume $f$ satisfies \assumS, $w \not\equiv 1$, and for any $\bxi$
    parametrised by $\scrQ$, $\bxi \in V$ and the line parametrised by $\bxi +
    \bma\varU$ does not intersect $V$ tangentially at $\bxi$.
    \Ensure Two zero-dimensional rational parametrisations $\scrP^-$ and 
    $\scrP^+$.
    \State\label{lin:subr2:degree} $d \gets \deg(f)$;
    \State\label{lin:subr2:interVl} $\scrI \gets$ \ttExtensionRealRoots$\left(
    w(\varT), \left(w'(\varT)\right)^df\left(\frac{1}{w'(\varT)}\bmv(\varT) +
    \varU\bma\right)\right)$;
    \State\label{lin:subr2:lambda} $\lambda \gets \min\left\{1, \min\{|e|: e \in
    \{\text{endpoints of each interval in }\scrI\} \wedge e \neq 0\}\right\}/2$;
    \State\label{lin:subr2:P-} $\scrP^- \gets (w, \bmv -\lambda w'\bma,\nu)$;
    \State\label{lin:subr2:P+} $\scrP^+ \gets (w, \bmv +\lambda w'\bma,\nu)$;
    \State\label{lin:subr2:out} \Return $\scrP^-, \scrP^+$
\end{algorithmic}
\end{subroutine}

\noindent
\(\bullet\) \ttParamMinValue\ \& \ttParamMaxValue\ both take as inputs a
zero-dimensional rational parametrisation $\scrQ$ and a polynomial $f$ with
rational coefficients. They respectively output a positive lower and upper bound
on the non-zero absolute values the polynomial $f$ takes at any real solution to
the parametrisation $\scrQ$.
% Is not immediately present in literature so needs to be explicited.

\noindent
\(\bullet\) \ttApproximationList\ takes as inputs a zero-dimensional rational
parametrisation $\scrQ$ and an integer $r$, and outputs rational approximations
of all real solutions to $\scrQ$, whose coordinates are accurate to precision
$2^{-r}$. 
%~\cite[Theorem 47]{melczersalvy}

We now state our third and last subroutine, \nameref{subr:approx}, which
computes rational approximations of the previously computed sample points, which
are precise enough to lie in the same connected components as their exact
algebraic counterpart.

\begin{subroutine}[H]
\subrCaptionLabel{Approximation}{\GamSLP, \scrQ}{approx}
\begin{algorithmic}[1]
    \Require $\GamSLP$, $\scrQ$, where $\GamSLP$ is a straight-line program
    evaluating $f \in \ratfield[x_1, \dots, x_n]$, and $\scrQ = (w, \bmv, \nu)$
    is a zero-dimensional rational parametrisation, where $\bmv = (v_1, \dots,
    v_n)$.
    \Assume $f$ is non-zero on every point parametrised by $\scrQ$.
    \Ensure A finite subset $\scrA$ of $\ratfield^n$.
    \State\label{lin:subr3:degree} $d \gets \deg(f)$;
    \State\label{lin:subr3:maxcoef} $\scrc \gets \norm{f}_\infty$;
    \State\label{lin:subr3:minval} $\scrm \gets$ \ttParamMinValue
    $\left(\scrQ,f\right)$;
    \State\label{lin:subr3:maxval} $\scrM \gets \max_{1 \leq i \leq n}$
    \ttParamMaxValue$\left(\scrQ, x_i\right) + 1$;
    \State\label{lin:subr3:dist} $\scrd \gets \min\left\{1,\scrm \left(
    \binom{n+d}{d}^2\scrM^{d}\scrc \right)^{-1}\right\}$;
    \State\label{lin:subr3:approx} $\scrA \gets$ \ttApproximationList$\left(
    \scrQ, \left\lceil-\log(\scrd)\right\rceil\right)$;
    \State\label{lin:subr3:out} \Return $\scrA$
\end{algorithmic}
\end{subroutine}

\noindent We can now describe the main algorithm, \Cref{algo:smooth}. The
algorithm essentially reduces to three previously introduced subroutines.

\begin{algorithm}[H]
\algoCaptionLabel{Point\-Per\-Connected\-Comp\-onent\-Smooth}{\GamSLP,\epsproba}{smooth}
\begin{algorithmic}[1]
    \Require $\GamSLP, \epsproba$ where $\GamSLP$ is a straight-line program
    evaluating $f \in \ratfield[x_1, \dots, x_n]$, and $0 < \epsproba < 1$.
    \Assume $f$ satisfies \assumS.
    \Ensure A finite subset of $\ratfield^n$.
    \State\label{lin:gen:degree} $d \gets \deg(f)$;
    \State\label{lin:gen:A} Build $\setD \coloneqq \left\{1, \dots, \left\lceil
    3\epsproba^{-1}\left(5n^3(2d)^{2n}+\frac{n^2-n}{2}\right)\right\rceil
    \right\}$ and draw $\changeofvarmatrix \in \setD^{n \times n}$ at random;
    \State\label{lin:gen:sigma} Build $\setE \coloneqq\left\{1, \dots, \left
    \lceil 3\epsproba^{-1}nd^{2n}\right\rceil \right\}$ and draw
    $\genericcoordsigma \in \setE^{n-1}$ at random;
    \For{$k \in \{1,\dots,n\}$}\label{lin:gen:mainfor}
    \State\label{lin:gen:crit} $\scrQ_k =(w_k, \bmv_k, \nu_k) \gets$
    \procCrit$\left(\GamSLP, \changeofvarmatrix, k, \genericcoordsigma,
    \left\lceil\log(3n\epsproba^{-1})\right\rceil\right)$;
    \If{$w_k \not\equiv 1$}\label{lin:gen:iftrivial}
    \State\label{lin:gen:Pparam} $\scrP_k^-, \scrP_k^+ \gets$
    \procIsol$(\GamSLP, (a_{1,k}, \dots, a_{n,k}), \scrQ_k)$;
    \State\label{lin:gen:approx-} $\scrA_k^- \gets$ \procApprox
    $(\GamSLP,\scrP_k^-)$;
    \State\label{lin:gen:approx+} $\scrA_k^+ \gets$ \procApprox
    $(\GamSLP,\scrP_k^+)$;
    \State\label{lin:gen:Ak} $\scrA_k \gets \scrA_k^- \cup \scrA_k^+$;
    \Else \State\label{lin:gen:empty} $\scrA_k \gets \emptyset$;

    \EndIf

    \EndFor
    
    \State\label{lin:gen:output} \Return $\bigcup_{k=1}^n \scrA_k \cup
    \{\changeofvarmatrix(\sigma_1,\dots,\sigma_{n-1},0)^{\T}\}$
\end{algorithmic}
\end{algorithm}

In the illustrated example, the first three steps correspond to \Cref{%
lin:gen:crit,lin:gen:iftrivial,lin:gen:Pparam,lin:gen:approx-,lin:gen:approx+},
and the last step corresponds to the \texttt{for} loop of
\Cref{lin:gen:mainfor}.

\section{Correctness}\label{sec:correct}

\subsection{Correctness of \texorpdfstring{\nameref{subr:critpts}}{}}
\label{subsec:critpts}

In this subsection, we show that, under assumptions \assumS, \assumHa, \assumHb\
and \assumB, \nameref{subr:critpts} correctly outputs a zero-dimensional
rational parametrisation $\scrQ$ which encodes the sought-for critical points.

We begin by some intuitive geometric results on the boundary of a connected
component of the semi-algebraic set $S$.

\begin{lemma}\label{lem:connectedboundary}
    For any non-isolated point $\bmy \in V \cap \realfield^n$, there exists a
    neighbourhood $\scrN_1$ of $\bmy$ such that $\scrN_1 \cap V \cap
    \realfield^n$ is connected.
\end{lemma}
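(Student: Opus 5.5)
The plan is to use assumption \assumS\ (smoothness of $V$ and squarefreeness of $f$) to get a non-vanishing real gradient at $\bmy$. The real implicit function theorem then shows that, locally, $V \cap \realfield^n$ is the graph of a continuous function over a ball, hence connected. In fact the non-isolatedness hypothesis will not really be needed; it will come out as a consequence.

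\textbf{Step 1: the gradient is nonzero.} Since $f$ is squarefree, Hilbert's Nullstellensatz gives $\I(V) = \langle f \rangle$ over $\compfield$, and $V$ is $(n-1)$-equidimensional as a hypersurface. Hence for $\bmy \in V$ the Zariski tangent space $\T_{\bmy}V$ is the kernel of $\grad(f)(\bmy)$. Smoothness of $V$ means $\dim \T_{\bmy}V = n-1$, which forces $\grad(f)(\bmy) \neq \bmzero$. Because $f$ has rational coefficients and $\bmy \in \realfield^n$, this gradient is a nonzero real vector. So there is an index $i$ with $\frac{\partial f}{\partial x_i}(\bmy) \neq 0$.

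\textbf{Step 2: local graph.} Write $\bmy = (y_i, \bmy')$, where $\bmy' \in \realfield^{n-1}$ collects the other coordinates. Apply the real implicit function theorem to the smooth map $f:\realfield^n \to \realfield$ at $\bmy$. It gives an open ball $U \subset \realfield^{n-1}$ centred at $\bmy'$, some $\varepsilon > 0$, and a continuous (indeed smooth) function $g : U \to (y_i - \varepsilon, y_i + \varepsilon)$ with the following property. Setting $\scrN_1 \coloneqq \{\bmx : \bmx' \in U,\ |x_i - y_i| < \varepsilon\}$, we have $\bmx \in \scrN_1 \cap V \cap \realfield^n$ if and only if $\bmx' \in U$ and $x_i = g(\bmx')$.

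\textbf{Step 3: connectedness.} The set $\scrN_1 \cap V \cap \realfield^n$ is the image of the connected set $U$ (a ball is convex) under the continuous map $\bmx' \mapsto (g(\bmx'), \bmx')$, with coordinates reordered. It is therefore connected, indeed path-connected. The same description shows that $\bmy$ is never isolated in $V \cap \realfield^n$ when $n \geq 2$, so the hypothesis is automatically satisfied.

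The only point needing care is Step 1: we must pass from the Zariski-tangent-space definition of smoothness over $\compfield$ to the nonvanishing of the real gradient. This uses radicality of $\langle f \rangle$, which is exactly why squarefreeness is part of \assumS. Everything after that is the standard implicit function theorem.
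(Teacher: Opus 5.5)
Your argument proves the lemma only at smooth points, and the lemma does not assume smoothness. Step~1 imports \assumS, which is not a hypothesis of this lemma; compare \Cref{lem:connectedsemialg}, which states it explicitly. At a singular real point of $V$ your argument has no starting point. Take $f = x_1x_2$, which is squarefree. The origin is a non-isolated point of $V\cap\realfield^2$ and $\grad(f)$ vanishes there. Near the origin, $V\cap\realfield^2$ is a cross, which is not the graph of a function of the remaining coordinate in either direction. So the implicit function theorem is unavailable, yet the conclusion of the lemma still holds at that point.

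The non-isolatedness hypothesis is itself a hint. As you observed, it is vacuous under \assumS\ when $n\geq 2$. That should suggest the lemma is aimed at possibly singular hypersurfaces, not that the hypothesis is superfluous.

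The missing ingredient is the one the paper uses: the local conic structure theorem for semi-algebraic sets. It gives $e>0$ and a semi-algebraic homeomorphism of $\overline{B}_n(\bmy,e)$ onto itself. This homeomorphism carries the cone with vertex $\bmy$ over $V\cap\realfield^n\cap S^{n-1}(\bmy,e)$ onto $V\cap\realfield^n\cap\overline{B}_n(\bmy,e)$. A cone is connected, since every point is joined to the vertex by a segment. Hence $\scrN_1 \coloneqq \overline{B}_n(\bmy,e)$ works at every point, smooth or not, with no condition on the gradient.

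Your implicit-function argument is correct, and more elementary, at points where $\grad(f)(\bmy)\neq 0$. Every later use of this lemma (in \Cref{lem:connectedsemialg,lem:boundary}) is under \assumS, so your version would suffice for the paper's purposes. As a proof of the lemma as stated, however, it covers only the smooth case.
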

\begin{proof}
    Let us fix a non-isolated $\bmy \in V \cap \realfield^n$. Then, by the local
    conic structure theorem \cite[Theorem 9.3.6]{BCR}, there exists $e \in
    \realfield$, $e > 0$, and a semi-algebraic homeomorphism $\phi$ such that:
    \begin{itemize}
        \item $\phi: \overline{B}_n(\bmy,e) \to \overline{B}_n(\bmy,e)$, where
        $\overline{B}_n(\bmy,e)$ denotes the closed $n$-dimensional ball of
        radius $e$ centred at $\bmy$,
        \item $\norm{\phi(\bmx) - \bmy} = \norm{\bmx - \bmy}$ for every $\bmx
        \in \overline{B}_n(\bmy,e)$,
        \item $\phi$ restricted to $S^{n-1}(\bmy,e)$ is the identity map, where
        $S^{n-1}(\bmy,e)$ denotes the boundary of $\overline{B}_n(\bmy,e)$,
        \item and $\phi^{-1}(V \cap \realfield^n \cap \overline{B}_n(\bmy,e))$
        is the cone with vertex $\bmy$ and basis $V \cap \realfield^n \cap
        S^n(\bmy,e)$.
    \end{itemize}
    In particular, since $\phi^{-1}(V \cap \realfield^n \cap
    \overline{B}_n(\bmy,e))$ is a cone, it is connected. Moreover, since
    homeomorphisms preserve connectivity, it follows that $\phi(\phi^{-1}(V \cap
    \realfield^n \cap \overline{B}_n(\bmy,e))) = V \cap \realfield^n \cap
    \overline{B}_n(\bmy,e)$ is also connected. Letting $\scrN_1 \coloneqq
    \overline{B}_n(\bmy,e)$ finishes the proof.
\end{proof}

\begin{lemma}\label{lem:connectedsemialg}
    Suppose that $f$ satisfies \emph{\assumS}, and let $S^+ \coloneqq
    \left\{\bmx \in \realfield^n \ | \ f(\bmx) > 0\right\}$. Then, for any $\bmy
    = (y_1, \dots, y_n) \in V \cap \realfield^n$, there exists a neighbourhood
    $\scrN_2$ of $\bmy$ such that $\scrN_2 \cap S^+$ is connected.
\end{lemma}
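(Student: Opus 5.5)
Since $f$ satisfies \assumS, the hypersurface $V$ is smooth, so at every $\bmy \in V$ the gradient $\grad(f)(\bmy)$ is non-zero. The plan is to use the implicit function theorem near $\bmy$ to straighten $V \cap \realfield^n$ into a graph, and then check that $S^+$ is locally one side of that graph.

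First we need the gradient to be non-zero at a real point, so that we can work over $\realfield$. Since $f$ is squarefree, $\I(V) = \langle f \rangle$ (Nullstellensatz), so $\T_{\bmy}V$ is the kernel of $\grad(f)(\bmy)$. Smoothness of the $(n-1)$-equidimensional set $V$ then forces $\grad(f)(\bmy) \neq \bmzero$. Because $f$ has rational coefficients and $\bmy$ is real, this gradient is a non-zero real vector. After relabelling the coordinates, we may assume $\frac{\partial f}{\partial x_n}(\bmy) > 0$; if it is negative, replace $x_n$ by $-x_n$, which does not affect connectedness.

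Next, by continuity there is an open box $\scrN_2 = U \times (y_n - \eta, y_n + \eta)$, with $U$ an open ball around $(y_1,\dots,y_{n-1})$, on which $\frac{\partial f}{\partial x_n} > 0$. The real implicit function theorem, after shrinking $U$, gives a continuous function $g : U \to (y_n-\eta, y_n+\eta)$ such that, inside $\scrN_2$,
\[ f(\bmx', x_n) = 0 \iff x_n = g(\bmx'), \qquad \bmx' \in U. \]
Strict monotonicity in $x_n$ gives more than this: for each fixed $\bmx' \in U$, the map $x_n \mapsto f(\bmx', x_n)$ is strictly increasing on $(y_n-\eta, y_n+\eta)$ and vanishes exactly at $g(\bmx')$. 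Hence
\[ \scrN_2 \cap S^+ = \{(\bmx', x_n) : \bmx' \in U,\ g(\bmx') < x_n < y_n + \eta\}. \]

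This set is path-connected. Take two points $(\bmx'_0, s_0)$ and $(\bmx'_1, s_1)$ in it. Pick $c$ with $c > \sup g(\overline{U})$ and $c < y_n + \eta$. Move each point vertically to height $c$; every intermediate point stays in the set because $s$ only increases past $g(\bmx')$, or decreases while staying above it. Then join $(\bmx'_0, c)$ to $(\bmx'_1, c)$ along the horizontal segment, which lies in $U$ by convexity of the ball.

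The main obstacle is guaranteeing that such a $c$ exists. I would handle it by first choosing $\eta$, then shrinking $U$ so that $|g - y_n| < \eta/2$ on $\overline{U}$, using continuity of $g$ and $g(\bmy') = y_n$. Then $c = y_n + \eta/2$ lies strictly between $\sup g$ and $y_n + \eta$. With this choice, $\scrN_2$ is the required neighbourhood.
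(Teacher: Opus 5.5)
Your proof is correct, and it takes a different route from the paper's.

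Both arguments begin with the implicit function theorem. The paper then invokes \Cref{lem:connectedboundary} (that is, the local conic structure theorem) to obtain a neighbourhood $\scrN_1$ in which $V\cap\realfield^n$ is connected. It joins two points of $A^+\cap\scrN_1$ by lifting, with a vertical offset, a path between their projections onto the graph of $\phi$. You instead normalise the sign of $\partial f/\partial x_n$ on a box. Then $f$ is strictly increasing along vertical segments, and $\scrN_2\cap S^+$ is \emph{exactly} the region strictly above the graph of $g$. Connectivity of the zero set near $\bmy$ becomes automatic, since it is a graph over a ball, and your vertical--horizontal--vertical path at height $c$ replaces the path lifting.

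Your version is self-contained and avoids the conic structure theorem. It also makes explicit two points the paper's argument leaves implicit:
\begin{itemize}
\item that $S^+$ lies on the side $x_n>\phi$, which needs exactly the sign normalisation you perform;
\item that the connecting path stays in the region where the graph description is valid.
\end{itemize}
In the paper, the path in $V\cap\realfield^n\cap\scrN_1$ need not stay in $U_1\times U_2$. Moreover, the offset $\alpha_n t+\beta_n(1-t)$, as written, neither yields the endpoints $\balpha,\betta$ nor is obviously positive. The paper's route mainly has the advantage of reusing \Cref{lem:connectedboundary}, which it needs anyway for \Cref{lem:boundary}.

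One cosmetic point in your argument: to write $\sup g(\overline{U})$, take the final ball with its closure inside the domain of $g$. Alternatively, the pointwise bound $g<y_n+\eta/2$ on $U$ already suffices for $c=y_n+\eta/2$.
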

\begin{proof}
    Since $f$ satisfies \assumS, the gradient of $f$ at $\bmy$ is non-zero, and
    hence there exists $1 \leq i \leq n$ such that its $i$-th coordinate is
    non-zero. Without loss of generality, we assume for simplicity that $i=n$.

    Then, by the semi-algebraic Implicit Function Theorem \cite[Corollary
    2.9.8]{BCR}, there exist open semi-algebraic neighbourhoods $U_1 \subseteq
    \realfield^{n-1}$ and $U_2 \subseteq \realfield$ of $(y_1, \dots, y_{n-1})$
    and $y_n$ respectively, and a map $\phi: U_1 \to U_2$ such that $\phi(y_1,
    \dots, y_{n-1}) = y_n$ and $f(\bmx) = 0 \iff x_n = \phi(x_1, \dots,
    x_{n-1})$ for every $\bmx \in U_1 \times U_2$. Let us now define
    \[A^+ \coloneqq \left\{\bmx \in U_1 \times U_2 \ | \ x_n > \phi(x_1, \dots, 
    x_{n-1})\right\}.\]
    Let $\scrN_1$ be the neighbourhood of \Cref{lem:connectedboundary}, and let
    $\balpha,\betta \in A^+ \cap \scrN_1$ be distinct. If $\balpha$ and $\betta$
    can be path connected, then $A^+ \cap \scrN_1$ is connected (since we are
    working in $\realfield^n$), which by the definition of $\phi$ implies that
    $(U_1 \times U_2) \cap \scrN_1 \cap S^+$ is connected, and then defining
    $\scrN_2 \coloneqq (U_1 \times U_2) \cap \scrN_1$ finishes the proof. It
    therefore suffices to show that $\balpha$ and $\betta$ are path connected in
    $A^+ \cap \scrN_1$.

    By definition of $\scrN_1$ and by \Cref{lem:connectedboundary}, $V \cap
    \realfield^n \cap \scrN_1$ is connected. Hence, there exists a path
    $(\gamma_1(t), \dots, \gamma_n(t))$, for $t \in [0,1]$, from $(\alpha_1,
    \dots, \alpha_{n-1}, \phi(\alpha_1, \dots, \alpha_{n-1}))$ to $(\beta_1,
    \dots, \beta_{n-1}, \phi(\beta_1, \dots, \beta_{n-1}))$. We then consider
    the path $(\gamma_1(t), \dots, \gamma_{n-1}(t), \gamma_n(t)+\alpha_n t +
    \beta_n(1-t))$. It is a path from $\balpha$ to $\betta$, which lies in $A^+
    \cap \scrN_1$, since $\gamma_n(t)+\alpha_n t + \beta_n(1-t) >
    \phi(\gamma_1(t), \dots, \gamma_{n-1}(t))$. Hence, $\balpha$ and $\betta$
    are path connected in $A^+ \cap \scrN_1$, as required.
\end{proof}

\begin{lemma}\label{lem:boundary} Suppose that $f$ satisfies \emph{\assumS} and
let $C$ be a connected component of $S$. Then, each connected component of the
boundary $\overline{C} \setminus C$ of $C$ is a connected component of the
real algebraic set $V \cap \realfield^n$.
\end{lemma}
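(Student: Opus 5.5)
The approach is to show that $\partial C \coloneqq \overline{C} \setminus C$ is a union of connected components of $V \cap \realfield^n$. Since the connected components of $\partial C$ are then unions of such components which are pairwise disjoint, closed and separated, each one is itself a single component.

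First, $C$ is open in $\realfield^n$, since $S$ is open and locally connected. Hence $\partial C \cap S = \emptyset$: a point of $S$ in $\overline{C}$ lies in the component of $S$ containing it, which is open and meets $C$, so it equals $C$. Therefore $\partial C \subseteq V \cap \realfield^n$. Also, $f$ has constant sign on $C$; say $C \subseteq S^+$, the other case being symmetric with $-f$.

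The key step is that $\partial C$ is open and closed in $V \cap \realfield^n$. It is closed because $\overline{C}$ is closed. For openness, fix $\bmy \in \partial C$ and take the neighbourhood $\scrN_2$ of \Cref{lem:connectedsemialg}, shrunk if needed to the implicit-function box $U_1 \times U_2$ from its proof. Since $\bmy \in \overline{C}$, the set $\scrN_2$ meets $C$. It meets it in $S^+$, because near $\bmy$ we have $C \subseteq S^+$, and $\scrN_2 \cap S^+$ is connected. As $\scrN_2 \cap S^+ \subseteq S$ is connected and meets $C$, we get $\scrN_2 \cap S^+ \subseteq C$.

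Next, every point $\bmz \in \scrN_2 \cap V \cap \realfield^n$ is a limit of points of $\scrN_2 \cap S^+$. In the implicit-function chart $\bmz = (\bmz', \phi(\bmz'))$, and the points $(\bmz', \phi(\bmz') + s)$ with small $s > 0$ lie in $A^+$. Here one uses \assumS\ to say that $f$ changes sign across $V$, so $A^+$ is exactly one side. If needed, choose the orientation so that $A^+ \subseteq S^+$; because $\partial_n f(\bmy) \neq 0$ the sign of $f$ on each side is fixed. Hence $\scrN_2 \cap V \cap \realfield^n \subseteq \overline{C} \setminus C = \partial C$, and openness follows.

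Finally, a nonempty open-and-closed subset of $V \cap \realfield^n$ is a union of its connected components. If $K$ is a connected component of $\partial C$ and $K'$ is the component of $V \cap \realfield^n$ containing it, then $K' \subseteq \partial C$ because $\partial C$ is clopen and $K'$ is connected. By maximality $K = K'$.

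The main obstacle is the side-matching in the openness step. We must ensure that the side $A^+$ used in \Cref{lem:connectedsemialg} is exactly $S^+$ near $\bmy$, that the neighbourhood can be shrunk to a box where the implicit function description holds, and that every nearby point of $V$ is approached from that side. I would handle this by replacing $\scrN_2$ with a small box $U_1' \times U_2' \subseteq U_1 \times U_2$ on which $\partial_n f$ has constant sign. There, $f$ is strictly monotone in $x_n$, so $f > 0$ exactly on one side of the graph of $\phi$. This makes both the connectedness and the approximation claims transparent.
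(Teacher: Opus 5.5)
Your proof is correct, and it follows a different route from the paper's. The paper fixes a component $B$ of $\overline{C}\setminus C$ and the component $A$ of $V\cap\realfield^n$ containing it. It joins a point of $B$ to an arbitrary point of $A$ by a path in $A$, and covers that path by finitely many neighbourhoods $\scrN_{1,t}\cap\scrN_{2,t}$ taken from the local conic structure (\Cref{lem:connectedboundary}) and from \Cref{lem:connectedsemialg}. It then propagates membership in $B$ along the resulting chain.

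You instead show that $\overline{C}\setminus C$ is relatively clopen in $V\cap\realfield^n$, hence a union of its components, and conclude by maximality. This needs only an implicit-function box around each boundary point on which $\frac{\partial f}{\partial x_n}$ has constant sign. It dispenses with the conic structure theorem, with path-connectedness of semi-algebraic components, and with the compactness-and-chaining step.

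Your route also isolates and proves the fact that does the real work in both arguments: every point of $V$ near $\bmy$ is a limit of points of $\scrN\cap S^+\subseteq C$. The paper's sentence ``it follows that any point of $\scrN_{t_i}\cap\gamma([0,1])$ lies in the same connected component of the boundary\dots'' leaves this implicit.

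Similarly, you derive $\overline{C}\setminus C\subseteq V\cap\realfield^n$ from the openness of the components of $S$. This is more careful than the paper's sign argument, which only yields $f\geq 0$ on $\overline{C}$ and tacitly assumes that $\overline{C}\setminus C$ avoids the other components of $S^+$.

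You are right not to use $\scrN_2$ verbatim. Shrinking a neighbourhood can destroy the connectedness of its intersection with $S^+$, and \Cref{lem:connectedsemialg} as stated does not tell you which side of the graph of $\phi$ is $S^+$. Building the monotone box directly, as in your last paragraph, settles both points.

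What the paper's route buys is mainly economy: it reuses the two local lemmas it needs anyway for \Cref{lem:twoconnectedcomp}.
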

\begin{proof}
Without loss of generality, suppose that $f$ is positive on $C$, that is, $C$ is
a connected component of $S^+$. Let
$\bmx \in \overline{C}$, and suppose for a contradiction that
$f(\bmx) < 0$. Then, there exists an open ball $\calB$ around $\bmx$
such that $f$ is negative on every point of that ball, by continuity of $f$.
In particular, $\calB \cap C = \emptyset$, which contradicts the fact that
$\bmx$ belongs to the Euclidean closure of $C$. Therefore, $\bmx
\in \overline{C}$ implies that $f(\bmx) \geq 0$, and hence
$$\overline{C} \setminus C \subseteq \left\{\bmx \in \realfield^n \mid f(\bmx)
\geq 0 \wedge f(\bmx) \not> 0\right\} = V \cap \realfield^n.$$ 
Hence, a connected component of $\overline{C} \setminus C$ necessarily lies in
$V \cap \realfield^n$. Let \(B\subset V\cap\realfield^n\) be a connected
component of  \(\overline{C}\setminus C \). Then, there exists a connected
component \(A\) of \(V\cap \realfield^n\) which contains \(B\), since otherwise
\(B\) itself would not be connected. To prove the claim, it thus suffices to
show that $B=A$.

Let $\bmx \in B \subset A$ and $\bmy \in A$. Since $A$ is connected, there
exists a continuous path $\gamma:[0,1] \to A$ such that $\gamma(0) = \bmx$ and
$\gamma(1) = \bmy$. For any $t \in [0,1]$, by
\Cref{lem:connectedboundary,lem:connectedsemialg}, there exists a neighbourhood
$\scrN_t \coloneqq \scrN_{1,t} \cap \scrN_{2,t}$ around $\gamma(t)$ such that
$\scrN_t \cap V \cap \realfield^n$ and $\scrN_t \cap S^+$ are connected. We now
consider the sets \(\gamma([0, 1])\cap 
\mathscr{N}_t\). These sets are open (as $\scrN_t$'s are), non-empty (as they
contain $\gamma(t)$), and cover $\gamma([0,1])$, which is compact (as $[0,1]$
is). Hence, there exists a finite covering \(\cup_{t\in \{t_0 < \cdots <
t_k\}}\mathscr{N}_t \cap \gamma([0, 1])\) of \(\gamma([0, 1])\) with  \(t_0 =
0\) and \(t_k = 1\). For $1 \leq i \leq k$, pick $\bmz_i \in
\mathscr{N}_{t_{i-1}} \cap \mathscr{N}_{t_{i}} \cap \gamma([0, 1])$, and
formally define $\bmz_0$ as $\bmx$ and $\bmz_{k+1}$ as $\bmy$. If we show that
$\bmz_i \in B$ implies $\bmz_{i+1} \in B$ for all $i$, this implies, since
$\bmz_0 = \bmx \in B$, that $\bmz_{k+1} = \bmy$ lies in $B$, which proves the
claim. It therefore suffices to show that $\bmz_i \in B \implies \bmz_{i+1} \in
B$.

Let us suppose that $\bmz_i \in B$ for some $i$. Then, as both $\scrN_{t_i} \cap
V \cap \realfield^n$ and $\scrN_{t_i} \cap S^+$ are connected, it follows that
any point of $\scrN_{t_i} \cap \gamma([0,1])$ lies in the same connected
component of the boundary of the same connected component of $S^+$ as $\bmz_i$.
In particular, as $\bmz_i \in B$ by assumption, and as $\bmz_{i+1} \in
\scrN_{t_i}$ by construction, it follows that $\bmz_{i+1} \in B$, as required.
\end{proof}

We recall here some of the results from \cite{SS03}:
\cite[Theorem 1 and Proposition 4]{SS03} \textit{There exists a proper Zariski
closed set $\mathcal{Z}$ of $\GL_n(\compfield)$ such that, for any
$\changeofvarmatrix \in \GL_n(\compfield)\setminus\mathcal{Z}$ and $1 \leq k
\leq n$, $V^\changeofvarmatrix$ is in Noether position with respect to $\pi_k$
and, for any connected component \(C\) of  \(V^\changeofvarmatrix\cap
\realfield^n\), the boundary of $\pi_k(C)$ is contained in
$\pi_k(\critpoints(\pi_k, V^\changeofvarmatrix) \cap C)$.}

\begin{lemma}\label{lem:projclosed} 
    Suppose that $f$ satisfies \emph{\assumS}. Let $B$ be a connected component
    of $V \cap \realfield^n$, and let $\calZ \subset \compfield^{n \times n}$ be
    the proper Zariski closed set of \emph{\cite[Theorem 1]{SS03}}. Then, for
    any $\changeofvarmatrix \in \GL_n(\compfield)\setminus \mathcal{Z}$ and for
    any $1 \leq k \leq n$, $\pi_k(B^\changeofvarmatrix)$ is closed for the
    Euclidean topology, where $B^\changeofvarmatrix$ denotes the
    connected component of $V^\changeofvarmatrix \cap \realfield^n$
    corresponding to $B$ after application of the change of variables
    $\changeofvarmatrix$.
\end{lemma}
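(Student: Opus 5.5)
We use the Noether position property from \cite[Theorem 1 and Proposition 4]{SS03}: for $\changeofvarmatrix \in \GL_n(\compfield)\setminus\calZ$ and each $1 \leq k \leq n$, $V^\changeofvarmatrix$ is in Noether position with respect to $\pi_k$. Since $\changeofvarmatrix$ is invertible, $B^\changeofvarmatrix$ is a connected component of $V^\changeofvarmatrix\cap\realfield^n$. A connected component of a real algebraic set is closed, because components of a locally connected space are closed. So $B^\changeofvarmatrix$ is a closed subset of $\realfield^n$.

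The key fact is that Noether position makes the restriction of $\pi_k$ to $V^\changeofvarmatrix$ a finite, hence proper, map. We argue this directly. Noether position means that the coordinate ring $\compfield[x_1,\dots,x_n]/\langle f^\changeofvarmatrix\rangle$ is integral over $\compfield[x_1,\dots,x_k]$. So for each $j>k$ there is a monic polynomial
\[ x_j^{e_j} + \sum_{\ell < e_j} c_{j,\ell}(x_1,\dots,x_k)\,x_j^\ell \]
lying in $\langle f^\changeofvarmatrix\rangle$.

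For any point $\bmx \in V^\changeofvarmatrix$, $x_j$ is then a root of a monic polynomial whose coefficients are polynomials in $\pi_k(\bmx)$. Cauchy's root bound gives
\[ |x_j| \leq 1 + \max_\ell |c_{j,\ell}(\pi_k(\bmx))|. \]
If $\pi_k(\bmx)$ ranges over a bounded set $K$, the right-hand side is bounded uniformly by continuity of the $c_{j,\ell}$ on the compact set $\overline{K}$. Hence the preimage under $\pi_k$ of any bounded set, intersected with $V^\changeofvarmatrix\cap\realfield^n$, is bounded. For $k=n$ the map $\pi_n$ is the identity and there is nothing to prove.

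Closedness then follows by a sequential argument. Let $\bmy_m = \pi_k(\bmx_m)$ with $\bmx_m \in B^\changeofvarmatrix$, and suppose $\bmy_m \to \bmy$. The sequence $(\bmy_m)$ is bounded, so by the previous paragraph $(\bmx_m)$ is bounded. Extract a subsequence converging to some $\bmx$. Since $B^\changeofvarmatrix$ is closed, $\bmx \in B^\changeofvarmatrix$. By continuity of $\pi_k$, $\pi_k(\bmx)=\bmy$. Therefore $\bmy \in \pi_k(B^\changeofvarmatrix)$, which is thus Euclidean-closed.

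The main obstacle is to make sure the Noether position provided by \cite{SS03} is the integrality statement used above: the ring extension $\compfield[x_1,\dots,x_k] \to \compfield[x_1,\dots,x_n]/\langle f^\changeofvarmatrix\rangle$ is integral, and the monic relations can be taken with real (even rational) coefficients. If \cite{SS03} phrases Noether position for the map $\pi_{n-1}$ on the hypersurface, the same properness argument applies at each level $k$, because a composition of finite maps is finite. Real coefficients are not needed for the root bound, since Cauchy's bound holds over $\compfield$.
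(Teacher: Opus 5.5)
Your argument is sound for $k\in\{n-1,n\}$, but it breaks down for $1\le k\le n-2$. The paper needs that range: the correctness proof of the main algorithm applies this lemma with $k=1$.

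The failing step is the claim that $\compfield[x_1,\dots,x_n]/\langle f^\changeofvarmatrix\rangle$ is integral over $\compfield[x_1,\dots,x_k]$. Integrality would force every complex fibre of $\pi_k$ on $V^\changeofvarmatrix$ to be finite. But every non-empty fibre of $\pi_k$ on a hypersurface has dimension at least $n-1-k>0$. So whatever ``Noether position with respect to $\pi_k$'' means in the cited result, it cannot mean this when $k\le n-2$. Your fallback does not help either: $\pi_k$ on $V^\changeofvarmatrix$ is $\pi_{n-1}$ followed by the linear projection $\compfield^{n-1}\to\compfield^k$, and that projection is not finite.

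The real properness you actually use is also false. Take $f=x_1^2+x_2^2-x_3^2-1$, which satisfies \assumS. For every real $\changeofvarmatrix$ whose inverse has first row close to $(1,0,0)$, each plane $\{x_1=c\}$ cuts $V^\changeofvarmatrix\cap\realfield^3$ in a hyperbola or a pair of lines. These matrices form a non-empty Euclidean-open set, which cannot be contained in the proper Zariski closed set $\calZ$. So for some admissible $\changeofvarmatrix$ the $\pi_1$-preimage of a bounded set is unbounded, and your sequence $(\bmx_m)$ need not have a convergent subsequence.

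What is missing is that, for $k\le n-2$, you cannot take arbitrary preimages of the $\bmy_m$; they must be chosen on critical loci. The paper does this by invoking the other half of the quoted result of \cite{SS03}, which you did not use. For every connected component $C$ of $V^\changeofvarmatrix\cap\realfield^n$, the boundary of $\pi_k(C)$ lies in $\pi_k(\crit(\pi_k,V^\changeofvarmatrix)\cap C)\subseteq\pi_k(C)$. Hence $\overline{\pi_k(B^\changeofvarmatrix)}\subseteq\pi_k(B^\changeofvarmatrix)$ immediately.

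To make your properness approach self-contained instead, you would need two ingredients. First, Noether position of the polar varieties $\crit(\pi_{j+1},V^\changeofvarmatrix)$ (which are $j$-dimensional or empty for generic $\changeofvarmatrix$) with respect to $x_1,\dots,x_j$, for all $j\ge k$. Second, a descending induction on $k$, run as follows. Suppose the fibre of the closed set $\pi_{k+1}(B^\changeofvarmatrix)$ over $\bmy_m$ is not the whole line. Then any of its endpoints is a boundary point of $\pi_{k+1}(B^\changeofvarmatrix)$. By the implicit function theorem it is therefore the image of a point of $\crit(\pi_{k+1},V^\changeofvarmatrix)\cap B^\changeofvarmatrix$, a set on which $\pi_k$ is proper. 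Otherwise, replace $\bmy_m$ by $(\bmy_m,0)$ and repeat one level higher; the process ends with the properness of $\pi_{n-1}$ on $V^\changeofvarmatrix$. Since there are finitely many levels, some level occurs for infinitely many $m$, which gives bounded preimages along a subsequence.
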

\begin{proof}
Since changing variables by $\changeofvarmatrix$ preserves connectivity, we
deduce that $B^\changeofvarmatrix$ is a connected component of
$V^\changeofvarmatrix \cap \realfield^n$, and hence, for any
$\changeofvarmatrix$ and $k$ as above, by \cite[Theorem 1 and Proposition
4]{SS03}, we have
\[\overline{\pi_k(B^\changeofvarmatrix)} \setminus \pi_k(B^\changeofvarmatrix)
\subseteq \pi_k(\critpoints(\pi_k, V^\changeofvarmatrix) \cap
B^\changeofvarmatrix) \subseteq \pi_k(B^\changeofvarmatrix)\] 
and hence $\overline{\pi_k(B^\changeofvarmatrix)} \subseteq
\pi_k(B^\changeofvarmatrix)$, which implies that the image of
$B^\changeofvarmatrix$ by $\pi_k$ is closed for the Euclidean topology. 
\end{proof}

Note that, with our notation, the required condition on $\changeofvarmatrix$
for \Cref{lem:projclosed} is precisely that $\changeofvarmatrix$ satisfies
\assumHa. For the convenience of the reader, we now recall the statement of
\cite[Theorem 2.2]{EGS20}:

\noindent \cite[Theorem 2.2]{EGS20}: \emph{Suppose that $f$ satisfies
\emph{\assumS}, and let $\calZ$ be as in \emph{\Cref{lem:projclosed}}. There
exists a non-zero polynomial $g \in \compfield[\frakS]$ of degree at most
$nd^{2n}$ such that, for any $\changeofvarmatrix \in \GL_n(\compfield) \setminus
\calZ$, $1 \leq k \leq n$, and $\genericcoordsigma \in \compfield^{n-1}
\setminus \calV$, where $\calV \coloneqq \V(g)$, the system of equations
\[x_1 - \sigma_1 = \dots= x_{k-1}-\sigma_{k-1}=f^\changeofvarmatrix=
\frac{\partial f^\changeofvarmatrix}{\partial x_{k+1}}= \dots=\frac{\partial
f^\changeofvarmatrix}{\partial x_n}=0\] has finitely many solutions, none of
which are singular}.

Note again that, with our notation, the required conditions on
$\changeofvarmatrix$ and $\genericcoordsigma$ for \cite[Theorem 2.2]{EGS20} to
be applicable are precisely that $\changeofvarmatrix$  satisfies \assumHa\ and
that $\genericcoordsigma$ satisfies \assumB.
By \cite[Lemma A.2]{SS17}, we have that the solutions to the systems of
\cite[Theorem 2.2]{EGS20} are precisely the critical points of
$V^\changeofvarmatrix$ under projection by $\pi_k$, for which the first $k-1$
coordinates have been instantiated to $\sigma_1, \dots, \sigma_{k-1}$, which
under our notation is the set $\crit(\pi_k, V^\changeofvarmatrix) \cap
\pi_{k-1}^{-1}(\sigma_1, \dots, \sigma_{k-1})$. We could therefore use this
system as the starting point of \Cref{algo:smooth}; however, we instead choose
to start with the related following system:

\begin{lemma}\label{lem:critsystem}
    Suppose that $\changeofvarmatrix$ is invertible and satisfies
    \emph{\assumHb}, and let $\calP_k$ denote the polynomials obtained in the
    output of \emph{\ttSLPPolar}. Then, for any $1 \leq k \leq n$, we have
    \[\bxi \in \V\Big(x_1 - \sigma_1,\dots, x_{k-1}-\sigma_{k-1},
    f^\changeofvarmatrix, \frac{\partial f^\changeofvarmatrix}{\partial x_{k+1}}
    ,\dots,\frac{\partial f^\changeofvarmatrix}{\partial x_n}\Big) \iff
    \changeofvarmatrix\bxi \in \V\left(\calP_k\right).\]
\end{lemma}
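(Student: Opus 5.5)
We match the polynomials on the two sides one by one under the substitution $\bmy = \changeofvarmatrix\bxi$, i.e.\ $\bxi = \changeofvarmatrix^{-1}\bmy$. The main tool is the chain rule for $f^\changeofvarmatrix(\bmx) = f(\changeofvarmatrix\bmx)$, which gives
\[\frac{\partial f^\changeofvarmatrix}{\partial x_j}(\bxi) = \sum_{i=1}^n a_{i,j}\,\frac{\partial f}{\partial x_i}(\changeofvarmatrix\bxi).\]
The first $k$ equations are then immediate. For $j \le k-1$, the $j$-th coordinate of $\bxi = \changeofvarmatrix^{-1}\bmy$ is $\sum_i \lowEnt_{j,i,0}y_i$. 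Hence $\xi_j = \sigma_j$ if and only if $\sum_i \lowEnt_{j,i,0}y_i - \sigma_j = 0$, which is exactly the $j$-th polynomial of $\calP_k$ evaluated at $\bmy$. Likewise $f^\changeofvarmatrix(\bxi) = f(\bmy)$.

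The remaining step is to show that the system $\partial f^\changeofvarmatrix/\partial x_{k+1} = \dots = \partial f^\changeofvarmatrix/\partial x_n = 0$ at $\bxi$ is equivalent to the last $n-k$ polynomials of $\calP_k$ vanishing at $\bmy$. Write $\bmg = \grad f(\bmy)$, split as $\bmg = (\bmg', \bmg'')$ with $\bmg' \in \compfield^k$ and $\bmg'' \in \compfield^{n-k}$. By the chain rule, the vector $\big(\partial f^\changeofvarmatrix/\partial x_{k+1}, \dots, \partial f^\changeofvarmatrix/\partial x_n\big)(\bxi)$ equals $\upMat_k^{\T}\bmg' + \lowMat_k^{\T}\bmg''$, since the last $n-k$ columns of $\changeofvarmatrix$ are the blocks $\upMat_k$ and $\lowMat_k$. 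By \assumHb, $\lowMat_k$ is invertible, so this vector vanishes if and only if
\[\bmg'' + \lowMat_k^{-\T}\upMat_k^{\T}\bmg' = 0.\]

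It remains to identify the $\ell$-th entry of this vector, for $\ell = 1, \dots, n-k$, with the corresponding polynomial of $\calP_k$. The entry is
\[\frac{\partial f}{\partial x_{k+\ell}} + \sum_{i=1}^k \Big(\sum_{j=1}^{n-k} (\lowMat_k^{-1})_{j,\ell}\,(\upMat_k)_{i,j}\Big)\frac{\partial f}{\partial x_i}.\]
Here $(\upMat_k)_{i,j} = a_{i,j+k}$ and $(\lowMat_k^{-1})_{j,\ell} = \lowEnt_{j,\ell,k}$, so the inner coefficient is $\sum_j a_{i,(j+k)}\lowEnt_{j,\ell,k}$. This matches the definition of $\calP_k$.

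The only delicate point is this transpose and index bookkeeping: the correct combination is $\lowMat_k^{-\T}\upMat_k^{\T}$, which pairs $\upMat_k$'s row index $i$ with $\lowMat_k^{-1}$'s row index $j$, consistent with the formula. Since $\changeofvarmatrix$ is invertible, $\bxi \mapsto \changeofvarmatrix\bxi$ is a bijection, and both directions of the equivalence follow.
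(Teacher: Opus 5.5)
Your proof is correct and follows essentially the same route as the paper: the chain rule $\Jac_{\changeofvarmatrix\bmx}(f)\,\changeofvarmatrix = \Jac_{\bmx}(f^\changeofvarmatrix)$, restriction to the last $n-k$ columns of $\changeofvarmatrix$ (the blocks $\upMat_k$ and $\lowMat_k$), inversion of $\lowMat_k$ via \assumHb, and substitution through $\changeofvarmatrix^{-1}$ for the linear equations. The only difference is cosmetic: you work with column vectors and $\lowMat_k^{-\T}\upMat_k^{\T}$, whereas the paper right-multiplies the row gradient by $\upMat_k\lowMat_k^{-1}$, and your index bookkeeping correctly matches the definition of $\calP_k$.
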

\begin{proof}
    By the (multivariate) chain rule, we have $\frac{\partial
    f^\changeofvarmatrix}{\partial x_j} = \sum_{i=1}^n a_{i,j}
    \left(\frac{\partial f}{\partial x_i}\right)^\changeofvarmatrix$, for all $1
    \leq j \leq n$. Therefore, by definition of the Jacobian matrix, we obtain
    $\Jac_{\changeofvarmatrix\bmx}(f) \changeofvarmatrix =
    \Jac_{\bmx} \left(f^\changeofvarmatrix\right)$. Let us fix $1 \leq k \leq
    n$. As $\changeofvarmatrix$ satisfies \assumHb, $\lowMat_k^{-1}$ exists, and
    hence
    \begin{align*}
        &\phantom{{}\iff{}} \bxi \in \V\left(\frac{\partial f^\changeofvarmatrix}
        {\partial x_{k+1}}, \dots, \frac{\partial f^\changeofvarmatrix}
        {\partial x_n}\right) \\
        &\iff \frac{\partial f^\changeofvarmatrix}
        {\partial x_{k+1}}(\bxi) = \dots = \frac{\partial f^\changeofvarmatrix}
        {\partial x_n}(\bxi) = 0 \\
        &\iff \begin{bmatrix} \frac{\partial f}{\partial x_1}(\changeofvarmatrix
        \bxi) & \cdots & \frac{\partial f}{\partial x_n}(\changeofvarmatrix\bxi)
        \end{bmatrix} \begin{bmatrix} \upMat_k \\ \lowMat_k \end{bmatrix} = 
        \begin{bmatrix} 0 & \cdots & 0 \end{bmatrix} \\
        &\iff \begin{bmatrix} \frac{\partial f}{\partial x_1}(\changeofvarmatrix
        \bxi) & \cdots & \frac{\partial f}{\partial x_n}(\changeofvarmatrix\bxi)
        \end{bmatrix} \begin{bmatrix} \upMat_k \\ \lowMat_k \end{bmatrix} 
        \lowMat_k^{-1} = \begin{bmatrix} 0 & \cdots & 0 \end{bmatrix} \\
        &\iff \begin{bmatrix} \frac{\partial f}{\partial x_1}(\changeofvarmatrix
        \bxi) & \cdots & \frac{\partial f}{\partial x_n}(\changeofvarmatrix\bxi)
        \end{bmatrix} \begin{bmatrix} \upMat_k\lowMat_k^{-1} \\ I_{n-k} 
        \end{bmatrix} = \begin{bmatrix} 0 & \cdots & 0\end{bmatrix} \\
        &\iff \begin{bmatrix} \frac{\partial f}{\partial x_{k+1}}(
        \changeofvarmatrix\bxi) & \cdots & \frac{\partial f}{\partial x_n}
        (\changeofvarmatrix\bxi) \end{bmatrix} + \begin{bmatrix} \frac{
        \partial f}{\partial x_1}(\changeofvarmatrix\bxi) & \cdots & \frac
        {\partial f}{\partial x_k}(\changeofvarmatrix\bxi) \end{bmatrix}
        \upMat_k\lowMat_k^{-1} = \begin{bmatrix} 0 & \cdots & 0\end{bmatrix} \\
        &\iff \changeofvarmatrix\bxi \in \V\Bigg(\frac{\partial f}
        {\partial x_{k+1}} + \sum_{i=1}^k \bigg(\sum_{j=1}^{n-k}a_{i,(j+k)}
        \lowEnt_{j,1,k}\bigg)\frac{\partial f}{\partial x_i}, \dots, \\
        &\indent\indent\indent\indent\indent \frac{\partial f}
        {\partial x_n} + \sum_{i=1}^k \bigg(\sum_{j=1}^{n-k}a_{i,(j+k)}
        \lowEnt_{j,n-k,k}\bigg)\frac{\partial f}{\partial x_i}\Bigg).
    \end{align*}
    On the other hand, as $\changeofvarmatrix$ is invertible, 
    $\lowMat_0^{-1}=\changeofvarmatrix^{-1}$ exists, and by substitution we have
    \begin{align*}
        &\phantom{{}\iff{}} \bxi \in \V\left(x_1 - \sigma_1, \dots, x_{k-1} - 
        \sigma_{k-1} \right) \\
        &\iff \changeofvarmatrix\bxi \in \V\left(\sum_{i=1}^n \lowEnt_{1,i,0}x_i- 
        \sigma_1, \dots, \sum_{i=1}^n\lowEnt_{k-1,i,0}x_i - \sigma_{k-1}\right).
    \end{align*}
    Since we also have $\bxi \in \V(f^\changeofvarmatrix) \iff 
    \changeofvarmatrix\bxi \in \V(f)$ by definition, we conclude by combining
    those results.
\end{proof}

In \Cref{lin:gen:crit} of \nameref{algo:smooth}, we use the system introduced in
\Cref{lem:critsystem} as the starting system to solve rather than the more
intuitive system of \cite[Theorem 2.2]{EGS20}. This is because the partial
derivatives of $f^\changeofvarmatrix$ are generically of maximal degree $d-1$
regardless of the structure of $f$ for any variable, whereas the partial
derivatives of $f$ could have much lower degree in some variables. This input
system therefore preserves part of the structure associated to $f$, and comes at
the cost of an extra condition \assumHb\ that $\changeofvarmatrix$ must satisfy.
Combining all these results yields:

\begin{lemma}\label{lem:subrcritcorrect}
    Suppose that $f$ satisfies \emph{\assumS}, that $\changeofvarmatrix$
    satisfies \emph{\assumHa} and \emph{\assumHb}, and that $\genericcoordsigma$
    satisfies \emph{\assumB}. Let us fix $1 \leq k \leq n$ and $r \in \Nintegers
    \setminus \{0\}$, and let $\GamSLP$ be a straight-line program evaluating
    $f$. If the call to \emph{\ttRatParam} is successful after $r$ attempts,
    then \emph{\nameref{subr:critpts}} applied to $\GamSLP$,
    $\changeofvarmatrix$, $k$, $\genericcoordsigma$ and $r$ correctly returns a
    zero-dimensional rational parametrisation of the set
    $\changeofvarmatrix\left(\crit(\pi_k, V^\changeofvarmatrix) \cap
    \pi_{k-1}^{-1}(\sigma_1, \dots, \sigma_{k-1})\right)$.
\end{lemma}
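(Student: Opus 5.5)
The proof chains together the three ingredients established above: \cite[Theorem 2.2]{EGS20}, \cite[Lemma A.2]{SS17} and \Cref{lem:critsystem}, followed by the specification of \ttRatParam.

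First, fix $1 \leq k \leq n$. Since $\changeofvarmatrix$ satisfies \assumHa\ and $\genericcoordsigma$ satisfies \assumB, \cite[Theorem 2.2]{EGS20} applies. It shows that the system
\[x_1-\sigma_1=\dots=x_{k-1}-\sigma_{k-1}=f^\changeofvarmatrix=\frac{\partial f^\changeofvarmatrix}{\partial x_{k+1}}=\dots=\frac{\partial f^\changeofvarmatrix}{\partial x_n}=0\]
has finitely many solutions, all of them non-singular. Next, \assumS\ makes $f$, and hence $f^\changeofvarmatrix$, squarefree, so $f^\changeofvarmatrix$ generates $\I(V^\changeofvarmatrix)$. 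We may therefore apply \cite[Lemma A.2]{SS17} to the hypersurface $V^\changeofvarmatrix$, which is $(n-1)$-equidimensional, with the projection $\pi_k$. Its Jacobian matrix is $\Jac(f^\changeofvarmatrix)$ stacked on $[I_k\ 0]$. Smoothness of $V^\changeofvarmatrix$ makes the rank-$1$ condition automatic. The rank-deficiency condition is equivalent to the vanishing of $\partial f^\changeofvarmatrix/\partial x_{k+1},\dots,\partial f^\changeofvarmatrix/\partial x_n$. Hence the solution set of the system above is exactly $\crit(\pi_k, V^\changeofvarmatrix)\cap\pi_{k-1}^{-1}(\sigma_1,\dots,\sigma_{k-1})$, as noted in the text preceding \Cref{lem:critsystem}.

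Second, \Cref{lem:critsystem} applies because $\changeofvarmatrix$ is invertible and satisfies \assumHb. It gives $\V(\calP_k)=\changeofvarmatrix\big(\crit(\pi_k,V^\changeofvarmatrix)\cap\pi_{k-1}^{-1}(\sigma_1,\dots,\sigma_{k-1})\big)$, so $\V(\calP_k)$ is finite.

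Third, we must check that every point of $\V(\calP_k)$ is a \emph{regular} solution of $\calP_k$, because \ttRatParam\ only parametrises regular solutions. This is the main obstacle, since regularity was stated for the original system and not for $\calP_k$. The plan is to write $\calP_k$ as an invertible transformation of the original system precomposed with $\bmx\mapsto\changeofvarmatrix^{-1}\bmx$. The linear equations are the $x_i-\sigma_i$ composed with $\changeofvarmatrix^{-1}$, and $f$ is $f^\changeofvarmatrix$ composed with $\changeofvarmatrix^{-1}$. The derivative equations are obtained from $\big(\partial f^\changeofvarmatrix/\partial x_{k+1},\dots,\partial f^\changeofvarmatrix/\partial x_n\big)\circ\changeofvarmatrix^{-1}$ by right multiplication by $\lowMat_k^{-1}$. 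This follows from the computation $\Jac_{\changeofvarmatrix\bmx}(f)\changeofvarmatrix=\Jac_{\bmx}(f^\changeofvarmatrix)$ carried out in the proof of \Cref{lem:critsystem}, applied at the level of polynomials rather than pointwise. Thus $\calP_k = M\cdot\big(G\circ\changeofvarmatrix^{-1}\big)$ for a constant invertible block matrix $M$, namely identity blocks together with $(\lowMat_k^{-1})^{\T}$, where $G$ denotes the original system. The chain rule then gives $\Jac_{\changeofvarmatrix\bxi}(\calP_k)=M\,\Jac_{\bxi}(G)\,\changeofvarmatrix^{-1}$. This matrix is invertible whenever $\Jac_{\bxi}(G)$ is, so all solutions of $\calP_k$ are regular.

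Finally, \nameref{subr:critpts} feeds a straight-line program for $\calP_k$ to \ttRatParam\ with parameter $r$. Under the hypothesis that this call succeeds, it returns a zero-dimensional rational parametrisation of the regular solutions of $\calP_k$. By the previous steps, these are all of $\V(\calP_k)$, which is exactly the claimed set.
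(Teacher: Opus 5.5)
Your proposal is correct and follows the same chain as the paper's proof: \cite[Theorem 2.2]{EGS20} gives finiteness and regularity, \cite[Lemma A.2]{SS17} identifies the solutions with $\crit(\pi_k, V^\changeofvarmatrix)\cap\pi_{k-1}^{-1}(\sigma_1,\dots,\sigma_{k-1})$, \Cref{lem:critsystem} transports them to $\V(\calP_k)$, and the specification of \ttRatParam\ finishes the argument. The one place you go beyond the paper is the transfer of regularity to $\calP_k$: the paper only asserts it (``it must also be finite and not contain singular points''), whereas your factorisation $\Jac_{\changeofvarmatrix\bxi}(\calP_k)=M\,\Jac_{\bxi}(G)\,\changeofvarmatrix^{-1}$, with $M$ block-diagonal and invertible, actually proves it.
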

\begin{proof}
    As $f$ satisfies \assumS, $\changeofvarmatrix$ satisfies \assumHa\ and
    $\genericcoordsigma$ satisfies \assumB, we can apply \cite[Theorem
    2.2]{EGS20} to obtain that the set of solutions to the system 
    \[x_1 - \sigma_1 = \dots= x_{k-1}-\sigma_{k-1}=f^\changeofvarmatrix=
    \frac{\partial f^\changeofvarmatrix}{\partial x_{k+1}}= \dots=\frac{\partial
    f^\changeofvarmatrix}{\partial x_n}=0\]
    is finite and does not contain singular points. By \cite[Lemma A.2]{SS17},
    this solution set is precisely the set $\crit(\pi_k, V^\changeofvarmatrix)
    \cap \pi_{k-1}^{-1}(\sigma_1, \dots, \sigma_{k-1})$. Since
    $\changeofvarmatrix$ is invertible (as it satisfies \assumHa) and satisfies
    \assumHb, we can apply \Cref{lem:critsystem}, to deduce that the set of
    solutions to the system 
    \begin{align*}
    & \Bigg(\sum_{i=1}^n \lowEnt_{1,i,0}x_i - \sigma_1, \dots, \sum_{i=1}^n 
    \lowEnt_{k-1,i,0}x_i-\sigma_{k-1}, f, \frac{\partial f}{\partial x_{k+1}} + 
    \sum_{i=1}^k \bigg(\sum_{j=1}^{n-k}a_{i,(j+k)}\lowEnt_{j,1,k}\bigg)
    \frac{\partial f}{\partial x_i}, \\
    & \indent\indent\indent \dots,\frac{\partial f}{\partial x_n} + \sum_{i=1}^k 
    \bigg(\sum_{j=1}^{n-k}a_{i,(j+k)}\lowEnt_{j,n-k,k}\bigg)\frac{\partial f}
    {\partial x_i}\Bigg)
    \end{align*}
    is the set $\changeofvarmatrix\left(\crit(\pi_k, V^\changeofvarmatrix) \cap
    \pi_{k-1}^{-1}(\sigma_1, \dots, \sigma_{k-1})\right)$. In particular, it
    must also be finite and not contain singular points.
    
    Since $\changeofvarmatrix$ satisfies \assumHb, we can apply \ttSLPPolar\ on
    \Cref{lin:subr1:SLP}, which, given $\GamSLP$, $\changeofvarmatrix$, $k$ and
    $\genericcoordsigma$, computes a straight-line program $\LamSLP$ that
    evaluates this system. As it has finitely many solutions, we can apply
    \ttRatParam\ on \Cref{lin:subr1:param} to $\LamSLP$ and $r$, which, under
    assumption of success, computes a zero-dimensional rational parametrisation
    $\scrQ$ of the regular solutions to this system. Since $\changeofvarmatrix$
    satisfies \assumHa\ and $\genericcoordsigma$ satisfies \assumB, by
    \cite[Theorem 2.2]{EGS20}, its solutions are all regular, and $\scrQ$ is
    therefore a parametrisation of the entire set
    $\changeofvarmatrix\left(\crit(\pi_k, V^\changeofvarmatrix) \cap
    \pi_{k-1}^{-1}(\sigma_1, \dots, \sigma_{k-1})\right)$. 
    Since the output of \nameref{subr:critpts} is precisely
    $\scrQ$ (see \Cref{lin:subr1:out}), we conclude that \nameref{subr:critpts}
    correctly returns a zero-dimensional rational parametrisation of the set
    $\changeofvarmatrix\left(\crit(\pi_k, V^\changeofvarmatrix) \cap
    \pi_{k-1}^{-1}(\sigma_1, \dots, \sigma_{k-1})\right)$ when all the
    aforementioned assumptions are made, as required.
\end{proof}

\subsection{Correctness of \texorpdfstring{\nameref{subr:isol}}{}}
\label{subsec:isol}

Recall that $V = \V(f) \subset \compfield^n$ and that $S \coloneqq \left\{\bmx
\in \mathbb{R}^n \ | \ f(\bmx) \neq 0\right\}$. 
In this subsection, we show that, if $f$ satisfies \assumS, if the
zero-dimensional rational parametrisation $\scrQ$ parametrises a non-empty
finite set $Q$, and if $\bma$ denotes a direction vector not tangent to $V \cap
\realfield^n$ at any point of $Q$, then \nameref{subr:isol} correctly outputs
two zero-dimensional rational parametrisations, which encode points in each
connected component of $S$ whose boundary contains a point of $Q$.
We begin by a small lemma concerning the behaviour of the boundary of $S$.

\begin{lemma}\label{lem:twoconnectedcomp}
    Suppose that $f$ satisfies \emph{\assumS}. Then, any point in the boundary
    of $S$ lies in the boundary of at most two distinct connected components of
    $S$.
\end{lemma}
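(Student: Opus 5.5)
Let $\bmy$ be a point of the boundary of $S$. Since $S$ is open and $V \cap \realfield^n$ is closed, $\bmy$ does not lie in $S$, so $f(\bmy)=0$ and $\bmy \in V \cap \realfield^n$. The approach is to show that a small enough neighbourhood of $\bmy$ meets $S$ in exactly two connected pieces, one where $f>0$ and one where $f<0$. Any connected component of $S$ with $\bmy$ in its closure must then contain one of these two pieces, and there are only two of them.

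First I would fix the local picture. By \assumS, $\grad(f)(\bmy) \neq 0$, so, exactly as in the proof of \Cref{lem:connectedsemialg}, I may assume that $\frac{\partial f}{\partial x_n}(\bmy) \neq 0$. The semi-algebraic Implicit Function Theorem \cite[Corollary 2.9.8]{BCR} then gives open neighbourhoods $U_1 \times U_2$ of $\bmy$ and a continuous map $\phi$ such that $V$ inside $U_1\times U_2$ is exactly the graph of $\phi$. Shrinking $U_1\times U_2$ if necessary, I take $U_1$ connected and $\frac{\partial f}{\partial x_n}$ of constant sign on $U_1\times U_2$. Under these conditions $f$ has a fixed sign on $A^+ \coloneqq \{x_n>\phi\}$ and the opposite sign on $A^- \coloneqq \{x_n<\phi\}$. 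Consequently $S\cap (U_1\times U_2) = A^+ \sqcup A^-$.

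Second, I would check that $A^+$ and $A^-$ are each connected. This is the same argument as in \Cref{lem:connectedsemialg}. Given two points of $A^+$, I join their projections by a path in the connected set $U_1$, and I lift it to $A^+$ by adding a small positive vertical offset to the graph of $\phi$, keeping the path inside $U_2$. It is possible that only the smaller piece $\scrN_2\cap S^+$ is connected; in that case I simply use the neighbourhood $\scrN_2$ from \Cref{lem:connectedsemialg} directly, together with its analogue for $S^-=\{f<0\}$, which is obtained by applying that lemma to $-f$.

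Finally, let $C$ be a connected component of $S$ whose closure contains $\bmy$. Then $C$ meets $N\cap S$, where $N$ is the neighbourhood chosen above, so $C$ meets $A^+$ or $A^-$. Since each of these pieces is connected and contained in $S$, $C$ contains that whole piece. Distinct components of $S$ are disjoint, so each piece lies in at most one component, which leaves at most two components with $\bmy$ in their closure.

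The only delicate step is the shrinking of the neighbourhood, so that the sign of $f$ is constant on each side of the graph and each side is connected. I would handle it by taking $U_1$ to be a small ball and $U_2$ a small interval; the intermediate value theorem on vertical segments then gives the sign statement.
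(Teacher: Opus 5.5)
Your proof is correct and uses the same idea as the paper. The paper applies \Cref{lem:connectedsemialg} to $f$ and to $-f$ (your fallback) to get neighbourhoods in which $S^+$ and $S^-$ are connected, and concludes that each sign class contributes at most one component with the point in its closure; your main route instead proves a cleaner version of that lemma inline, a single neighbourhood in which $S$ splits into two connected pieces (each connected since, e.g., $A^+$ is homeomorphic to $U_1\times(0,1)$ when $U_1$ is a ball and $U_2$ an interval), which avoids the conic-structure neighbourhood $\scrN_1$.
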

\begin{proof}
    Let $\bxi \in \overline{S} \setminus S$. Since $f$ satisfies \assumS, we can
    apply \Cref{lem:connectedsemialg} to conclude that there exists a
    neighbourhood $\mathscr{N}^+$ of $\bxi$ such that $\mathscr{N}^+ \cap S^+$
    is connected. Therefore, there can only be at most one connected component
    of $S$ such that $\bxi$ lies in its boundary and $f$ is positive on it. By
    applying the same argument to $-f$, we also deduce the existence of a
    neighbourhood $\mathscr{N}^-$ of $\bxi$ such that $\mathscr{N}^- \cap S^-$
    is connected, and hence there can only be at most one connected component of
    $S$ such that $\bxi$ lies in its boundary and $f$ is negative on it. Because
    $f$ is non-zero and has a fixed sign on any connected component of $S$, we
    conclude that there can only be at most two distinct connected components of
    $S$ having $\bxi$ in their boundary. 
\end{proof}

Let us now introduce some notation. Suppose that we have $\bxi \in V \cap
\realfield^n$ and $\bma \in \ratfield^n$ non-zero. We define, for a new variable
$\varU$, $\bell_{\bxi, \bma} \coloneqq \bxi + \bma\varU \in
\realfield[\varU]^n$. This vector of univariate polynomials defines, as we let
$\varU$ vary in $\realfield$, the set of points lying on the affine line passing
through $\bxi$ in the direction of $\bma$, as illustrated below.

\begin{figure}[H]
\centering
\includegraphics{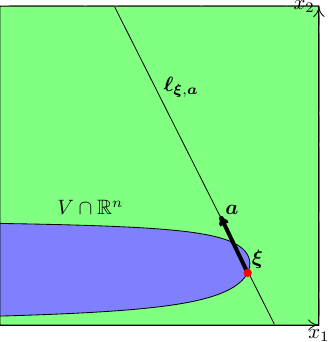}
\caption{Illustration of $\bell_{\bxi, \bma}$ on a two-dimensional example.}
\label{fig:line}
\end{figure}

Let us now define the univariate polynomial $f_{\bxi,\bma}(\varU) \coloneqq
f(\bxi + \bma\varU) \in \realfield[\varU]$. By construction, the roots of
$f_{\bxi,\bma}$ are the relative distances between $\bxi$ and the intersection
points of $V \cap \realfield^n$ and $\bell_{\bxi,\bma}$. In particular, $0$ is a
root of $f_{\bxi,\bma}$, corresponding to $\bxi$ itself. Let us denote by
$\{\zeta_1, \dots, \zeta_r\}$ the non-zero roots of $f_{\bxi,\bma}$, when any
exists. Finally, if $\bxi \in V$, let us define $\T_{\bxi}V$ as the tangent
hyperplane to $V$ at $\bxi$, a well-defined object when $f$ satisfies \assumS.
We use the notation $\left<\bma,\bmb\right>$ to indicate the usual dot product
of two vectors $\bma,\bmb \in \realfield^n$.

\begin{lemma}\label{lem:lineintersectVA}
    Suppose that $f$ satisfies \emph{\assumS}. Let $C$ be a connected component
    of $S$. Suppose that $\bxi \in \overline{C}\setminus C$, and that
    $\bell_{\bxi, \bma}$ does not lie in $\T_{\bxi}V$. Let us define $\delta
    \coloneqq \min\{1,\min_{1\leq i \leq r}|\zeta_i|\}$. Then, for any $0 < z <
    \delta$, either $\bxi + z\bma$ or $\bxi - z\bma$ lies in $C$.
\end{lemma}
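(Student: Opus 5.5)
The plan is to study the univariate function $f_{\bxi,\bma}$ on the open intervals $(0,\delta)$ and $(-\delta,0)$. It never vanishes there, so each half-open segment $\{\bxi + z\bma : 0<z<\delta\}$ and $\{\bxi - z\bma : 0<z<\delta\}$ is a connected subset of $S$. Each therefore lies entirely in a single connected component of $S$; call these $C^+$ and $C^-$. It then suffices to show that $C \in \{C^+, C^-\}$.

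First, the transversality hypothesis gives $f_{\bxi,\bma}'(0) = \left<\grad f(\bxi), \bma\right> \neq 0$. So $0$ is a simple root of $f_{\bxi,\bma}$, and $f_{\bxi,\bma}$ takes opposite signs on $(0,\delta)$ and $(-\delta,0)$. Hence one of $C^+, C^-$ lies in $S^+$ and the other in $S^-$. Without loss of generality (replacing $f$ by $-f$), $C \subseteq S^+$, and say $C^+ \subseteq S^+$.

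The key step is to show $C = C^+$, using \Cref{lem:connectedsemialg}. Take the neighbourhood $\scrN_2$ of $\bxi$ such that $\scrN_2 \cap S^+$ is connected. We may assume $\scrN_2$ is open, or at least a neighbourhood containing an open ball around $\bxi$ (as in the construction there). Since $\bxi \in \overline{C}$, the set $\scrN_2$ meets $C$. For $z>0$ small enough, $\bxi + z\bma \in \scrN_2 \cap S^+$, so $\scrN_2 \cap S^+$ meets $C^+$. Because $\scrN_2 \cap S^+$ is connected, it is contained in a single connected component of $S^+$. The connected components of $S^+$ are exactly the connected components of $S$ on which $f>0$, since $S^+$ is open and closed in $S$. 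Therefore $C = C^+$, and every $\bxi + z\bma$ with $0<z<\delta$ lies in $C$. In the other sign case the same argument gives $\bxi - z\bma \in C$.

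The main obstacle is the neighbourhood bookkeeping in the last step. The neighbourhood from \Cref{lem:connectedsemialg} is built as an intersection of a closed ball with $U_1\times U_2$, so its interior must contain $\bxi$. This is what guarantees that both $C$ and the small segment actually enter it; I would argue this directly by shrinking to an open ball inside it. Replacing $f$ by $-f$ is harmless, since assumption \assumS{} and the lemma are symmetric under this change. The bound $\delta\le 1$ plays no role beyond ensuring that $\delta$ is well defined.
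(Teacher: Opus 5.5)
Your argument is correct and follows essentially the paper's route. The paper also combines the sign change of $f_{\bxi,\bma}$ across the simple root $0$ with local connectivity of the sign regions near $\bxi$; it packages the latter as \Cref{lem:twoconnectedcomp} (at most two components of $S$, one of each sign, have $\bxi$ in their boundary), whereas you inline it by matching the sign of $f$ on $C$ and applying \Cref{lem:connectedsemialg} directly, and your observation that each open segment $\{\bxi \pm z\bma : 0<z<\delta\}$ avoids $V$ makes the paper's passage from a small Taylor neighbourhood to the full range $0<z<\delta$ more explicit.
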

\begin{proof}
    Let $C$ and $\bxi$ be as in the statement. Since we suppose that
    $\bell_{\bxi, \bma} \not\subset \T_{\bxi}V$, we have $\left<\Jac_\bxi(f),
    \bma\right> \neq 0$. Suppose without loss of generality that
    $\left<\Jac_\bxi(f), \bma\right> > 0$ (otherwise, consider
    $f_{\bxi,-\bma}$). By Taylor expansion of $f_{\bxi,\bma}$ at $0$, we have
    \[f_{\bxi,\bma}(\varU) = f_{\bxi,\bma}(0) + \varU f_{\bxi,\bma}'(0) +
    o(\varU^2) = 0 + \varU \left<\Jac_\bxi(f), \bma\right> + o(\varU^2) = \varU
    \left<\Jac_\bxi(f), \bma\right> + o(\varU^2).\]
    In particular, there exists a sufficiently small positive $\delta \in
    \realfield$ such that, for any $\varU \in \ ]-\delta, \delta[$,
    $f_{\bxi,\bma}(\varU)$ and $\varU$ have the same sign, since
    $\left<\Jac_\bxi(f), \bma\right>$ is a positive constant.

    Therefore, if we suppose that we have computed such a $\delta \in
    \realfield$, then, for any $0<z<\delta$, $f_{\bxi,\bma}(z) > 0$ and
    $f_{\bxi,\bma}(-z) < 0$, and in particular $\bxi + z\bma$ and $\bxi - z\bma$
    lie in different connected components of $S$ that contain $\bxi$ in their
    boundary. Now, because $f$ satisfies \assumS, we can apply
    \Cref{lem:twoconnectedcomp} to conclude that there are at most two distinct
    connected components of $S$ which have $\bxi$ in their boundary. Since there
    are exactly two such components, and since $C$ is one of them by definition,
    we conclude that either $\bxi + z\bma$ or $\bxi - z\bma$ lies in $C$.
    
    It therefore remains to show that the real number $m \coloneqq
    \min\{1,\min_{1\leq i \leq r}|\zeta_i|\}$ as defined in the statement is a
    suitable choice for $\delta$. By definition, $m > 0$, and $0$ is the only
    root of $f_{\bxi,\bma}$ in the interval $]-m, m[$. In particular, by
    continuity, $f_{\bxi,\bma}$ has constant sign on $]-m,0[$ and on $]0,m[$,
    which means that $m$ is a suitable choice for $\delta$, as required.
\end{proof}

\begin{lemma}\label{lem:subrisolcorrect}
    Suppose that $f$ satisfies \emph{\assumS}. Suppose that $\scrQ = (w, \bmv,
    \nu)$ is a zero-dimensional rational parametrisation of a non-empty subset
    $Q \subset V$. Let $\bma \in \ratfield^n$ such that, for each $\bxi \in Q$,
    $\bell_{\bxi, \bma}$ is not tangent to $V$ at $\bxi$. Let $\GamSLP$ be a
    straight-line program evaluating $f$. Then \emph{\nameref{subr:isol}}
    applied to $\GamSLP$, $\bma$ and $\scrQ$ correctly returns two
    zero-dimensional rational parametrisations $\scrP^-$ and $\scrP^+$ of
    non-empty sets $P^-$ and $P^+$, such that $P^- \cup P^+$ intersects every
    connected component of $S$ whose boundary contains a point of $Q$, and such
    that $f$ does not vanish at any point of $P^- \cup P^+$.
\end{lemma}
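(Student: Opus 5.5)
The plan is to trace \nameref{subr:isol} line by line and reduce correctness to \Cref{lem:lineintersectVA}, applied at each point $\bxi \in Q$.

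First I check that the bivariate polynomial passed to \ttExtensionRealRoots\ encodes the right univariate polynomials. For a root $t_i$ of $w$ we have $\bxi_i = \bmv(t_i)/w'(t_i)$, and $w'(t_i)\neq 0$ because $w$ is squarefree. Since $f$ has degree $d$, the expression $(w'(\varT))^d f\big(\bmv(\varT)/w'(\varT) + \varU\bma\big)$ is a polynomial in $\varT,\varU$. Specialising it at $t_i$ gives $w'(t_i)^d f_{\bxi_i,\bma}(\varU)$, which has exactly the roots of $f_{\bxi_i,\bma}$.

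Next comes the squarefreeness hypothesis of \ttExtensionRealRoots. This is the step I expect to be most delicate. Non-tangency gives $f_{\bxi_i,\bma}'(0)\neq 0$, so $0$ is a simple root. The other roots need not be simple, so squarefreeness is not automatic. I would argue that this hypothesis is only needed for isolation to be well defined. Otherwise I would treat it as a standing assumption, or replace $q$ by its squarefree part (the roots are unchanged). In either case the output intervals contain all real roots $\zeta$ of each $f_{\bxi_i,\bma}$ and are disjoint.

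Then I bound $\lambda$. The key point is that $\lambda \le \tfrac12 \min\{1, \min_{i,j}|\zeta_{i,j}|\}$ over all non-zero real roots $\zeta_{i,j}$ of all $f_{\bxi_i,\bma}$. Take a non-zero root $\zeta$. Its isolating interval $[q^-,q^+]$ does not contain $0$, since $0$ is itself a root with its own disjoint interval. So one endpoint $e$ satisfies $0<|e|\le|\zeta|$, which gives $\lambda \le |e|/2 < |\zeta|$. Also $\lambda \le 1/2 < 1$ and $\lambda>0$. Hence $0<\lambda<\delta_i \coloneqq \min\{1,\min_j|\zeta_{i,j}|\}$ for every $i$. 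Only the real roots matter here, because $\delta$ in \Cref{lem:lineintersectVA} concerns sign changes along the real line.

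Finally I conclude. $\scrP^\pm = (w, \bmv \pm\lambda w'\bma, \nu)$ is a valid parametrisation: $w$ is unchanged and monic squarefree, the degrees of $\bmv\pm\lambda w'\bma$ are reduced modulo $w$ if necessary, and $\nu(\bmv\pm\lambda w'\bma) = \varT w' \pm \lambda w'\nu(\bma)$. If $\nu(\bma)\neq 0$ this breaks the separating condition, so I would replace the $\varT w'$ relation by the translated linear form, or simply note that $P^\pm = \{\bxi \pm \lambda\bma : \bxi\in Q\}$ is still parametrised in the sense of the third bullet. In particular $P^\pm$ is non-empty because $Q$ is. Now let $C$ be a component of $S$ whose boundary contains some $\bxi\in Q$. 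By \Cref{lem:lineintersectVA} with $z=\lambda$, either $\bxi+\lambda\bma$ or $\bxi-\lambda\bma$ lies in $C$, so $P^-\cup P^+$ meets $C$. For the last claim, $f(\bxi\pm\lambda\bma)=f_{\bxi,\bma}(\pm\lambda)\neq0$, because $\pm\lambda$ lies in $]-\delta,\delta[\setminus\{0\}$, which contains no root of $f_{\bxi,\bma}$.
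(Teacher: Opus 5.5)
Your proposal follows the paper's proof. You specialise $(w'(\varT))^d f(\bmv(\varT)/w'(\varT)+\varU\bma)$ at the real roots of $w$, call \ttExtensionRealRoots, check that $0<\lambda<\delta_i$ for every $i$, and conclude with \Cref{lem:lineintersectVA} at $z=\lambda$. In two places you are more careful than the paper:
\begin{itemize}
\item You derive $\lambda<\delta_i$ from the disjointness of the isolating intervals (the interval of a non-zero root cannot contain $0$). The paper compresses this into ``by construction''.
\item You note that $\nu(\bmv\pm\lambda w'\bma)\equiv(\varT\pm\lambda\nu(\bma))\,w' \bmod w$. So a translation of $\varT$ by $\pm\lambda\nu(\bma)$ is needed for $\scrP^\pm$ to be a rational parametrisation in the strict sense. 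The paper passes over this.
\end{itemize}

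You are right that squarefreeness of the fibres does not follow from \assumS\ and non-tangency at $\bxi$, and the paper's justification is not valid. It claims a square factor would have to come from $f$ or from $w'$. But a line transversal to $V$ at $\bxi$ can be tangent to $V$ elsewhere: for $f=x_2-x_1^3$, $\bxi=(-2,-8)$ and $\bma=(1,3)$, one gets $f(\bxi+\varU\bma)=-\varU(\varU-3)^2$.

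Your second fallback does not repair this as stated. Squarefreeness of $q$ in $\ratfield[\varT,\varU]$ is not inherited by its specialisations at roots of $w$; take $q=\varU^2-\varT$ and $w=\varT$. What is needed is one of the following:
\begin{itemize}
\item the squarefree part of $q$ with respect to $\varU$ computed over $\ratfield[\varT]/\langle w\rangle$, i.e.\ gcds with $\partial q/\partial\varU$ modulo $w$, splitting $w$ whenever a zero divisor appears, followed by isolation over each factor of $w$;
\item or an isolation routine that tolerates multiple roots.
\end{itemize}
Your bound on $\lambda$ only uses that the intervals are pairwise disjoint and contain all distinct real roots, so it goes through unchanged with either fix. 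Your first fallback, adding squarefreeness as a hypothesis, changes the statement.

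Finally, your non-vanishing argument via $]-\delta,\delta[$ only covers real $\bxi$, since only the real roots of $w$ are processed. For non-real $\bxi\in Q$, nothing you show excludes $f(\bxi\pm\lambda\bma)=0$. State that claim for the real points of $P^\pm$, as the paper effectively does by building $P^\pm$ from the real roots $t_1<\dots<t_r$ of $w$. These are the only points relevant to the connected components of $S$ and to the later approximation step.
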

\begin{proof}
    Let us denote by $d$ the degree of $f$. Let $t_1 < \dots < t_r$ be the real
    roots of $w(\varT)$. By definition of a rational parametrisation, the set
    \(Q\) is equal to \(\left\{\left(\frac{1}{w'(t_i)}\bmv(t_i)\right)_{1 \leq i
    \leq r}\right\}\).
    Let us define $\bxi_i \coloneqq \frac{1}{w'(t_i)}\bmv(t_i)$. Following our
    previous notation, the set
    \[\left\{\bxi_i + \varU\bma\right\}_{1\leq i\leq r} \subset
    (\ratfield[\varU])^n\]
    is therefore the set of parametrisations of the lines $\bell_{\bxi_i,
    \bma}$, and the set of intersection points of these lines with $V \cap
    \realfield^n$ is given by the real roots of the polynomials
    $f_{\bxi_i,\bma}(\varU) \in \realfield[\varU]$ for $1\leq i\leq r$. Let
    $\zeta_{i,1}, \dots, \zeta_{i,s_i}$ be the non-zero real roots of
    $f_{\bxi_i,\bma}$, and define $\delta_i \coloneqq \min\{1,\min_{1\leq j \leq
    s_i}|\zeta_{i,j}|\}$.
    Let us define
    \[f_{\scrQ, \bma}(\varT, \varU) \coloneqq f\left(\frac{1}{w'(\varT)}
    \bmv(\varT) + \varU\bma\right) \in \ratfield(\varT)[\varU]\]
    so that, for all $1\leq i\leq r$, we have $f_{\scrQ, \bma}(t_i, \varU) =
    f_{\bxi_i,\bma}(\varU)$.
    By definition of a rational parametrisation, $w(\varT)$ is squarefree, and
    hence $w'(t_i) \neq 0$ for all $i$. In particular, the polynomial
    $(w'(\varT))^df_{\scrQ, \bma} \in \ratfield[\varT, \varU]$ has, when $\varT$
    is instantiated to $t_i$, the same roots as $f_{\scrQ, \bma}(t_i, \varU)$.
    We work with $(w'(\varT))^df_{\scrQ, \bma}$ in \nameref{subr:isol} instead
    of $f_{\scrQ,\bma}$, as it is a polynomial.

    Because $w \in \ratfield[\varT]$ and $(w'(\varT))^df_{\scrQ, \bma}(\varT,
    \varU) \in \ratfield[\varT, \varU]$, and because $(w'(t_i))^df_{\scrQ,
    \bma}(t_i, \varU)$ is squarefree for all $i$ by construction (otherwise,
    either $f$ would have a square factor, impossible by \assumS, or $w'$, and
    hence $w$, would have one, impossible by the definition of a
    zero-dimensional rational parametrisation), we can apply
    \ttExtensionRealRoots\ on \Cref{lin:subr2:interVl}, which, given these two
    polynomials, returns a tuple $\scrI$ of $r$ lists of isolation intervals for
    each root of $f_{\scrQ, \bma}(t_i, \varU) = f_{\bxi_i,\bma}(\varU)$, for
    $1\leq i\leq r$. Let $E$ be the set of endpoints of all intervals present in
    $\scrI$, and, as done on \Cref{lin:subr2:lambda}, define $\lambda \coloneqq
    (1/2)\min\{1, \min\{|e| : e \in E \wedge e \neq 0\}\}$. By construction, for
    any $1\leq i\leq r$, we have $0 < \lambda < \delta_i$. Note that $\lambda$
    does not depend on $i$.

    Let $C$ be a connected component of $S$ whose boundary contains an element
    of $Q$, say $\bxi_i$ for some $1\leq i\leq r$. Because $f$ satisfies
    \assumS\ and $\bell_{\bxi, \bma}$ is not tangent to $V$ at $\bxi$ by
    assumption, we can apply \Cref{lem:lineintersectVA} to deduce that either
    $\bxi_i + \lambda\bma$ or $\bxi_i - \lambda\bma$ lies in $C$. Therefore, if
    we define the sets 
    \[P^- \coloneqq \left\{\bxi_i - \lambda\bma\right\}_{1\leq i \leq r}\text{
    and }P^+ \coloneqq \left\{\bxi_i + \lambda\bma\right\}_{1\leq i \leq r},\]
    then $P^- \cup P^+$ is non-empty, contains at least one point of every
    connected component of $S$ whose boundary intersects $Q$, and does not
    contain any root of $f$ by definition of $\lambda$. Moreover, the
    parametrisations $\scrP^-$ and $\scrP^+$ computed on
    \Cref{lin:subr2:P-,lin:subr2:P+} respectively encode $P^-$ and $P^+$. Since
    the output of \nameref{subr:isol} on \Cref{lin:subr2:out} is precisely
    $\scrP^-,\scrP^+$, we conclude that, provided $Q$ is not empty,
    \nameref{subr:isol} correctly returns two zero-dimensional rational
    parametrisations of non-empty sets $P^-$ and $P^+$, such that their union
    intersects every connected component of $S$ whose boundary contains a point
    of $Q$, and such that $f$ does not vanish at any point, as required.
\end{proof}

\subsection{Correctness of \texorpdfstring{\nameref{subr:approx}}{}}
\label{subsec:approx}

In this subsection, we show that, given a polynomial $f$ and a zero-dimensional
rational parametrisation $\scrQ$ of a set $Q$ that does not contain any root of
$f$, \nameref{subr:approx} correctly outputs rational approximations $\bmr$ of
$\bmq \in Q$ that lie in the same connected component of $S$ as $\bmq$.

We begin by introducing some notation. Given a polynomial $f \in \realfield[x_1,
\dots, x_n]$, we define $\norm{f}_\infty$ as the maximum of the absolute values
of its coefficients, and, given $\bmx \in \realfield^n$, we define
$\norm{\bmx}_\infty \coloneqq \max_{1 \leq i \leq n}|x_i|$. Given a
point $\bmx \in \realfield^n$ and $r>0$, we also define 
\[\scrB(\bmx, r) \coloneqq \left\{\bmy \in \realfield^n \ : \
\norm{\bmx -  \bmy}_\infty \leq r\right\},\]
that is, the $n$-dimensional box of side-length $2r$ centred at $\bmx$.

\begin{lemma}\label{lem:boxf}
    Let $f \in \ratfield[x_1, \dots, x_n]$ have positive degree and $Q \subset
    \realfield^n$ be non-empty and finite, and assume that $f$ does not vanish
    at any point of $Q$. Let:
    \begin{itemize}
        \item $d = \deg(f)$,
        \item $\scrc = \norm{f}_\infty$,
        \item $\scrM \geq \max_{\bmq \in Q}\left\{\norm{\bmq}_\infty\right\}+1$,
        \item $0 < \scrm \leq \min_{\bmq \in Q}|f(\bmq)|$,
        \item and
        \[\scrd \coloneqq \min\left\{1,\scrm\left(\binom{n+d}{d}^2\scrM^{d}\scrc 
        \right)^{-1}\right\}.\]
    \end{itemize}
    Then $f$ does not vanish at any point of $\scrB(\bmq,\scrd)$, for any $\bmq
    \in Q$.
\end{lemma}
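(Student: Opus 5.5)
We argue by a Lipschitz-type estimate on $f$ over the box. The idea is to show that for every $\bmq \in Q$ and every $\bmy \in \scrB(\bmq,\scrd)$,
\[|f(\bmy) - f(\bmq)| < |f(\bmq)|,\]
which forces $f(\bmy)\neq 0$. Since $|f(\bmq)| \geq \scrm$, it suffices to prove the strict bound $|f(\bmy)-f(\bmq)| < \scrm$.

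First, write $f = \sum_{|\balpha|\leq d} c_\balpha \bmx^\balpha$, with at most $\binom{n+d}{d}$ monomials and $|c_\balpha| \leq \scrc$. Then
\[|f(\bmy)-f(\bmq)| \leq \scrc\sum_{\balpha}|\bmy^\balpha - \bmq^\balpha|.\]
Fix a monomial $\bmx^\balpha = x_{i_1}\cdots x_{i_s}$ with $s = |\balpha| \leq d$. Telescope it by replacing one factor at a time:
\[\bmy^\balpha-\bmq^\balpha = \sum_{j=1}^{s} y_{i_1}\cdots y_{i_{j-1}}(y_{i_j}-q_{i_j})q_{i_{j+1}}\cdots q_{i_s}.\]
All coordinates involved are bounded in absolute value by $\max_{\bmq}\norm{\bmq}_\infty + \scrd \leq \scrM$, because $\scrd \leq 1$. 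Each difference factor is at most $\scrd$. Hence $|\bmy^\balpha-\bmq^\balpha| \leq s\,\scrM^{s-1}\scrd \leq d\,\scrM^{d}\scrd$, using $\scrM\geq 1$.

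Summing over the monomials gives
\[|f(\bmy)-f(\bmq)| \leq \binom{n+d}{d}\, d\, \scrM^d \scrc\, \scrd.\]
It remains to compare this with $\scrm$. We use $d \leq \binom{n+d}{d}$, which holds because $n\geq 1$. We also use the strict inequality $d < \binom{n+d}{d}$ when $n \ge 1$ and $d\ge 1$, since $\binom{1+d}{d} = d+1$. Together with $\scrd \leq \scrm(\binom{n+d}{d}^2\scrM^d\scrc)^{-1}$, this yields
\[|f(\bmy)-f(\bmq)| \leq \frac{d}{\binom{n+d}{d}}\,\scrm < \scrm.\]

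The main point requiring care is this strict inequality, together with the bookkeeping of the bound $\scrM$ on $\bmy$'s coordinates, which is exactly why $\scrM$ includes the $+1$ and $\scrd$ is capped at $1$. The other place where care is needed is the degenerate case $\scrc$ or $\scrM^{d-1}$ with $s=0$, and the constant term cancels in the difference, so it causes no trouble. With the strict bound in hand, $|f(\bmy)| \geq |f(\bmq)| - |f(\bmy)-f(\bmq)| > \scrm - \scrm = 0$, which concludes the proof.
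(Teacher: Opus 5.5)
Your proof is correct, but it reaches the bound by a different route from the paper's.

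The paper first treats $\bmq=\bmzero$. It bounds $|f(\bmp)-f(\bmzero)|$ by $\scrd\binom{n+d}{d}\scrc$ using $|\bmp^\balpha|\le\scrd^{|\balpha|}\le\scrd$ for the non-constant monomials; the paper writes $\scrd^d$ there, a harmless slip. It then handles a general $\bmq$ by translation: it applies the origin case to the shifted polynomial $f(\bmx+\bmq)$ and bounds that polynomial's coefficients by $\scrc\scrM^d\binom{n+d}{d}$. This is where the second binomial factor and $\scrM^d$ are used. The $+1$ in $\scrM$ is what absorbs the binomial coefficients produced by expanding $(x_i+q_i)^{\alpha_i}$, a point the paper does not spell out.

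You never change coordinates. You telescope each monomial difference directly on $\scrB(\bmq,\scrd)$, and use $\scrd\le 1$ together with the $+1$ in $\scrM$ to bound the \emph{coordinates} of points of the box rather than coefficients. This avoids any coefficient-growth estimate for the shifted polynomial and makes transparent where each hypothesis enters. The cost is a factor $d$ in place of the second $\binom{n+d}{d}$, which you correctly absorb via $d<\binom{n+d}{d}$ (valid since $n,d\ge 1$), and this also gives you the needed strict inequality.

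As a by-product, your estimate shows that the square on the binomial coefficient in the definition of $\scrd$ is more than the argument actually requires. The paper's organisation (origin case plus translation) is more modular, but its second step needs the extra combinatorial justification that yours avoids.
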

\begin{proof}
    First, we note that, for $d,\scrc,\scrM$ and $\scrm$ as in the statement,
    $\scrd$ is well-defined, since all aforementioned quantities are positive.
    Suppose that we have shown that, for any $\bmq \in Q$, we have 
    \begin{equation}
    \max_{\bmp \in \scrB(\bmq,\scrd)}\left|f(\bmp)-f(\bmq)\right| < \scrm.
    \label{eq:goal}
    \end{equation}
    Then, by definition of $\scrm$ and since $f(\bmq)\neq 0$ by assumption, for
    any $\bmp \in \scrB(\bmq,\scrd)$, we have $ 0 \leq |f(\bmq)| - \scrm <
    |f(\bmp)| < |f(\bmq)| + \scrm$, and in particular $f$ does not vanish on
    $\scrB(\bmq,\scrd)$. It therefore suffices to prove \Cref{eq:goal}.

    We begin by showing \Cref{eq:goal} for $\bmq = \bmzero \coloneqq
    (0,\dots,0)$, and then reduce the general case to this one. Let $f = \sum_{0
    \leq |\balpha| \leq d} f_\balpha \bmx^\balpha$. In this case, we have
    \begin{align}
        \max_{\bmp \in \scrB(\bmzero,\scrd)}\left|f(\bmp)-f(\bmzero)\right| 
        &\leq \max_{\bmp \in \scrB(\bmzero,\scrd)}\sum_{1 \leq |\balpha| \leq d}
        \left|f_\balpha\right|\left|\bmp^\balpha\right| \notag \\
        &\leq \sum_{1 \leq |\balpha| \leq d} \left|f_\balpha\right| \max_{\bmp 
        \in \scrB(\bmzero,\scrd)}\left|\bmp^\balpha\right|
        = \sum_{1 \leq |\balpha| \leq d} \left|f_\balpha\right| \scrd^d
        = \scrd^d \sum_{1 \leq |\balpha| \leq d} \left|f_\balpha\right|\notag\\
        &< \scrd^d \binom{n+d}{d} \norm{f}_\infty \label{eq:nterms}\\
        &\leq \scrd \binom{n+d}{d} \scrc,\label{eq:d<1} \\
        &\leq \scrm\scrM^{-d}\binom{n+d}{d}^{-1} \leq \scrm \label{eq:defd}
    \end{align}
    where \Cref{eq:nterms} follows from the fact that $f$ has
    $\left(\binom{n+d}{d}-1\right)$ non-constant terms, \Cref{eq:d<1} from the
    definition of $\scrc$ and the fact that $0 < \scrd \leq 1$, and
    \Cref{eq:defd} by construction of $\scrd$ and the fact that $\scrM \geq 1$.
    In particular, we obtain
    \[\max_{\bmp \in \scrB(\bmzero,\scrd)}\left|f(\bmp)-f(\bmzero)\right| 
    < \scrm\]
    as required. Suppose now that $\bmq \in Q$ is arbitrary. Then, since
    translation by $\bmq$ preserves distances, we have
    \begin{align*}
        \max_{\bmp \in \scrB(\bmq,\scrd)}\left|f(\bmp)-f(\bmq)\right| &=
        \max_{\bmp \in \scrB(\bmzero,\scrd)}\left|f(\bmp - \bmq)-f(\bmzero)
        \right| < \scrd \binom{n+d}{d} \norm{f(\bmx - \bmq)}_\infty
    \end{align*}
    by applying the previously obtained result for $\bmzero$ up to
    \Cref{eq:nterms} and the fact that $0 < \scrd \leq 1$. Now, we also have
    \[\norm{f(\bmx - \bmq)}_\infty \leq \scrc \scrM^d \binom{n+d}{d},\]
    since $\scrM > \norm{\bmq}_\infty$ and $f$ has at most $\binom{n+d}{d}$
    terms. Therefore, combining this result with the definition of $\scrd$
    yields
    \[\max_{\bmp \in \scrB(\bmq,\scrd)}\left|f(\bmp)-f(\bmq)\right| < \scrd
    \scrc \scrM^d \binom{n+d}{d}^2 = \scrm\]
    which is precisely \Cref{eq:goal} for an arbitrary $\bmq \in Q$, as 
    required.
\end{proof}

With \Cref{lem:boxf} at hand, the correctness of \nameref{subr:approx} readily
follows:

\begin{lemma}\label{lem:subrapproxcorrect}
    Suppose that $\GamSLP$ is a straight-line program evaluating $f \in
    \ratfield[x_1, \dots, x_n]$ of positive degree, and that $\scrQ = (w, \bmv,
    \nu)$ is a zero-dimensional rational parametrisation which parametrises a
    non-empty set $Q = \{\bmq_1, \dots, \bmq_s\} \subset \realfield^n$. For any
    $1\leq i\leq s$, suppose that $f(\bmq_i) \neq 0$, and let $C_i$ be the
    connected component of $S \coloneqq \left\{\bmx \in \realfield^n \ | \
    f(\bmx) > 0\right\}$ in which $\bmq_i$ lies. Finally, let $\scrd$ be as in
    \emph{\Cref{lem:boxf}}.
    Then, \emph{\nameref{subr:approx}} applied to $\GamSLP$ and $\scrQ$
    correctly returns a finite set of rational approximations $\{\bmr_1, \dots,
    \bmr_s\} \subset \ratfield^n$ of $\{\bmq_1, \dots, \bmq_s\}$ such that
    $\bmr_i \in C_i$ for all $1\leq i\leq s$. Moreover, each obtained rational
    point has coordinates of height at most $\log(\scrd)$.
\end{lemma}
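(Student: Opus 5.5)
The plan is to follow the lines of \nameref{subr:approx} one by one and reduce everything to \Cref{lem:boxf}, plus the fact that a box is convex. First we check that the quantities computed in \Cref{lin:subr3:degree,lin:subr3:maxcoef,lin:subr3:minval,lin:subr3:maxval} meet the hypotheses of \Cref{lem:boxf} for the finite non-empty set $Q$. We have $d=\deg(f)$ and $\scrc=\norm{f}_\infty$ by definition. By the specification of \ttParamMinValue, $\scrm$ is a positive lower bound on the non-zero values $|f(\bmq_i)|$, and all of these are non-zero by assumption, so $0<\scrm\le\min_i|f(\bmq_i)|$. By the specification of \ttParamMaxValue, $\scrM=\max_j \ttParamMaxValue(\scrQ,x_j)+1 \ge \max_i\norm{\bmq_i}_\infty+1$. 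Hence the value $\scrd$ computed in \Cref{lin:subr3:dist} is exactly the $\scrd$ of \Cref{lem:boxf}. That lemma then shows $f$ vanishes nowhere on $\scrB(\bmq_i,\scrd)$ for every $i$.

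Next, \Cref{lin:subr3:approx} calls \ttApproximationList\ with precision $r=\lceil-\log(\scrd)\rceil$. This returns rational points $\bmr_i$ with $\norm{\bmr_i-\bmq_i}_\infty\le 2^{-r}\le\scrd$, so $\bmr_i\in\scrB(\bmq_i,\scrd)$. The box is convex, so the segment $[\bmq_i,\bmr_i]$ lies inside it and $f$ does not vanish on the segment. Since $f$ is continuous, it keeps the same sign along the whole segment, so the segment lies inside $S$. It is connected and contains $\bmq_i$, so it lies in $C_i$, and therefore $\bmr_i\in C_i$. There is one subtlety. The statement defines $S$ by $f>0$, while the subroutine only assumes $f\neq 0$ on $Q$. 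The argument above works the same way for either sign convention: each $\bmr_i$ lies in the connected component of $\{f\neq0\}$, or of $\{f>0\}$ when $f(\bmq_i)>0$, that contains $\bmq_i$. I would remark on this rather than restrict the claim.

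For the height claim, I would use the fact that \ttApproximationList\ at precision $2^{-r}$ returns coordinates that are dyadic numbers with denominator $2^{r}$, or of comparable size. The denominator then has $\log$ at most $r\approx-\log\scrd$, which gives the bound of order $|\log(\scrd)|$. The numerator is controlled by $2^r(\scrM)$, which adds $\log\scrM$. This is dominated once we recall that $\scrd\le \scrM^{-d}$ up to the other factors, because $\scrm\le \scrc\binom{n+d}{d}\scrM^d$ by the triangle inequality. I expect this height bookkeeping to be the main obstacle. The geometric part is immediate from \Cref{lem:boxf}, but the precise height statement ("at most $\log(\scrd)$", presumably meaning $|\log\scrd|$ up to additive constants) depends on the exact output format of \ttApproximationList. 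I would first try to absorb the numerator's $\log\scrM$ term using the inequality $-\log\scrd\ge d\log\scrM$ just derived.
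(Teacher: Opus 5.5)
Your argument for the main claim is correct and is essentially the paper's proof. You check that the computed $\scrm,\scrM,\scrc$ satisfy the hypotheses of \Cref{lem:boxf}, note that $2^{-\lceil-\log\scrd\rceil}\le\scrd$ so each $\bmr_i$ lies in $\scrB(\bmq_i,\scrd)$, and use that this box contains no zero of $f$. Your segment/convexity argument is a more explicit form of the paper's remark that the boundary of $C_i$ lies in $V\cap\realfield^n$. Your comment on the $f>0$ versus $f\neq 0$ mismatch in the statement is justified.

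The height part, however, does not work as you propose. The inequality $\scrm\le\scrc\binom{n+d}{d}\scrM^d$ only gives $\scrd\le\binom{n+d}{d}^{-1}$, because the factor $\scrM^d$ cancels. So it does not yield $-\log\scrd\ge d\log\scrM$. That inequality is equivalent to $\scrm\le\binom{n+d}{d}^2\scrc$, which the specification of \ttParamMinValue\ does not guarantee.

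In fact no argument can give a bound of order $|\log\scrd|$ from the stated hypotheses alone. Take $n=d=1$, $f=x_1$, $Q=\{N\}$, and let \ttParamMinValue\ and \ttParamMaxValue\ return the exact value $N$, so that $\scrm=N$ and $\scrM=N+1$. Then $\scrd=N/(4(N+1))\ge 1/8$, whereas every rational within $1/4$ of $N$ has height at least $\log(N-1/4)$.

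What your dyadic bookkeeping does prove is a height bound of $\lceil-\log\scrd\rceil+\log\scrM+O(1)$. This becomes $O(|\log\scrd|)$ under an extra condition such as $\scrm\le\binom{n+d}{d}^2\scrc$. That condition holds, for instance, for the explicit lower bounds of \Cref{cor:minmaxheightv2}, which lie below $2^{-\tau}\le\scrc$ with $\tau=\htt(f)$.

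The paper's own proof only establishes the precision $\scrd$ and gives no argument for the height sentence at all. That sentence (literally a negative number) should therefore be read loosely, as it is used in \Cref{lem:subrapproxcomp}: as a bound of the order of the height of $\scrd$ and of $\scrM$.
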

\begin{proof}
    Let $d = \deg(f)$ and $\scrc = \norm{f}_\infty$, as computed in
    \Cref{lin:subr3:degree,lin:subr3:maxcoef}, which are both positive as $f$
    has positive degree. The number $\scrm$ computed in \Cref{lin:subr3:minval}
    satisfies by construction $0 < \scrm \leq \min_{1\leq i\leq s}|f(\bmq_i)|$,
    and $\scrM$ computed in \Cref{lin:subr3:maxval} satisfies $\scrM \geq
    \max_{1\leq i\leq s}\left\{\norm{\bmq_i}_\infty\right\} + 1 > 0$. Since $f$
    has positive degree, since $Q$ is finite and non-empty, and since $f$ does
    not vanish on any point of $Q$, we can apply \Cref{lem:boxf} to deduce $f$
    does not vanish at any point of $\scrB(\bmq_i,\scrd)$ for any $\bmq_i \in
    Q$, where $\scrd$ is the quantity computed in \Cref{lin:subr3:dist}.

    In particular, if $C_i$ is the connected component of $S$ in which $\bmq_i$
    lies, any rational approximation $\bmr_i$ of $\bmq_i$ that lies in
    $\scrB(\bmq_i,\scrd)$ will also lie in $C_i$, since the boundary of $C_i$
    lies in $V \cap \realfield^n$, which does not intersect
    $\scrB(\bmq_i,\scrd)$. It therefore suffices to compute a rational
    approximation $\bmr_i$ of $\bmq_i$ that lies in $\scrB(\bmq_i,\scrd)$ for
    all $\bmq_i \in Q$. By definition, to have $\bmr_i \in \scrB(\bmq_i,\scrd)$,
    it must satisfy \(\norm{\bmr_i - \bmq_i}_\infty \leq \scrd\), and hence have
    all its coordinates accurate to precision at least $2^{\log(\scrd)}$.
    
    The set $\scrA$ computed in \Cref{lin:subr3:approx} is precisely a set of
    rational approximations of points of $Q$ accurate to precision
    $2^{\log(\scrd)}$, and therefore contains rational approximations $\{\bmr_1,
    \dots$, $\bmr_s\}$ of $\{\bmq_1, \dots, \bmq_s\}$ such that $\bmr_i \in
    C_i$. Since the output of \nameref{subr:approx} is precisely $\scrA$ on
    \Cref{lin:subr3:out}, we conclude that \nameref{subr:approx} correctly
    returns a set $\{\bmr_1, \dots, \bmr_s\} \subset \ratfield^n$ such that
    $\bmr_i \in C_i$, as required.
\end{proof}

\subsection{Correctness of \texorpdfstring{\nameref{algo:smooth}}{}}
\label{subsec:correctsmooth}

We now prove the correctness of \nameref{algo:smooth}, using the results
obtained in the previous subsections.

\begin{theorem}\label{thm:algo1correct}
    Let $0 < \epsproba < 1$. Suppose that $f$ satisfies assumption
    \emph{\assumS}, that the matrix $\changeofvarmatrix$ chosen in
    \emph{\Cref{lin:gen:A}} satisfies \emph{\assumHa} and \emph{\assumHb}, that
    the point $\genericcoordsigma$ chosen in \emph{\Cref{lin:gen:sigma}}
    satisfies \emph{\assumB}, and that all calls to \emph{\ttRatParam} are
    successful. Let $\GamSLP$ be a straight-line program evaluating $f$. Then
    \emph{\nameref{algo:smooth}} applied to $\GamSLP$ and $\epsproba$ correctly
    returns a finite subset of $\ratfield^n$ which contains at least one point
    per connected component of $S$.
\end{theorem}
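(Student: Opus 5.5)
Let $C$ be an arbitrary connected component of $S$. We show that some point of the output lies in $C$, splitting into two cases. In the first case $C$ meets the final point $\changeofvarmatrix(\sigma_1,\dots,\sigma_{n-1},0)^{\T}$ returned on \Cref{lin:gen:output}. In the second case we locate a critical point on the boundary of $C$ and use the subroutines.

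\textbf{Case $\overline{C}\setminus C=\emptyset$.} Then $C$ is open and closed in $\realfield^n$, hence $C=\realfield^n$ and $V\cap\realfield^n=\emptyset$, so any point of $\realfield^n$, in particular the extra point, lies in $C$.

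\textbf{Case $\overline{C}\setminus C\neq\emptyset$.} By \Cref{lem:boundary}, some connected component $B$ of $V\cap\realfield^n$ lies in $\overline{C}\setminus C$. Pass to $B^\changeofvarmatrix$ and take the smallest $k$ for which $\pi_{k-1}(B^\changeofvarmatrix)$ does not contain $(\sigma_1,\dots,\sigma_{k-1})$. Such a $k$ may fail to exist, in which case $B^\changeofvarmatrix$ meets the fibre over all of $\genericcoordsigma$. That fibre is a line $\{(\sigma_1,\dots,\sigma_{n-1},x_n)\}$, and we treat it as the $k=n$ case.

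The core argument is as follows. For the relevant $k$ with $(\sigma_1,\dots,\sigma_{k-1})\in\pi_{k-1}(B^\changeofvarmatrix)$, the fibre $F=B^\changeofvarmatrix\cap\pi_{k-1}^{-1}(\sigma_1,\dots,\sigma_{k-1})$ is non-empty and real algebraic. We want $\pi_k$ restricted to $F$, which is just the coordinate $x_k$, to attain an extremum. Closedness of projections (\Cref{lem:projclosed}), together with Noether position from \cite[Theorem 1]{SS03}, makes $x_k$ proper on the fibre, so $\pi_k(F)$ is closed. There are then two subcases.
\begin{itemize}
\item If $\pi_k(F)$ is bounded, its boundary lies in $\pi_k(\crit(\pi_k,V^\changeofvarmatrix)\cap F)$, again by \cite[Proposition 4]{SS03} applied to the fibre (this is where \assumB\ is needed, via \cite[Theorem 2.2]{EGS20}, to keep the fibre smooth). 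This yields a critical point $\bxi\in F$.
\item If $\pi_k(F)$ is unbounded, the whole of $\realfield$ is hit at level $k$. In that case the fibre at level $k+1$ meets $B^\changeofvarmatrix$, contradicting the choice of the smallest $k$ where the fibre becomes empty.
\end{itemize}
I would therefore define $k$ as the smallest index such that $\pi_k(F)$ is not all of $\realfield$ with a boundary point, or else conclude that the final point lies on $B^\changeofvarmatrix$'s side.

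Concluding from a critical point $\bxi$: \Cref{lem:subrcritcorrect} shows that $\changeofvarmatrix\bxi$ is parametrised by $\scrQ_k$, so $w_k\not\equiv 1$. At a critical point of $\pi_k$ on the fibre, the gradient of $f^\changeofvarmatrix$ is proportional to $e_k$ modulo the first $k-1$ coordinates. Its $k$-th entry is non-zero by smoothness of the fibre, so the direction $\changeofvarmatrix e_k=(a_{1,k},\dots,a_{n,k})$ is not tangent to $V$ at $\changeofvarmatrix\bxi$. Since $\changeofvarmatrix\bxi\in B\subset\overline{C}\setminus C$, \Cref{lem:subrisolcorrect} gives a point of $P_k^\pm$ lying in $C$ at which $f$ does not vanish. \Cref{lem:subrapproxcorrect} then gives a rational approximation of it in $C$.

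\textbf{Main obstacle.} The delicate step is the fibre induction: showing that either some level $k$ yields a boundary critical point of $\pi_k(F)$, or the fibres persist to the end. In the latter case the line over $\genericcoordsigma$ must be shown to behave well: either it meets $B$, in which case $k=n$ has a critical point as an endpoint, or the extra point handles it. I would reconsider the extra point there, arguing that if no level produces a critical point on $\partial C$, then a connectivity argument puts $\changeofvarmatrix(\genericcoordsigma,0)$ in $C$. This requires care with the non-compactness of $C$.
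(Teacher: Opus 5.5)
\paragraph{Comparison with the paper.} Your strategy works, and it takes a genuinely different route from the paper. The paper runs the fibre induction on the open component $C^{\changeofvarmatrix}$. At level $k$ it asks whether the $x_k$-shadow of the slice $C^{\changeofvarmatrix}\cap\pi_{k-1}^{-1}(\sigma_1,\dots,\sigma_{k-1})$ is all of $\realfield$. If this holds at every level, the whole line over $\genericcoordsigma$ lies in $C^{\changeofvarmatrix}$, and the extra output point $\changeofvarmatrix(\sigma_1,\dots,\sigma_{n-1},0)^{\T}$ samples $C$.

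You run the induction instead on the closed non-empty set $B^{\changeofvarmatrix}$, with $B\subset\overline{C}\setminus C$ given by \Cref{lem:boundary}. This always stops at some $k\le n$ with a point of $\crit(\pi_k,V^{\changeofvarmatrix})\cap\pi_{k-1}^{-1}(\sigma_1,\dots,\sigma_{k-1})$ lying on $B^{\changeofvarmatrix}$. It therefore buys two things. First, it shows that the extra point is only needed when $V\cap\realfield^n=\emptyset$. Second, it avoids a step the paper leaves implicit for $k\ge 2$: choosing a boundary component that actually meets the hyperplane $\pi_{k-1}^{-1}(\sigma_1,\dots,\sigma_{k-1})$. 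The paper's implication from $\pi_2(C^{\changeofvarmatrix}\cap\pi_1^{-1}(\sigma_1))\neq\realfield$ to $\pi_2(B^{\changeofvarmatrix}\cap\pi_1^{-1}(\sigma_1))\neq\realfield$ is vacuous if $B^{\changeofvarmatrix}$ misses that hyperplane.

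\paragraph{Repairs needed.} The write-up needs three repairs, none requiring a new idea.

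\emph{The bounded/unbounded split.} This split is wrong. An unbounded closed set such as $[0,\infty)$ is not $\realfield$, so the contradiction you derive does not exist. Use instead the following. At your $k$ (the last level where $F=B^{\changeofvarmatrix}\cap\pi_{k-1}^{-1}(\sigma_1,\dots,\sigma_{k-1})$ is non-empty), one has $\sigma_k\notin x_k(F)$. So $x_k(F)$ is a non-empty proper closed subset of $\realfield$, and it has a boundary point $c=x_k(\bxi)$ with $\bxi\in F$.

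\emph{Closedness and criticality.} Closedness does not come from properness of $x_k$ on the fibre. That properness fails in general for $k<n-1$, since its fibres can be unbounded. Closedness comes from \Cref{lem:projclosed}: $x_k(F)$ is the trace of the closed set $\pi_k(B^{\changeofvarmatrix})$ on the line $\{(\sigma_1,\dots,\sigma_{k-1})\}\times\realfield$.

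That $\bxi$ solves the $k$-th system follows from the implicit function theorem alone. If $\partial f^{\changeofvarmatrix}/\partial x_j(\bxi)\neq 0$ for some $j>k$, then the real fibre near $\bxi$ is a connected graph contained in $B^{\changeofvarmatrix}$, on which $x_k$ covers a neighbourhood of $c$. This contradicts $c$ being a boundary point.

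So \assumB\ is not needed there. It is needed for the non-singularity in \cite[Theorem 2.2]{EGS20}. That non-singularity forces $\partial f^{\changeofvarmatrix}/\partial x_k\neq 0$ at every solution; otherwise the row $\nabla f^{\changeofvarmatrix}$ of the Jacobian would lie in the span of $\bme_1,\dots,\bme_{k-1}$. This gives the non-tangency of $\changeofvarmatrix\bme_k$ at all points of $\scrQ_k$, as required by \Cref{lem:subrisolcorrect}.

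\emph{The terminal case.} Your \emph{main obstacle} is illusory. If $B^{\changeofvarmatrix}$ meets the line over $\genericcoordsigma$, each of its points there solves the $k=n$ system $x_1-\sigma_1=\dots=x_{n-1}-\sigma_{n-1}=f^{\changeofvarmatrix}=0$. That system has no derivative equations, so $k=n$ supplies the required point directly. The branch in which the extra point must handle this case, and the connectivity argument you sketch for it, should simply be removed.
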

\begin{proof}
    If $S$ is empty, there is nothing to prove, so let us suppose that $S \neq
    \emptyset$. If $V \cap \realfield^n = \emptyset$, then $S$ contains a single
    connected component, equal to $\realfield^n$. Since \nameref{algo:smooth}
    always returns at least one point by construction, this case is covered, and
    we can hence assume without loss of generality that $V \cap \realfield^n
    \neq \emptyset$. Moreover, if $f$ is constant, $S$ is also the whole space
    (since we assume it is not empty), and we hence also assume without loss of
    generality that $f$ has positive degree.

    As $S$ and $V \cap \realfield^n$ are not empty, let $C$ be a connected
    component of $S$ and $B$ be a connected component of its boundary, and
    denote by $C^\changeofvarmatrix$ the corresponding connected component of
    $S^\changeofvarmatrix$, and by $B^\changeofvarmatrix$ the corresponding
    connected component of its boundary. Let us assume for now that
    $\pi_1(C^\changeofvarmatrix) \neq \realfield$, which implies
    $\pi_1(B^\changeofvarmatrix) \neq \realfield$. We now show that, in this
    case, the first iteration of the \texttt{for} loop, for $k=1$, computes a
    finite set $\scrA_1 \subset \ratfield^n$ that contains at least one point of
    $C$.

    Because, by assumption, $f$ satisfies \assumS, $\changeofvarmatrix$
    satisfies \assumHa\ and \assumHb, $\genericcoordsigma$ satisfies \assumB,
    and because we assume that all calls to \ttRatParam\ are successful, we can
    apply \Cref{lem:subrcritcorrect} to conclude that \nameref{subr:critpts}
    applied to $\GamSLP, \changeofvarmatrix,1,
    \genericcoordsigma,\left\lceil\log(3n\epsproba^{-1})\right\rceil$, as in
    \Cref{lin:gen:crit}, correctly returns a zero-dimensional rational
    parametrisation $\scrQ_1 = (w_1, \bmv_1, \nu_1)$ of the finite set
    $\changeofvarmatrix(\crit(\pi_1,V^\changeofvarmatrix))$. As $f$ satisfies
    \assumS, we can apply \Cref{lem:boundary} to deduce that $B$ is a connected
    component of $V \cap \realfield^n$. Because $f$ satisfies \assumS\ and
    $\changeofvarmatrix$ satisfies \assumHa, we can apply \Cref{lem:projclosed}
    to deduce that $\pi_1(B^\changeofvarmatrix)$ is closed for the Euclidean
    topology. Since $\pi_1(B^\changeofvarmatrix) \neq \realfield$, we therefore
    deduce that $B^\changeofvarmatrix \cap \crit(\pi_1, V^\changeofvarmatrix)
    \neq \emptyset$. In particular, $\crit(\pi_1, V^\changeofvarmatrix)$ itself
    is non-empty, and hence $w_1 \not\equiv 1$.

    We are thus in the situation of \Cref{lin:gen:iftrivial} of
    \nameref{algo:smooth}. As it is non-empty, let $\bxi \in
    B^\changeofvarmatrix \cap \crit(\pi_1, V^\changeofvarmatrix)$, so that
    $\changeofvarmatrix\bxi$ is one of the points parametrised by $\scrQ_1$.
    Because $\bxi$ is a critical point of $\pi_1$ restricted to
    $V^\changeofvarmatrix$, the line $\ell_{\bxi, \bme_1}^\changeofvarmatrix$,
    defined by its parametrisation \(\bxi + \bme_1\varU\) for $\bme_1 =
    (1,0,\dots,0)^{\T}$ and a new variable $\varU$, is normal to
    $V^\changeofvarmatrix$ at $\bxi$. Therefore, by applying the change of
    variables $\changeofvarmatrix$ and defining $\bma_1 \coloneqq (a_{1,1},
    \dots, a_{n,1})^{\T}$, we deduce that the line $\ell_{\bxi, \bma_1}$,
    defined by its parametrisation $\changeofvarmatrix(\bxi + \bme_1\varU) =
    \changeofvarmatrix\bxi + \bma_1\varU$, is not in the tangent hyperplane
    $\T_{\bxi}V$. 

    Therefore, $\scrQ_1$ is a zero-dimensional rational parametrisation of the
    non-empty set $\changeofvarmatrix(\crit(\pi_1,V^\changeofvarmatrix))$, on
    which $f$ does not vanish, and the line $\ell_{\bxi, \bma_1}$ is not tangent
    to $V$ at $\bxi$. Because $f$ satisfies \assumS\ and $\bma_1$ does not
    depend on the choice of $\bxi$, we can apply \Cref{lem:subrisolcorrect} to
    conclude that \nameref{subr:isol} applied to $\GamSLP$, $\bma_1$ and
    $\scrQ_1$ as in \Cref{lin:gen:Pparam} correctly returns two zero-dimensional
    rational parametrisations $\scrP_1^-$ and $\scrP_1^+$ of non-empty sets
    $P_1^-$ and $P_1^+$, whose union intersects every connected component having
    a point of $\changeofvarmatrix(\crit(\pi_1,V^\changeofvarmatrix))$ in its
    boundary. In particular, as $\bxi \in B^\changeofvarmatrix \cap \crit(\pi_1,
    V^\changeofvarmatrix)$, we have $\changeofvarmatrix\bxi \in B \cap
    \changeofvarmatrix\crit(\pi_1, V^\changeofvarmatrix)$, and hence $P_1^- \cup
    P_1^+$ contains at least one point of $C$. Moreover,
    \Cref{lem:subrisolcorrect} also implies that $f$ does not vanish at any
    point of $P^- \cup P^+$.

    Since neither $P_1^-$ and $P_1^+$ are empty, and $f$ is non-zero at every
    point of $P_1^- \cup P_1^+$, we can apply \Cref{lem:subrapproxcorrect} to
    conclude that \nameref{subr:approx} applied to $\GamSLP$ and $\scrP^-$ and
    $\scrP^+$, as in \Cref{lin:gen:approx-,lin:gen:approx+}, correctly returns
    sets $\scrA_1^-$ and $\scrA_1^+$ of rational approximations of points of
    $P_1^-$ and $P_1^+$ that lie in the same connected components of $S$ as
    their exact counterparts. In particular, since $P_1^- \cup P_1^+$ contains a
    point of $C$, so does $\scrA_1 \coloneqq \scrA_1^- \cup \scrA_1^+$.
    Therefore, the set $\scrA_1$ computed on \Cref{lin:gen:Ak} of
    \nameref{algo:smooth} is a finite subset of $\ratfield^n$ that contains at
    least one point of $C$, as required.

    Let us now suppose that $\pi_1(C^\changeofvarmatrix) = \realfield$. In this
    case, $C^\changeofvarmatrix$ has non-empty intersection with every fibre of
    $\pi_1$, and, in particular, it therefore suffices to compute a point of
    $C^\changeofvarmatrix \cap \pi_1^{-1}(\sigma_1)$ to obtain a point of
    $C^\changeofvarmatrix$, and hence of $C$. If we suppose now that
    $\pi_2\left(C^\changeofvarmatrix \cap \pi_1^{-1}(\sigma_1)\right) \neq
    \realfield$, then this implies $\pi_2\left(B^\changeofvarmatrix \cap
    \pi_1^{-1}(\sigma_1)\right) \neq \realfield$. Because $f$ satisfies \assumS,
    $\changeofvarmatrix$ satisfies \assumHa\ and \assumHb, $\genericcoordsigma$
    satisfies \assumB, and because we assume that all calls to \ttRatParam\ are
    successful, we can apply the same argument as before to deduce that the
    \texttt{for} loop in the $k=2$ case computes a finite set $\scrA_2 \subset
    \ratfield^n$ that contains at least one point of $C$. Otherwise, we have
    $\pi_2\left(C^\changeofvarmatrix \cap \pi_1^{-1}(\sigma_1)\right) =
    \realfield$, and it hence suffices to consider $C^\changeofvarmatrix \cap
    \pi_2^{-1}(\sigma_1, \sigma_2)$. We repeat this procedure, incrementing $k$
    each time.

    By the end of the $n$-th and last iteration of the \texttt{for} loop, we
    have therefore obtained finite sets $\scrA_1, \dots, \scrA_n \subset
    \realfield^n$ such that, for any connected component $C$ of $S$, either
    $\bigcup_{i=1}^n \scrA_i$ contains a point of $C$, or $C$ satisfies
    $\pi_n\left(C^\changeofvarmatrix \cap \pi_{n-1}^{-1}(\genericcoordsigma)
    \right) = \realfield$. In the latter case, this implies in particular that
    $C^\changeofvarmatrix \cap \pi_n^{-1}(\sigma_1, \dots, \sigma_{n-1},0) =
    C^\changeofvarmatrix \cap (\sigma_1, \dots, \sigma_{n-1},0)$ is non-empty,
    and therefore that $(\sigma_1, \dots, \sigma_{n-1},0) \in
    C^\changeofvarmatrix$ and hence that $\changeofvarmatrix(\sigma_1, \dots,
    \sigma_{n-1},0)^{\T} \in C$. We therefore conclude that, for any connected
    component $C$ of $S$, the finite set $\bigcup_{i=1}^n \scrA_i \cup
    \{\changeofvarmatrix(\sigma_1, \dots, \sigma_{n-1},0)^{\T}\} \subset
    \ratfield^n$ contains at least one point of $C$. As this set is precisely
    the output of \nameref{algo:smooth} on \Cref{lin:gen:output}, we conclude
    that \nameref{algo:smooth} correctly returns a finite subset of
    $\ratfield^n$ that contains at least one point per connected component of
    $S$, as required.
\end{proof}

\section{Probability of success}\label{sec:proba}

In this section, we compute the probability of success of
\nameref{algo:smooth} for a given polynomial $f$ that satisfies \assumS\ and a
given $0<\epsproba<1$. 

Our probabilistic analysis follows from ideas and results by
\cite{EGS23,EGS20,KPS01}, where we aim to find the degree of a polynomial
describing a hypersurface which contains the closed Zariski sets we wish to
avoid. Once this degree obtained, we can compute the probability of a randomly
chosen point to not cancel this polynomial, and hence of not being in the
aforementioned closed Zariski sets, using the so-called
\emph{DeMillo-Lipton-Schwartz-Zippel Lemma} \cite{Schwartz80}, which we recall
below for the reader's convenience.

\noindent DeMillo-Lipton-Schwartz-Zippel Lemma \cite{Schwartz80}: \emph{Let $R$
be an integral domain and $p \in R[x_1, \dots, x_n]$ be a non-zero polynomial.
Let $T \subset R$ be finite such that $|T|>\deg(p)$, and let $(x_1, \dots, x_n)$
be a tuple of values selected uniformly at random from $T$. Then the probability
that $p(x_1, \dots, x_n) = 0$ is bounded above by $\frac{\deg(p)}{|T|}$}.

By \Cref{thm:algo1correct}, as the given $f$ is assumed to satisfy \assumS,
\nameref{algo:smooth} succeeds when we have correctly chosen $\changeofvarmatrix
\in \ratfield^{n\times n}$ satisfying \assumHa\ and \assumHb,
$\genericcoordsigma \in \ratfield^{n-1}$ satisfying \assumB, and when all calls
to \ttRatParam\ are successful. The probabilities of correctly satisfying
\assumHa\ and \assumB\ were already studied in \cite{EGS20}, and we recall its
key results:

\noindent \cite[Theorem 2.1]{EGS20}: \emph{There exists a non-zero polynomial $h
\in \compfield[\frakA]$ of degree at most $5n^3(2d)^{2n}$ such that, if
$h(\changeofvarmatrix) \neq 0$, then $\changeofvarmatrix$ satisfies} \assumHa.

\noindent \cite[Theorem 2.2]{EGS20}: \emph{There exists a non-zero polynomial $g
\in \compfield[\frakS]$ of degree at most $nd^{2n}$ such that, if
$g(\genericcoordsigma) \neq 0$, then $\genericcoordsigma$ satisfies} \assumB.

Combining these results with the DeMillo-Lipton-Schwartz-Zippel Lemma yields the
following propositions. Note that all these proofs follow a similar structure to
\cite[Section 5]{EGS20}.

\begin{proposition}\label{prop:Aproba} 
    Suppose that $0 < \epsproba < 1$ and that the $n \times n$ matrix
    $\changeofvarmatrix$ is randomly drawn with coefficients within the set
    $\setD \coloneqq \left\{1, \dots, \left\lceil 3\epsproba^{-1}
    \left(5n^3(2d)^{2n}+\frac{n^2-n}{2}\right)\right\rceil \right\}$. Then, the
    probability that $\changeofvarmatrix$ satisfies \emph{\assumHa} and
    \emph{\assumHb} is at least $1 - \epsproba/3$.
\end{proposition}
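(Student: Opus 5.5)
The plan is to build a single non-zero polynomial in $\compfield[\frakA]$ whose non-vanishing at $\changeofvarmatrix$ forces both \assumHa\ and \assumHb\ to hold. Its degree will be at most $5n^3(2d)^{2n} + \frac{n^2-n}{2}$. The DeMillo--Lipton--Schwartz--Zippel Lemma, applied with the set $\setD$, then gives a failure probability of at most that degree divided by $|\setD|$, which is at most $\epsproba/3$.

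\textbf{Step 1 (\assumHa).} By \cite[Theorem 2.1]{EGS20}, there is a non-zero $h \in \compfield[\frakA]$ with $\deg(h) \leq 5n^3(2d)^{2n}$ such that $h(\changeofvarmatrix) \neq 0$ implies \assumHa. This includes invertibility, since $\calZ$ is removed from $\GL_n(\compfield)$. If one worries that $h$ does not itself certify invertibility, multiply it by $\det$. I expect, however, that the bound of \cite{EGS20} already accounts for this.

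\textbf{Step 2 (\assumHb).} For $1 \leq k \leq n$, let $\lowMat_k$ be the lower-right $(n-k)\times(n-k)$ block. Set $g_k \coloneqq \det(\lowMat_k)$, viewed as a polynomial in the entries $\frakA$. It has degree $n-k$ and is non-zero, since it takes the value $1$ at the identity matrix. For $k = n$ the block is empty and the condition holds vacuously. Let $p \coloneqq \prod_{k=1}^{n-1} g_k$, which has degree $\sum_{k=1}^{n-1}(n-k) = \frac{n^2-n}{2}$. If $p(\changeofvarmatrix) \neq 0$, then every $\lowMat_k$ is invertible, which is exactly \assumHb.

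\textbf{Step 3 (combining).} Set $H \coloneqq h\,p$. Since $\compfield[\frakA]$ is an integral domain, $H$ is non-zero, and
\[\deg(H) \leq 5n^3(2d)^{2n} + \frac{n^2-n}{2}.\]
The entries of $\changeofvarmatrix$ are drawn uniformly and independently from $\setD \subset \Zintegers$, with
\[|\setD| = \left\lceil 3\epsproba^{-1}\deg\text{-bound}\right\rceil > \deg(H),\]
because $\epsproba < 1$. The lemma therefore applies over $R = \compfield$ with $T = \setD$, and
\[\probaP\big(H(\changeofvarmatrix) = 0\big) \leq \frac{\deg(H)}{|\setD|} \leq \frac{\epsproba}{3}.\]
Whenever $H(\changeofvarmatrix) \neq 0$, both $h(\changeofvarmatrix) \neq 0$ and $p(\changeofvarmatrix) \neq 0$, so \assumHa\ and \assumHb\ hold. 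This gives probability at least $1 - \epsproba/3$.

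The only delicate point is in Step 1. One has to check that the polynomial $h$ from \cite{EGS20} really certifies membership in $\GL_n(\compfield)\setminus\calZ$, including invertibility. If it does not, one must absorb an extra factor $\det(\changeofvarmatrix)$ of degree $n$. That factor fits inside the slack of the stated bound only if it is already accounted for in $5n^3(2d)^{2n}$, so I would rely on the statement of \cite[Theorem 2.1]{EGS20} as recalled in the paper.
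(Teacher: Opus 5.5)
Your proposal is correct and follows the paper's proof essentially verbatim: the paper also multiplies the polynomial $h$ from \cite[Theorem 2.1]{EGS20} by the product $p$ of the determinants of the lower-right blocks $\lowMat_k$, which has degree at most $\frac{n^2-n}{2}$, and then applies the DeMillo--Lipton--Schwartz--Zippel Lemma with $T=\setD$. Your worry about invertibility is settled by the recalled statement itself: \assumHa\ is defined as $\changeofvarmatrix \in \GL_n(\compfield)\setminus\calZ$, so $h(\changeofvarmatrix)\neq 0$ already certifies invertibility and no extra $\det$ factor is needed.
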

\begin{proof}
    Let $\epsproba$ and $\setD$ be as in the statement. For $\changeofvarmatrix$
    to satisfy \assumHb, the determinant of each bottom right $(n-k) \times
    (n-k)$ submatrix of $\changeofvarmatrix$ must be non-zero, for $1 \leq k
    \leq n$. As each such determinant is a non-zero polynomial of degree at most
    $n-k$ in the variables $\frakA$, we conclude, by taking the product of
    all these polynomials, that there exists a non-zero polynomial $p \in
    \compfield[\frakA]$ of degree at most $\sum_{k=1}^n (n-k) =
    \frac{n^2-n}{2}$ such that $\changeofvarmatrix$ satisfies \assumHb\ whenever
    $p(\changeofvarmatrix) \neq 0$.
    
    On the other hand, by \cite[Theorem 2.1]{EGS20}, there exists a non-zero $h
    \in \compfield[\frakA]$ of degree at most $5n^3(2d)^{2n}$ such that,
    whenever $h(\changeofvarmatrix) \neq 0$, $\changeofvarmatrix$ satisfies
    \assumHa. Therefore, there exists a non-zero polynomial $q \coloneqq ph \in
    \compfield[\frakA]$ of degree at most $5n^3(2d)^{2n} +
    \frac{n^2-n}{2}$ such that, whenever $q(\changeofvarmatrix) \neq 0$,
    $\changeofvarmatrix$ satisfies \assumHa\ and \assumHb.
    
    Thus, if we pick $\changeofvarmatrix$ with arbitrary coefficients
    in $\setD$, by the DeMillo-Lipton-Schwartz-Zippel Lemma, since
    $0<\epsproba<1$, the probability that a randomly chosen $\changeofvarmatrix$
    cancels $q$ is
    \[\probaP[q(\changeofvarmatrix) = 0] \leq \frac{\deg q}{|\setD|} =
    \frac{5n^3(2d)^{2n} + \frac{n^2-n}{2}}{\left\lceil
    3\epsproba^{-1}\left(5n^3(2d)^{2n}+\frac{n^2-n}{2}\right)\right\rceil} \leq
    \frac{\epsproba}{3}.\]
    Hence, the probability that a randomly drawn $\changeofvarmatrix \in
    \setD^{n \times n}$ satisfies \assumHa\ and \assumHb\ is at least $1-
    \epsproba/3$, as required.
\end{proof}

\begin{proposition}\label{prop:Sigmaproba} 
    Suppose that $0 < \epsproba < 1$ and that the point $\genericcoordsigma$ is
    randomly drawn with coefficients within the set $\setE \coloneqq \left\{1,
    \dots, \left\lceil 3\epsproba^{-1}nd^{2n} \right\rceil \right\}$. Then the
    probability that $\genericcoordsigma$ satisfies \emph{\assumB}\ is at least
    $1 - \epsproba/3$.
\end{proposition}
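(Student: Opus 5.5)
The plan mirrors the proof of \Cref{prop:Aproba}, now relying only on \cite[Theorem 2.2]{EGS20}. That result provides a non-zero polynomial $g \in \compfield[\frakS]$ of degree at most $nd^{2n}$ such that $\genericcoordsigma$ satisfies \assumB\ whenever $g(\genericcoordsigma) \neq 0$. Since \assumB\ is defined through $\calV = \V(g)$, this is immediate.

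\textbf{Step 1: bound the failure probability.} The event "$\genericcoordsigma$ fails \assumB" is contained in the event "$g(\genericcoordsigma) = 0$". We draw the $n-1$ coordinates of $\genericcoordsigma$ uniformly and independently from the finite set $\setE \subset \Zintegers \subset \compfield$.

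\textbf{Step 2: check the hypothesis of the lemma.} To apply the DeMillo-Lipton-Schwartz-Zippel Lemma with $R = \compfield$ and $T = \setE$, we need $|\setE| > \deg(g)$. This holds because
\[|\setE| = \left\lceil 3\epsproba^{-1}nd^{2n}\right\rceil \ge 3nd^{2n} > nd^{2n} \ge \deg(g),\]
using $0 < \epsproba < 1$.

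\textbf{Step 3: apply the lemma.} It gives
\[\probaP[g(\genericcoordsigma) = 0] \le \frac{\deg g}{|\setE|} \le \frac{nd^{2n}}{\left\lceil 3\epsproba^{-1}nd^{2n}\right\rceil} \le \frac{\epsproba}{3}.\]
Taking complements, $\genericcoordsigma$ satisfies \assumB\ with probability at least $1-\epsproba/3$.

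There is no real obstacle here. The only points needing care are the degenerate case $d = 0$ and the size condition of the lemma. When $d = 0$, $f$ is constant, so the degree bound reads $0$ and $g$ is a non-zero constant; the claim is then trivial, since $g$ never vanishes. The size condition $|T| > \deg(p)$ is settled by Step 2.
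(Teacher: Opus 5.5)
Your proposal is correct and follows the paper's proof: you take the non-zero $g\in\compfield[\frakS]$ of degree at most $nd^{2n}$ from \cite[Theorem 2.2]{EGS20} and apply the DeMillo--Lipton--Schwartz--Zippel Lemma with $T=\setE$, which gives a failure probability of at most $nd^{2n}/\lceil 3\epsproba^{-1}nd^{2n}\rceil\le\epsproba/3$. Your explicit check that $|\setE|>\deg(g)$ adds a detail the paper leaves implicit, but your $d=0$ remark is slightly off: in that case $\setE=\{1,\dots,0\}$ is empty, so the statement is vacuous rather than trivial (the paper does not treat this degenerate case either).
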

\begin{proof}
    Let $\epsproba$ and $\setE$ be as in the statement. By \cite[Theorem
    2.2]{EGS20}, there exists non-zero polynomial $g \in \compfield[\frakS]$ of
    degree at most $nd^{2n}$ such that, if $g(\genericcoordsigma) \neq 0$, then
    $\genericcoordsigma$ satisfies \assumB. By the
    DeMillo-Lipton-Schwartz-Zippel Lemma, since $0<\epsproba<1$, the probability
    that a randomly chosen $\genericcoordsigma$ cancels $g$ is
    \[\probaP[g(\genericcoordsigma) = 0] \leq \frac{\deg g}{|\setE|} =
    \frac{nd^{2n}}{\left\lceil 3\epsproba^{-1}nd^{2n}\right\rceil} 
    \leq \frac{\epsproba}{3}.\]
    Therefore, the probability that a randomly drawn $\genericcoordsigma \in
    \setE^{n-1}$ satisfies \assumB\ is at least $1 - \epsproba/3$.
\end{proof}

On the other hand, the probability of success of a single call to
\ttRatParam\ was studied in \cite[Theorem 1]{SS18}, and is at least $21/32$.
Combining all these results yields:

\begin{theorem}\label{thm:alg1proba} 
    Suppose that $0<\epsproba<1$, that $f \in \ratfield[x_1, \dots, x_n]$
    satisfies \emph{\assumS}, and that $\GamSLP$ is a straight-line program
    evaluating $f$. Then, the probability of success of
    \emph{\nameref{algo:smooth}} applied to $\GamSLP$ and $\epsproba$ is at
    least $1-\epsproba$.
\end{theorem}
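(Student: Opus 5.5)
By \Cref{thm:algo1correct}, since $f$ satisfies \assumS, \nameref{algo:smooth} succeeds as soon as three events hold simultaneously: the event $E_1$ that $\changeofvarmatrix$ drawn in \Cref{lin:gen:A} satisfies \assumHa\ and \assumHb; the event $E_2$ that $\genericcoordsigma$ drawn in \Cref{lin:gen:sigma} satisfies \assumB; and the event $E_3$ that all $n$ calls to \ttRatParam\ (one per iteration of the \texttt{for} loop, through \nameref{subr:critpts}) succeed. The plan is therefore to bound the failure probability of each event by $\epsproba/3$ and conclude with a union bound:
\[\probaP[\text{failure}] \leq \probaP[\overline{E_1}] + \probaP[\overline{E_2}] + \probaP[\overline{E_3}] \leq \epsproba.\]

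For $E_1$ and $E_2$, I would invoke \Cref{prop:Aproba} and \Cref{prop:Sigmaproba} directly. The sets $\setD$ and $\setE$ in the algorithm are exactly those in the propositions, so each event fails with probability at most $\epsproba/3$.

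For $E_3$, I would argue conditionally on $E_1\cap E_2$, which guarantees that each system $\calP_k$ is zero-dimensional with only regular solutions, so that \ttRatParam\ is in its intended setting. For each $k$, \procCrit\ is called with $r = \lceil\log(3n\epsproba^{-1})\rceil$. By the description of \ttRatParam\ (via \cite[Theorem 1]{SS18}), the call fails with probability at most
\[(11/32)^r \leq 2^{-r} \leq \frac{\epsproba}{3n},\]
using $11/32 < 1/2$. A union bound over the $n$ iterations then gives $\probaP[\overline{E_3}\mid E_1\cap E_2]\leq \epsproba/3$.

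To combine the pieces, write
\[\probaP[\text{success}] \geq \probaP[E_1\cap E_2]\cdot\probaP[E_3\mid E_1\cap E_2] \geq (1-2\epsproba/3)(1-\epsproba/3) \geq 1-\epsproba,\]
or equivalently apply the plain union bound.

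The main obstacle is the justification for $E_3$. The randomness inside \ttRatParam\ (the calls to the prime oracle $\scrO$) has to be independent of the choices of $\changeofvarmatrix$ and $\genericcoordsigma$. The per-call bound $(11/32)^r$ also has to hold for every fixed admissible input system, so that conditioning is legitimate. I would state this explicitly, observing that the success bound of \cite{SS18} is uniform over zero-dimensional inputs. I would also note that repeated attempts can be validated by degree, as recalled in the description of \ttRatParam, so the $r$ repetitions really do amplify the success probability.
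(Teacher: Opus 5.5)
Your proposal is correct and follows the paper's proof. Both reduce to \Cref{thm:algo1correct}, bound each of the three failure events by $\epsilon/3$ using \Cref{prop:Aproba}, \Cref{prop:Sigmaproba}, and $\lceil\log(3n\epsilon^{-1})\rceil$ repetitions of the parametrisation routine (each failing with probability below $1/2$), and then combine the bounds. The paper multiplies them as $(1-\epsilon/3)^3$ with Bernoulli's inequality, while you union-bound and condition on the choice of the matrix and the specialisation point, which makes explicit the independence the paper's product uses implicitly.
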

\begin{proof}
    Because $f$ satisfies \assumS, by \Cref{thm:algo1correct},
    \nameref{algo:smooth} is correct when the matrix $\changeofvarmatrix$ chosen
    in \Cref{lin:gen:A} satisfies \assumHa\ and \assumHb, the point
    $\genericcoordsigma$ chosen in \Cref{lin:gen:sigma} satisfies \assumB, and
    when all calls to \ttRatParam\ are correct.

    As, in \Cref{lin:gen:A}, the matrix $\changeofvarmatrix$ is picked with
    randomly drawn coefficients in the set $\setD \coloneqq \left\{1, \dots,
    \left\lceil 3\epsproba^{-1}
    \left(5n^3(2d)^{2n}+\frac{n^2-n}{2}\right)\right\rceil \right\}$, we can
    apply \Cref{prop:Aproba} to deduce that it has a probability of at least $1
    - \epsproba/3$ of satisfying \assumHa\ and \assumHb.
    Similarly, in \Cref{lin:gen:sigma}, the point $\genericcoordsigma$ is picked
    with randomly draw coefficients in the set $\setE \coloneqq \left\{1,
    \dots, \left\lceil 3\epsproba^{-1}nd^{2n} \right\rceil \right\}$, we can
    apply \Cref{prop:Sigmaproba} to deduce that it has a probability of at least
    $1-\epsproba/3$ of satisfying \assumB.
    Finally, the probability of success of a single call to \ttRatParam\ is
    at least $\frac{21}{32} > \frac{1}{2}$, by \cite[Theorem 1]{SS18}, and we
    need to call it on $n$ distinct systems, once per iteration of the
    \texttt{for} loop. Hence, if we call \ttRatParam\ 
    $\left\lceil\log(3n\epsproba^{-1})\right\rceil$ times per iteration, the
    probability that it returns a correct result in every iteration is therefore
    at least
    \[\left(1 - 2^{-\left\lceil\log(3n\epsproba^{-1})\right\rceil}\right)^{n}
    \geq \left(1 - \frac{\epsproba}{3n}\right)^n \geq 1 - \frac{\epsproba}{3}\]
    by Bernoulli's inequality, since $0 < \epsproba < 1$.
    
    Therefore, the probability that $\changeofvarmatrix$ satisfies \assumHa\ and
    \assumHb, $\genericcoordsigma$ satisfies \assumB, and that all calls to
    \ttRatParam\ are correct, since $0<\epsproba<1$, is at least
    \[\left(1-\frac{\epsproba}{3}\right)\left(1-\frac{\epsproba}{3}\right)\left(
    1-\frac{\epsproba}{3}\right) \geq 1 - \epsproba.\]
    Because $f$ satisfies \assumS, by
    applying \Cref{thm:algo1correct}, we therefore conclude that
    \nameref{algo:smooth} correctly returns a finite subset of $\ratfield^n$
    containing at least one point per connected component of $S$ with
    probability at least $1-\epsproba$, as required.
\end{proof}

\section{Bit complexity: proofs of \texorpdfstring{\Cref{thm:mainmulti,thm:mainresult,cor:mainresult}}{}}\label{sec:comp}

We now analyse the bit complexity of \nameref{algo:smooth}. We use $O$ and
$\widetilde{O}$ to indicate the bit cost of an operation with and without
logarithmic factors respectively. For the remainder of this section, we suppose,
to simplify the statements of the results, that the degree of the input
polynomial is at least $2$, since computing points per connected components in
the linear case reduces to elementary computations.

\subsection{Bit Complexity of \texorpdfstring{\nameref{subr:critpts}}{}}
\label{subsec:critptscomp}

We begin with the bit complexity of \nameref{subr:critpts}. As its core is the
subroutine \ttRatParam, we recall for convenience the main result of \cite{SS18}
regarding its bit complexity:

\noindent \cite[Theorem 1]{SS18}: \emph{Let $(f_1, \dots, f_n)$ be a tuple of
polynomials in $\ratfield[\bmx_1, \dots, \bmx_m]$, where $\bmx_i = (x_{i,1},
\dots, x_{i,n_i})$, and $n = n_1 + \dots + n_m$. Suppose that $\V(f_1, \dots,
f_n)$ is zero-dimensional. Suppose that $\deg_{\bmx_j}(f_i) \leq
\widetilde{d}_{i,j}$, that $\htt(f_i) \leq s_i$, and that there exists a
straight-line program $\GamSLP$ of length $L$ evaluating $(f_1, \dots, f_n)$,
using constants of heights at most $b$. Then there exists an algorithm
\emph{\ttRatParam} which produces either:
\begin{itemize}
    \item a zero-dimensional rational parametrisation of the non-singular
    complex solutions to $f_1 = \dots = f_n = 0$,
    \item or a zero-dimensional rational parametrisation of a subset of the
    solutions,
    \item or \emph{\texttt{fail}}.
\end{itemize}
The first outcome occurs with probability at least $21/32$, and in any case, the
algorithm performs}
\[\widetilde{O}\left(Lb + \scrC_{\bmn}(\bmd) \scrH_{\bmn}(\bmeta, \bmd)
\left(L + n\mathfrak{d} + n^2\right)n(\log(s)+n)\right)\]
\emph{bit operations, where
\begin{itemize}
    \item $\mathfrak{d} \coloneqq \max_{1\leq i\leq n}\left(\widetilde{d}_{i,1}
    + \dots + \widetilde{d}_{i,m}\right)$,
    \item $s \coloneqq \max_{1\leq i\leq n}(s_i)$,
    \item $\bmeta \coloneqq \left(s_i + \sum_{j=1}^m
    \log(n_j+1)\widetilde{d}_{i,j} \right)_{1\leq i\leq n}$,
    \item $\scrC_{\bmn}(\bmd)$ is the sum of the coefficients of the polynomial
    \[\prod_{i=1}^n \left(\widetilde{d}_{i,1}\theta_1 + \dots +
    \widetilde{d}_{i,m}\theta_m\right) \mod \left<\theta_1^{n_1+1}, \dots,
    \theta_m^{n_m+1}\right>,\]
    \item and $\scrH_{\bmn}(\bmeta, \bmd)$ is the sum of the  coefficients of
    the polynomial
    \[\prod_{i=1}^n \left(\eta_i\zeta + \widetilde{d}_{i,1}\theta_1 + \dots +
    \widetilde{d}_{i,m} \theta_m\right) \mod \left<\zeta^2, \theta_1^{n_1+1},
    \dots, \theta_m^{n_m+1}\right>.\]
\end{itemize}
The algorithm calls an oracle $\scrO$, taking as input a positive integer $B$
and returning a prime number in $\{B+1, \dots, 2B\}$, with the input
$s\mathfrak{d}^{O(n)}$. Moreover, upon success, any polynomial in the output has
degree at most $\scrC_{\bmn}(\bmd)$ and height in}
$\widetilde{O}\big(\scrH_{\bmn}(\bmeta, \bmd) + n\scrC_{\bmn}(\bmd)\big)$.

This result gives us the bit complexity \ttRatParam\ subroutine, provided that
we know the length of the input straight-line program. This is given to us by
the following result:

\noindent Baur-Strassen Theorem \cite[Theorem 1]{BS83}: \emph{Let $f \in
\ratfield[x_1, \dots, x_n]$, and suppose that there exists a straight-line
program $\GamSLP$ of length $L$ evaluating $f$. Then, there exists a
straight-line program $\LamSLP$ of length at most $3L$ evaluating the
polynomials $\left(f, \frac{\partial f}{\partial x_1}, \dots, \frac{\partial
f}{\partial x_n}\right)$.}

We now have all the tools to express the bit complexity of
\nameref{subr:critpts}. To apply \cite[Theorem 1]{SS18} in its full generality,
we suppose in this section that the input polynomial $f$ has a multi-homogeneous
structure, that is, $f \in \ratfield[\bmx_1, \dots, \bmx_m]$, where $\bmx_i =
\left(x_{1 +\sum_{j=1}^{i-1}n_j}, \dots, x_{n_i + \sum_{j=1}^{i-1}n_j}\right)$,
$n = n_1 + \dots + n_m$, and $\deg_{\bmx_i}(f) \leq d_i$. We moreover assume
that $d_1 \leq \cdots \leq d_m$, since this can always be ensured by simply
re-labelling the groups of variables. 
To simplify notation, for given $\changeofvarmatrix \in \ratfield^{n\times n}$,
$\genericcoordsigma \in \ratfield^{n-1}$, and $f$ as above, we also define:
\begin{itemize}
    \item $d \coloneqq \deg(f)$,
    \item $\delta_{c,\ell,k} \coloneqq \deg_{{\bmx_c}}\Big(
    \frac{\partial f}{\partial x_{\ell}} + \sum_{i=1}^k
    \bigg(\sum_{j=1}^{n-k}a_{i,(j+k)}\lowEnt_{j,\ell-k,k}\bigg) \frac{\partial
    f}{\partial x_i}\Big)$, for $1 \leq c \leq m$, $1 \leq k \leq n$, and $\ell
    > k$,
    \item $\tau \coloneqq \htt(f)$,
    \item $\alpha \coloneqq \max_{1\leq i,j \leq n}\htt(a_{i,j})$,
    \item $\sigma \coloneqq \max_{1\leq i\leq n-1} \htt(\sigma_i)$,
    \item $\approxheight \coloneqq n\alpha +\tau+d\log(n)$,
    \item $\multitotaldegree$ as the sum of the coefficients of the
    polynomial
    \[(d_1\theta_1 + \dots + d_m\theta_m)\prod_{i=2}^{n}
    \left(\delta_{1,i,1}\theta_1 + \dots + \delta_{m,i,1}\theta_m\right) \mod
    \left<\theta_1^{n_1+1}, \dots, \theta_m^{n_m+1}\right>,\]
    \item $\multitotalheight$ as the sum of the coefficients of the polynomial
    \begin{align*}
        &(\approxheight\zeta + d_1\theta_1 + \dots + d_m\theta_m)
        \prod_{i=2}^n \left(\approxheight\zeta + \delta_{1,i,1}\theta_1 + 
        \dots + \delta_{m,i,1}\theta_m\right) \\
        &\mod \left<\zeta^2, \theta_1^{n_1+1}, \dots, \theta_m^{n_m+1}\right>.
    \end{align*}
\end{itemize}

\begin{lemma}\label{lem:subrcritcomp}
    Suppose that $f \in \ratfield[\bmx_1, \dots, \bmx_m]$ satisfies
    \emph{\assumS}, $\changeofvarmatrix \in \ratfield^{n\times n}$ satisfies
    \emph{\assumHa} and \emph{\assumHb}, and that $\genericcoordsigma \in
    \ratfield^{n-1}$ satisfies \emph{\assumB}. Let $1\leq k\leq n$ and $r \in
    \Nintegers \setminus \{0\}$, and suppose that $\sigma < \alpha$. Let
    $\GamSLP$ be a straight-line program of length $L$ evaluating $f$.
    Then \emph{\nameref{subr:critpts}} applied to $\GamSLP$,
    $\changeofvarmatrix$, $k$, $\genericcoordsigma$ and $r$ performs at most
    \[\widetilde{O}\left(rn^2\multitotaldegree\multitotalheight
    (L+nd+n^3)\right)\]
    bit operations. Moreover, upon success, any polynomial in the output has
    degree at most $\multitotaldegree$ and height in
    $\widetilde{O}\left(\multitotalheight + n\multitotaldegree\right)$.
\end{lemma}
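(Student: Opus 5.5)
The plan is to read off the cost of the two lines of \nameref{subr:critpts} and feed the parameters of the system $\calP_k$ into \cite[Theorem 1]{SS18}.

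\textbf{Line 1.} By the Baur--Strassen theorem \cite[Theorem 1]{BS83}, from $\GamSLP$ we obtain in $O(L)$ operations a straight-line program of length at most $3L$ evaluating $f$ and all $\partial f/\partial x_i$. The entries $\lowEnt_{j,\ell,k}$ of $\lowMat_k^{-1}$ and the first $k-1$ rows of $\changeofvarmatrix^{-1}$ are computed by Gaussian elimination in $\widetilde{O}(n^3)$ arithmetic operations. The resulting constants have heights in $\widetilde{O}(n\alpha)$ by Hadamard's bound, so the bit cost of this step is negligible. The coefficients $\sum_j a_{i,(j+k)}\lowEnt_{j,\ell-k,k}$ are then formed, and we append
\begin{itemize}
\item $O(nk)$ operations for the linear combinations of partial derivatives, and
\item $O(n^2)$ operations for the linear forms $\sum_i \lowEnt_{j,i,0}x_i-\sigma_j$.
\end{itemize}
This gives $\LamSLP$ of length $L'\in O(L+n^2)$ using constants of height $b\in\widetilde{O}(n\alpha+\sigma)=\widetilde{O}(n\alpha)$, since $\sigma<\alpha$.

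\textbf{Line 2: input parameters for \cite[Theorem 1]{SS18}.} The system has $n$ equations in $n$ unknowns and is zero-dimensional by \Cref{lem:subrcritcorrect}. Its multidegree data are as follows.
\begin{itemize}
\item The $k-1$ linear forms have degree $1$ in each block. The equation $f$ has degrees $d_c$, and the remaining equations have degrees $\delta_{c,\ell,k}$.
\item The equation heights are $s_i\in\widetilde{O}(\tau+n\alpha+\log d)$: a partial derivative has height at most $\tau+\log d$, and multiplying by the constants and adding $k$ terms adds $\widetilde{O}(n\alpha)$.
\item Hence $\eta_i=s_i+\sum_c\log(n_c+1)\widetilde{d}_{i,c}\in\widetilde{O}(n\alpha+\tau+d\log n)=\widetilde{O}(\approxheight)$.
\item $\mathfrak{d}\le nd$, since the block degrees sum to at most $\sum_c d_c\le nd$ (one can even use $d$ after noting the block degrees are bounded by a total degree). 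We keep $nd$, matching the stated $L+nd+n^3$.
\end{itemize}

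\textbf{Main obstacle: comparing $\scrC_{\bmn}$ and $\scrH_{\bmn}$ with $\multitotaldegree$ and $\multitotalheight$.} The quantities $\scrC_{\bmn},\scrH_{\bmn}$ for this system must be bounded by $\multitotaldegree,\multitotalheight$, which are defined using the $k=1$ structure. For $k=1$ the system is exactly $f$ together with the $n-1$ polars of degrees $\delta_{c,i,1}$, so equality holds up to the $\widetilde{O}$ in $\approxheight$. For $k>1$, I would argue that the linear equations factors $(\theta_1+\dots+\theta_m)$-type linear factors, and each such factor has coefficient sum no larger than a factor it replaces. The replaced polar factors $\delta_{\cdot,\ell,1}\theta$ are non-zero linear forms with non-negative integer degrees, since the polars genuinely depend on the variables. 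I would also use that the $\delta_{c,\ell,k}$ agree with the $\delta_{c,\ell,1}$ for generic $\changeofvarmatrix$: both equal the maximum of $\deg_{\bmx_c}\partial f/\partial x_i$ over the relevant $i$. Monotonicity of the coefficient sums in the linear factors then gives $\scrC\le\multitotaldegree$ and $\scrH\in\widetilde{O}(\multitotalheight)$. This is the step I expect to need care, especially the generic-degree identification. If it fails, I would fall back to bounding by the $\max_k$ version, which the paper presumably absorbs in its notation.

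\textbf{Conclusion.} Plugging these parameters in, one call costs
\[
\widetilde{O}\bigl(L'b+\multitotaldegree\multitotalheight(L+n^2+n^2d+n^2)\,n(\log s+n)\bigr).
\]
The term $L'b$ is dominated. Since $\log s$ is logarithmic, the factor $n(\log s+n)$ contributes $n^2$ up to logs, and $n^2 d\le n\cdot nd$. Rewriting gives $\widetilde{O}(n^2\multitotaldegree\multitotalheight(L+nd+n^3))$; the stated $L+nd+n^3$ already absorbs the extra $n^2$ terms. Repeating $r$ times multiplies the cost by $r$. The output degree and height bounds are read directly from \cite[Theorem 1]{SS18} with the same comparison.
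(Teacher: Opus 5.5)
Your route is the paper's: Baur--Strassen to build $\LamSLP$, read off the parameters of \cite[Theorem 1]{SS18}, reduce to $k=1$, and multiply by $r$. Precomputing the constants $\sum_j a_{i,(j+k)}\lowEnt_{j,\ell-k,k}$ outside the program, which gives length $O(L+n^2)$ instead of the paper's $O(L+n^3)$, is harmless. However, two steps fail as written.

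\textbf{The final absorption.} The factor in \cite[Theorem 1]{SS18} is $L'+n\mathfrak{d}+n^2$. With your choice $\mathfrak{d}\le nd$ it contains $n^2d$, and $n^2d\notin O(L+nd+n^3)$ (take $d\approx n^2$ and $L$ small). Writing $n^2d=n\cdot nd$ does not place it inside $L+nd+n^3$. What your parameters actually yield is $\widetilde{O}(rn^2\multitotaldegree\multitotalheight(L+n^2d+n^2))$, not the stated bound. The term $nd$ in the statement is $n\mathfrak{d}$ with $\mathfrak{d}=d$, which is what the paper takes. You therefore need $\mathfrak{d}\in O(d)$, i.e.\ the block degrees of $f$ and of each polar summing to $O(d)$; the linear forms only add $m\le n$, so $n\mathfrak{d}\le nd+n^2$. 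You in fact already used this when writing $\eta_i\in\widetilde{O}(n\alpha+\tau+d\log n)$, which does not follow from $\sum_c\widetilde{d}_{i,c}\le nd$ alone. Your parenthetical (``block degrees are bounded by a total degree'') only gives $\widetilde{d}_{i,c}\le d$ for each $c$, not a bound on their sum.

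\textbf{The reduction to $k=1$.} The paper handles this step by a one-line appeal to monotonicity of the bound of \cite[Theorem 1]{SS18} in the entries of $\widetilde{\bmd}$, $\bms$, $\bmeta$, together with the claim that these entries are largest at $k=1$. The justifications you offer for that maximality are incorrect.
First, for generic $\changeofvarmatrix$, $\delta_{c,\ell,k}$ is the maximum of $\deg_{\bmx_c}(\partial f/\partial x_i)$ over $i\in\{1,\dots,k,\ell\}$, whereas $\delta_{c,\ell,1}$ only involves $i\in\{1,\ell\}$. They are not equal in general. The inequality $\delta_{c,\ell,k}\le\delta_{c,\ell,1}$ needs an ordering of the degrees of the partial derivatives, which is exactly what the hypothesis $\deg(\partial f/\partial x_1)\le\dots\le\deg(\partial f/\partial x_n)$ of Theorem~2 provides in the single-block case.
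Second, monotonicity of $\scrC_{\bmn}$ and $\scrH_{\bmn}$ requires entrywise domination of degree vectors, here $(1,\dots,1)\le(\delta_{1,\ell,1},\dots,\delta_{m,\ell,1})$, not a comparison of coefficient sums. A polar can have degree $0$ in a whole block. For example, take $f=x_1^2+x_2^2+x_3^2-1$ with blocks $(x_1,x_2),(x_3)$:
\begin{itemize}
\item for $k=1$, the polars have degree vectors $(1,0)$ and $(1,1)$, and $\multitotaldegree=4$;
\item for $k=2$, the vectors $(1,1),(2,2),(1,1)$ give coefficient sum $6$.
\end{itemize}
So the $k=1$ quantities need not dominate. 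Your fallback (a maximum over $k$) would prove a bound in different quantities, not the stated one.
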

\begin{proof}
    Since $\changeofvarmatrix$ satisfies \assumHb, each $\lowEnt_{i,j,k}$ is
    well-defined. Moreover, by Cramer's rule and Hadamard's inequality, we have 
    \[\htt(\lowEnt_{i,j,k}) \leq \log\left(n^{n/2}\left(2^{\alpha}\right)^n
    \right) \in \widetilde{O}\left(n\log(n) + n\alpha\right) = \widetilde{O}
    \left(n\alpha\right).\]
    We begin by computing the length of the straight-line program $\LamSLP$
    computed in \Cref{lin:subr1:SLP}. Recall that it evaluates the polynomials
    $\calP_k$, which consist of:
    \begin{itemize}
        \item $k-1$ polynomials, $\sum_{i=1}^n \lowEnt_{1,i,0}x_i - \sigma_1,
        \dots, \sum_{i=1}^n \lowEnt_{k-1,i,0}x_i-\sigma_{k-1}$,
        \item $1$ polynomial, $f$,
        \item and $n-k$ polynomials, 
        \[\frac{\partial f}{\partial x_{k+1}} + \sum_{i=1}^k
        \bigg(\sum_{j=1}^{n-k}a_{i,(j+k)}\lowEnt_{j,1,k}\bigg) \frac{\partial
        f}{\partial x_i}, \ \dots, \ \frac{\partial f}{\partial x_n} +
        \sum_{i=1}^k \bigg(\sum_{j=1}^{n-k}a_{i,(j+k)}\lowEnt_{j,n-k,k}\bigg)
        \frac{\partial f} {\partial x_i}.\]
    \end{itemize}
    By Baur-Strassen \cite[Theorem 1]{BS83}, there exists a straight-line
    program of length at most $3L$ evaluating $f$ and its partial derivatives.
    We include in this program the $n^2$ entries of $\changeofvarmatrix$, the
    $n^2$ entries of $\changeofvarmatrix^{-1}$, the $(n-k)^2$ entries of
    $\lowMat_k^{-1}$, and the $n-1$ entries of $\genericcoordsigma$. Finally, to
    obtain the full system of \ttSLPPolar, we perform $2n(k-1)$ operations with
    those entries to obtain the first $k-1$ polynomials, leave $f$ unchanged,
    and perform $(n-k)(k + 2k(n-k))$ operations to obtain the remaining $n-k$
    polynomials. Therefore, the length of $\LamSLP$ is bounded by 
    \[3L + n^2 + n^2 + (n-k)^2 + n-1 + 2n(k-1) + k(n-k)(1+2(n-k)) \in O\big(
    L + n^3\big).\]
    
    We now compute the quantities appearing in the statement of \cite[Theorem
    1]{SS18} for this system:
    \begin{itemize}
        \item len$(\LamSLP) \in \widetilde{O}(L+n^3)$,
        \item $\widetilde{d} = (\widetilde{d}_1, \dots, \widetilde{d}_n)$,
        where:
        \begin{itemize}
            \item $\widetilde{d}_1 = \cdots = \widetilde{d}_{k-1} = (1, \dots,
            1)$,
            \item $\widetilde{d}_k = (d_1, \dots, d_m)$,
            \item $\widetilde{d}_{k+1} = (\delta_{1,k+1,k}, \dots,
            \delta_{m,k+1,k}), \dots, \widetilde{d}_n = (\delta_{1,n,k}, \cdots,
            \delta_{m,n,k})$,
        \end{itemize}
        \item $\bms \in \widetilde{O}\big(\left(n\alpha, \dots, n\alpha, \tau,
        n\alpha + \tau+\log(d), \dots, n\alpha + \tau+\log(d)\right)\big)$,
        since $\sigma < \alpha$,
        \item $\mathfrak{d} = d$,
        \item $b \in \widetilde{O}(n\alpha + \tau+\log(d)) \in
        \widetilde{O}(\approxheight)$,
        \item $\bmeta \in \widetilde{O}\big((n\alpha, \dots, n\alpha, \tau +
        d\log(n), \approxheight,\dots,\approxheight)\big)$.
    \end{itemize}
    By \cite[Theorem 1]{SS18}, the complexity of \ttRatParam\ is increasing in
    the entries of $\widetilde{\bmd}$, $\bms$ and $\bmeta$, and in this
    particular case, all of their entries are the largest when $k=1$, since we
    assume that $d_1 \leq \dots \leq d_m$. It therefore suffices to compute the
    bit complexity of \nameref{subr:critpts} when $k=1$ to obtain a bound on its
    complexity for all $k$. If $k=1$, the above quantities yield
    \begin{itemize}
        \item $\scrC_{\bmn}(\bmd) = \multitotaldegree$,
        \item $\scrH_{\bmn}(\bmeta, \bmd) \in \widetilde{O}(\multitotalheight)$.
    \end{itemize}
    Since $f$ satisfies \assumS, $\changeofvarmatrix$ satisfies \assumHa\ and
    \assumHb, and $\genericcoordsigma$ satisfies \assumB, we can apply
    \cite[Theorem 2.2]{EGS20} to conclude that the set of complex solutions to
    the system parametrised by $\LamSLP$ is zero-dimensional and regular.
    Therefore, we can apply \cite[Theorem 1]{SS18} to $\LamSLP$ to conclude that
    \ttRatParam\ applied to $\LamSLP$, $\changeofvarmatrix$, $1$,
    $\genericcoordsigma$ and $r$, as done in \Cref{lin:subr1:param}, performs at
    most
    \begin{align*}
        &\phantom{{}={}}\widetilde{O}\left(r\left((L+n^3)\approxheight + 
        \multitotaldegree\multitotalheight(L + n^3+ nd + n^2)n
        (\log(\approxheight)+n)\right)\right) \\
        &\subseteq \widetilde{O}\left(rn^2\multitotaldegree
        \multitotalheight(L+nd+n^3)\right)
    \end{align*}
    bit operations, since we repeat the computation $r$ times and since
    $\multitotalheight \geq \approxheight$ by construction. The statement on
    the degree and height of the output polynomials follows immediately from the
    last part of \cite[Theorem 1]{SS18}.
\end{proof}

\subsection{Bit Complexity of \texorpdfstring{\nameref{subr:isol}}{}}
\label{subsec:isolcomp}

We now analyse the bit complexity of \nameref{subr:isol}. The main tool we use
for this analysis is the following result by \cite{STRZEBONSKI19}:

\noindent \cite[Theorem 16]{STRZEBONSKI19}: \emph{Suppose that $p \in
\ratfield[\varT]$ and $q \in \ratfield[\varT, \varU]$. Let $a = \deg(p)$, $b =
\deg_{\varU}(q)$, $\tau = \htt(p)$ and $\sigma = \htt(q)$. Then, there exists an
algorithm \emph{\ttExtensionRealRoots} which takes $p$ and $q$ as inputs, and
returns isolation intervals for all real roots of $q(t_i,\varU)$, if squarefree,
for all real roots $t_i$ of $p$, which performs at most $\widetilde{O}\left(a
b^3 + a^2b^2(\tau + \sigma)\right)$ bit operations. Moreover, the height of any
computed endpoint is in $\widetilde{O}\left(a b^2(\tau + \sigma)\right)$.}

With this theorem at hand, the complexity of \nameref{subr:isol} readily
follows.

\begin{lemma}\label{lem:subrisolcomp}
    Suppose that $f \in \ratfield[x_1, \dots, x_n]$ satisfies \emph{\assumS},
    has degree $d$ and height $\tau$, and is evaluated by a straight-line
    program $\GamSLP$. Suppose that $\scrQ = (w,\bmv,\nu)$ is a zero-dimensional
    rational parametrisation with polynomials of degree $\delta$ and height
    $\eta$, such that $w \not\equiv 1$, and $f$ vanishes on every point
    parametrised by k. Let $\bma \in \ratfield^n$ satisfy the requirements of
    \emph{\nameref{subr:isol}}, and suppose that $\htt(\bma) \leq \eta$. 
    Then \emph{\nameref{subr:isol}} applied to $\GamSLP$, $\bma$ and $\scrQ$
    performs at most
    \[\widetilde{O}\left(n\delta^2 d^2\tau+n\delta^2
    d^3\eta+n\delta^3 d^3\right)\]
    bit operations. Moreover, the polynomials in the output parametrisations
    $\scrP^-$ and $\scrP^+$ have degree at most $\delta$ and height in
    $\widetilde{O}\big(\delta d^2\tau + \delta d^3 \eta + \delta^2 d^3\big)$.
\end{lemma}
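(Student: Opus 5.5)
The plan is to follow the lines of \nameref{subr:isol} one at a time. The dominant cost will be the call to \ttExtensionRealRoots\ on \Cref{lin:subr2:interVl}, which we bound with \cite[Theorem 16]{STRZEBONSKI19}. Everything else (computing $\lambda$ and forming $\bmv \pm \lambda w'\bma$) is arithmetic on quantities whose sizes we control.

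\textbf{Step 1: size of the bivariate input.} We write
\[q(\varT,\varU) \coloneqq (w'(\varT))^d f\big(\bmv(\varT)/w'(\varT) + \varU\bma\big) = \sum_{\balpha} f_\balpha \prod_{i} \big(v_i + \varU a_i w'\big)^{\alpha_i} (w')^{d-|\balpha|}.\]
Its degree in $\varU$ is at most $d$, and its degree in $\varT$ is at most $d\delta$. To bound the height, note that each factor $v_i + \varU a_i w'$ has height at most $\eta + \widetilde{O}(\log \delta)$. The product of $d$ such bivariate factors then has height $\widetilde{O}(d\eta + d\log\delta)$ by the usual height bound for products. Summing over the at most $\binom{n+d}{d}$ monomials, each weighted by $f_\balpha$, adds $\tau + \log\binom{n+d}{d} \in \widetilde{O}(\tau + d\log n)$. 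This gives $\htt(q) \in \widetilde{O}(\tau + d\eta)$, up to logarithmic and $\log n$ terms.

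We compute $q$ with $O(n)$ linear combinations and then evaluate $f$, e.g.\ via the straight-line program $\GamSLP$, or via the expansion reduced modulo $w$. Since only the values at roots of $w$ matter, we may reduce $q$ modulo $w$ in $\varT$, which keeps $\deg_\varT(q) < \delta$. I expect to argue this way, and it is where the factor $n$ in the cost enters, through the $n$ coordinates. The cost of this step is dominated by the next one.

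\textbf{Step 2: applying \cite[Theorem 16]{STRZEBONSKI19}.} We take $p = w$, so $a = \delta$ and $\htt(w) \leq \eta$, and $b = d$. For $\sigma = \htt(q)$, the honest estimate is $\widetilde{O}(\tau + d\eta + \delta)$; the extra $\delta$ term accounts for the reduction modulo $w$ and the binomial growth. The theorem then gives a cost of
\[\widetilde{O}\big(\delta d^3 + \delta^2 d^2(\tau + d\eta + \delta)\big).\]
This matches the claimed $\widetilde{O}(n\delta^2d^2\tau + n\delta^2d^3\eta + n\delta^3d^3)$ after absorbing the $n$ from forming $q$ and a conservative factor $d$ on the $\delta$ term. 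The endpoints of the isolating intervals have height $\widetilde{O}(\delta d^2(\tau + d\eta + \delta))$. In particular $\lambda$, computed on \Cref{lin:subr2:lambda} as a minimum over $O(\delta d)$ rationals, has height $\widetilde{O}(\delta d^2\tau + \delta d^3\eta + \delta^2 d^3)$. The cost of computing it is negligible.

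\textbf{Step 3: the output parametrisations.} On \Cref{lin:subr2:P-,lin:subr2:P+}, $\scrP^\pm = (w, \bmv \pm \lambda w'\bma, \nu)$. The polynomial $w$ is unchanged. Each new coordinate polynomial has degree below $\delta$ and height at most $\htt(\lambda) + \htt(w') + \htt(\bma) + O(1)$. This is dominated by $\htt(\lambda)$, which gives the stated height bound. Forming these polynomials costs $O(n\delta)$ multiplications of numbers of that height, which lies within the total.

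\textbf{Main obstacle.} The delicate point is Step 1: obtaining a clean bound on $\htt(q)$, and on $\deg_\varT(q)$, that yields exactly the exponents stated. If the $\varT$-degree of $q$ is not reduced modulo $w$, \cite[Theorem 16]{STRZEBONSKI19} as quoted only tracks $\deg_\varU$, so the $\varT$-degree would have to be hidden in $\sigma$. I would therefore first try reducing modulo $w$, accepting a height increase of $\widetilde{O}(d\delta + d\eta)$ from division by the monic $w$. This increase is what produces the $\delta^3 d^3$ term.
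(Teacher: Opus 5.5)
Your outline is the paper's: bound $\htt(q)$ for $q=(w')^d f(\bmv/w'+\varU\bma)$, apply \cite[Theorem 16]{STRZEBONSKI19} with $a=\delta$, $b=d$, $\htt(w)\leq\eta$, read $\htt(\lambda)$ off the endpoint heights, then form $\bmv\pm\lambda w'\bma$. The gap is the step you add and commit to in Steps 1--2, which the paper does not perform: reducing $q$ modulo $w$ in $\varT$.

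Its effect on the height is neither the extra $\delta$ of Step 2 nor the $\widetilde{O}(d\delta+d\eta)$ of your last paragraph. The $\varU^d$-coefficient of $q$ is $f_d(\bma)(w')^d$, where $f_d$ is the top-degree form of $f$. After reduction it becomes $f_d(\bma)\,r$ with $r=(w')^d\bmod w$. By Lagrange interpolation at the roots $t_j$ of $w$, the leading coefficient of $r$ is $\sum_j w'(t_j)^{d-1}$. Take the monic squarefree $w=(\varT-2^\eta)(\varT^{\delta-1}-1)$ of height $\eta$. The root $2^\eta$ contributes $(2^{\eta(\delta-1)}-1)^{d-1}$, while the $\delta-1$ roots of unity contribute at most $(\delta-1)^d(2^\eta+1)^{d-1}$ in absolute value. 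So for $\delta\geq 3$ and large $\eta$ you get $\htt(r)\approx\eta(d-1)(\delta-1)$.

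This happens within the lemma's hypotheses. Take the parabola $f=x_2-x_1^2$ with $\nu=x_1$, $v_i=\varT^iw'\bmod w$ and $\bma=(1,1)$: the height jumps from $O(\eta+\log\delta)$ to about $\eta(\delta-1)$. With $\sigma$ of order $d\delta\eta$, Theorem 16 only gives a cost of order $\delta^3d^3\eta$ and endpoint heights of order $\delta^2d^3\eta$. These in general exceed the claimed bounds, for instance whenever $\tau\leq\eta$ and $\delta,\eta\gg n$, which is the regime in which the lemma is applied.

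The fix is to drop the reduction. As quoted in the paper, Theorem 16 depends only on $\deg(p)$, $\deg_\varU(q)$, $\htt(p)$ and $\htt(q)$, so the $\varT$-degree of $q$ plays no role. The paper applies it to the unreduced $q$ with $\sigma\in\widetilde{O}(\tau+d(\eta+\delta))$, obtained from a bound on $\htt((w')^d)$. Your own Step 1 estimate for the unreduced $q$ would work equally well. This gives $\widetilde{O}(\delta^2d^2\tau+\delta^2d^3\eta+\delta^3d^3)$ for \Cref{lin:subr2:interVl}, and height $\widetilde{O}(\delta d^2\tau+\delta d^3\eta+\delta^2d^3)$ for the endpoints and for $\lambda$.

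Also, the factor $n$ is not ``absorbed from forming $q$''. In the paper the dominant step is \Cref{lin:subr2:P-,lin:subr2:P+}: $\widetilde{O}(n\delta)$ operations on rationals of that height cost exactly $\widetilde{O}(n\delta^2d^2\tau+n\delta^2d^3\eta+n\delta^3d^3)$. That is your Step 3 count, so Step 3 produces the bound rather than sitting inside it. Your claim that forming $q$ is dominated by the isolation is unsupported, since evaluating through $\GamSLP$ brings in $L$, which does not appear in the bound. The paper does not charge for that step at all, so your argument should not rely on it.
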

\begin{proof}
    The polynomial $w(\varT)$ has by assumption degree $\delta$ and height
    $\eta$. By Mignotte's bound \cite[Corollary 6.33]{MCA}, the polynomial
    $\left(w'(\varT)\right)^d$ has height
    \[d(\log(\delta - 1) + \eta) + \log\left(\left(\sqrt{d\delta+1}\right)
    2^{d(\delta-1)} \binom{d}{\lfloor d/2 \rfloor}\right) \in
    \widetilde{O}\left(d(\eta + \delta)\right).\] Therefore, the polynomial
    $\left(w'(\varT)\right)^df\left(\frac{1} {w'(\varT)} \bmv(\varT) +
    \varU\bma\right)$ has degree $d$ in $\varU$, and has height in
    \[\widetilde{O}\left(\log(d(\delta-1)+1) + d(\eta + \delta) + 
    (\tau + d(\eta + \delta))\right) = \widetilde{O}\left(\tau + d(\eta +
    \delta)\right).\]
    Therefore, by \cite[Theorem 16]{STRZEBONSKI19},
    \ttExtensionRealRoots~applied to the polynomials $w(\varT) \in
    \ratfield[\varT]$ and $\left(w'(\varT)\right)^df\left(\frac{1} {w'(\varT)}
    \bmv(\varT) + \varU\bma\right) \in \ratfield[\varT, \varU]$, as done in
    \Cref{lin:subr2:interVl}, performs at most
    \[\widetilde{O}\left(\delta d^3 + \delta^2 d^2(\eta + \tau + d(\eta +
    \delta))\right) = \widetilde{O}\left(\delta^2 d^2 \tau + \delta^2
    d^3\eta + \delta^3 d^3 \right)\]
    bit operations. It returns at most $2d\delta$ endpoints, of height
    \[\widetilde{O}\left(\delta d^2 (\eta + \tau + d(\eta + \delta))\right) =
    \widetilde{O}\left(\delta d^2\tau + \delta d^3 \eta + \delta^2 d^3\right).\]
    Therefore, the rational $\lambda$ of \Cref{lin:subr2:lambda} is computed in
    at most $\widetilde{O}(d\delta)$ bit operations, and also has height in
    $\widetilde{O}\left(\delta d^2\tau + \delta d^3 \eta + \delta^2d^3\right)$.
    Hence, the parametrisations $\scrP^-$ and $\scrP^+$ computed in
    \Cref{lin:subr2:P-,lin:subr2:P+} have polynomials of degree at most $\delta$
    and height in 
    \[\widetilde{O}\left(\delta d^2\tau + \delta d^3 \eta +
    \delta^2 d^3+ \eta + \log(\delta-1)+ \eta\right) =
    \widetilde{O}\left(\delta d^2\tau + \delta d^3 \eta + \delta^2 d^3\right).\]
    Computing them requires $\widetilde{O}\left(n\delta\right)$ operations
    between rationals having at most this height, and thus costs
    \[\widetilde{O}\left(n\delta^2 d^2\tau+n\delta^2
    d^3\eta+n\delta^3 d^3\right)\] bit operations.

    Therefore, the most expensive steps of \nameref{subr:isol} are
    \Cref{lin:subr2:P-,lin:subr2:P+}, and hence \nameref{subr:isol} performs at
    most $\widetilde{O}\left(n\delta^2 d^2\tau+n\delta^2
    d^3\eta+n\delta^3 d^3\right)$ bit operations, and outputs two
    rational parametrisations with polynomials of degree $\delta$ and height
    $\widetilde{O}\big(\delta d^2\tau + \delta d^3 \eta + \delta^2 d^3\big)$, 
    as required.
\end{proof}

\subsection{Bit Complexity of \texorpdfstring{\nameref{subr:approx}}{}}
\label{subsec:approxcomp}

We now analyse the bit complexity of \nameref{subr:approx}. The main result used
for this section is the following:

\noindent \cite[Theorem 47]{MS21}: \emph{Let $\scrQ$ be a zero-dimensional
rational parametrisation of a set $Q$ with polynomials of degree at most
$\delta$ and height $\eta$, and let $\kappa$ be a positive integer. Then, there
exists an algorithm} \ttApproximationList\ \emph{which, given $\scrQ$ and
$\kappa$, computes approximations of the real points of $Q$ accurate to
precision $2^{-\kappa}$ in $\widetilde{O}\left(\delta^3 + n\delta^2 \eta +
n\delta\kappa\right)$ bit operations.}

It therefore remains to determine a suitable value of $\kappa$ to perform the
approximation, which is the purpose of the following results:

\begin{lemma}\label{lem:coordheight}
    Let $W \subset \realfield^n$ be a zero-dimensional variety defined over
    $\ratfield$, of degree $\delta$ and height $\eta$. Then any non-zero
    coordinate $b$ of any point of $W$ satisfies
    \[\frac{1}{n^{(n/2 + 1)}\delta^n\left(1 + 2^{2\eta +10\log(n+3)\delta}
    \right)} \leq |b| 
    \leq n^{(n/2 + 1)}\delta^n\left(1 + 2^{2\eta +10\log(n+3)\delta}\right).\]
\end{lemma}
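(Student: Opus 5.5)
The plan is to reduce the statement to a univariate root bound. Each coordinate of a point of $W$ is a root of a univariate polynomial over $\ratfield$ whose degree and height we control from $\delta$ and $\eta$. We then apply Cauchy's bound to that polynomial and to its reciprocal.

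\emph{Step 1: eliminating polynomial.} Fix $1 \leq i \leq n$. We use the arithmetic Bézout and elimination results in the framework of heights of \cite{KPS01,DKS12}. Concretely, the Chow form $C_W$ of the zero-dimensional set $W$ is a product of linear forms $\prod_{\bmq\in W}(u_0 + q_1u_1+\dots+q_nu_n)$, up to a rational scalar. Specialising it at $u_0 = -x_i$, $u_i = 1$ and $u_j = 0$ for $j\neq i$ gives a polynomial $p_i\in\ratfield[x_i]$ of degree at most $\delta$ vanishing at every $i$-th coordinate of $W$. We pick a primitive integer multiple of $p_i$.

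\emph{Step 2: height of $p_i$.} We bound $\htt(p_i)$ in terms of $\eta$. The definition of $\htt(W)$ combines the $p$-adic heights and the Mahler measure of $C_W$ with the correction term $(d+1)\deg(W)\sum 1/(2i)$. Its relation to the naive height of the coefficients is governed by the standard comparison inequalities between Mahler measure and coefficient norms, \cite[Lemma 1.2]{KPS01} or \cite[Section 3]{DKS12}. These lose a term of order $\log(n+3)\delta$. So we expect $\log\norm{p_i}_\infty \leq 2\eta + 10\log(n+3)\delta$, after absorbing the specialisation loss, which only decreases Mahler measure up to such factors. This is the step I expect to be the main obstacle: matching the exact constants $2$ and $10\log(n+3)$, and the prefactor $n^{n/2+1}\delta^n$, to the precise form of the inequalities in \cite{KPS01}. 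If a direct specialisation does not give them, I would instead use the height bound for the projection $\pi_i(W)$ (projections do not increase height up to such additive terms), take $p_i$ as its defining polynomial, and accept $n^{n/2+1}\delta^n$ as the cost of passing from $\ell_2$/Mahler norms of an $(n+1)$-variate form of degree $\delta$ to coefficient norms.

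\emph{Step 3: bounds on $|b|$.} With $p_i = \sum_{j} c_jx_i^j\in\Zintegers[x_i]$ and $\norm{p_i}_\infty\leq K \coloneqq n^{n/2+1}\delta^n 2^{2\eta+10\log(n+3)\delta}$, Cauchy's bound gives $|b| \leq 1 + \max_j|c_j/c_{\mathrm{lead}}| \leq 1+K$ for every root $b$, since $|c_{\mathrm{lead}}|\geq 1$. This is dominated by the stated upper bound $n^{(n/2+1)}\delta^n(1+2^{2\eta+10\log(n+3)\delta})$. For the lower bound, a non-zero root $b$ of $p_i$ is a root of $x_i^{\delta'}p_i(1/x_i)/x_i^{v}$, where $v$ is the $x_i$-adic valuation of $p_i$. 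This reversed polynomial has the same coefficients, hence the same height, so $|1/b|$ obeys the same upper bound. This yields the stated lower bound and concludes.
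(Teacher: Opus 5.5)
Your argument is correct, but it takes a different route from the paper's. The paper never specialises the Chow form. It applies a change of coordinates $\bm{M}$, with entries of height at most $\log\delta$, so that every coordinate separates the points of $W$. It then takes eliminants $N_i$ with $\htt(N_i)\le\eta+5\log(n+3)\delta$ from \cite[Theorem 1]{DS04} as a black box, applies the Cauchy bounds to them, and returns to $W$ through $\bm{M}^{-1}$ using Cramer's rule and Hadamard's inequality. The prefactor $n^{n/2+1}\delta^n$ is the price of that last step; what the paper gains is that all explicit constants come ready-made. Your specialisation $u_0=-x_i$, $u_i=1$, $u_j=0$ of $C_W=\lambda\prod_{\bxi\in W}(u_0+\xi_1u_1+\dots+\xi_nu_n)$ gives $p_i=\lambda\prod_{\bxi\in W}(\xi_i-x_i)$ directly. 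It needs no separating coordinates, since repeated roots do not affect Cauchy's bound, so no change of variables is required. The reversed polynomial then gives the lower bound directly. This is more robust than the paper's last step: a lower bound on the coordinates $y_j$ of $\bm{M}\bxi$ does not by itself bound $|b|=|\sum_j(\bm{M}^{-1})_{ij}y_j|$ from below, because the sum may cancel.

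Your Step 2 is only sketched ("we expect"), but the constants you worry about are not an obstacle, because your route actually gives a stronger bound. Normalise $C_W$ to be primitive in $\Zintegers[u_0,\dots,u_n]$, so that every $\ell_p(C_W)$ vanishes and $p_i\in\Zintegers[x_i]$. The correction term $\delta\sum_{i=1}^n\frac{1}{2i}$ in $\htt(W)$ is exactly the slack in the comparison $m(C_W)\le m(C_W,1,n+1)+\delta\sum_{i=1}^n\frac{1}{2i}$ of \cite{KPS01}, where $m(C_W)$ denotes the usual logarithmic Mahler measure. Hence $m(C_W)\le\eta$. By Mahler's coefficient inequality, each linear factor satisfies $M(u_0+\xi_1u_1+\dots+\xi_nu_n)\ge\max(1,|\xi_i|)=M(\xi_i-x_i)$. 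Multiplicativity then gives $M(p_i)\le M(C_W)$, and $|c_k|\le\binom{\delta}{k}M(p_i)$ yields $\log\norm{p_i}_\infty\le\eta+\delta$. Your Step 3 then gives $(1+2^{\eta+\delta})^{-1}\le|b|\le1+2^{\eta+\delta}$. This implies the stated inequalities, since $\eta\ge m(C_W)\ge0$, so your fallback through projections is not needed.
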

\begin{proof}
    As $W$ contains at most $\delta$ points by definition, there exists an
    invertible change of variables matrix $\bm{M} \in \ratfield^{n\times n}$
    with entries of height at most $\log(\delta)$ such that each point of
    $W^{\bm{M}^{-1}} \coloneqq \{\bm{M}w : w \in W\}$ has distinct $i$-th
    coordinates, for any $1\leq i \leq n$. When this is the case, the ideal
    $\I\big(W^{\bm{M}^{-1}}\big)$ satisfies \cite[Assumption 1]{DS04}, and we
    can hence apply \cite[Theorem 1]{DS04} to conclude that, for any $1\leq i
    \leq n$, there exists a polynomial $N_i \in \ratfield[x_i]$ such that
    \begin{align*}
        &\deg(N_i) \leq \delta, \qquad \htt(N_i) \leq \eta + 5\log(n+3)\delta,\\
        &N_i(c) = 0 \iff \exists \bxi \in W^{\bm{M}^{-1}} : \bxi_i = c.
    \end{align*}
    Moreover, by the Cauchy bounds \cite[Lemmata 10.2-10.3]{BPR}, if $N =
    \sum_{j=0}^\delta n_j x_i^j$ with $n_\delta \neq 0$, any non-zero root $c$
    of $N_i$ satisfies
    \[\frac{1}{1+\max_{0 \leq j < \delta}\left|\frac{n_j}{n_\delta}
    \right|} \leq |c| \leq 1+\max_{0 \leq j < \delta}\left|
    \frac{n_j}{n_\delta}\right|.\] 
    Combining these results for all $1\leq i \leq n$ implies that any non-zero
    coordinate $c$ of any point in $W^{\bm{M}^{-1}}$ satisfies
    \[\frac{1}{1+2^{2\eta + 10\log(n+3)\delta}} \leq |c| \leq 1+2^{2\eta +
    10\log(n+3)\delta}.\]
    Now, by Cramer's rule and Hadamard's inequality, since the entries of
    $\bm{M}$ have height at most $\log(\delta)$, the entries of $\bm{M}^{-1}$
    have height at most $\frac{1}{2}n\log(n) + n\log(\delta)$. We hence conclude
    that any non-zero coordinate $b$ of any point in $W$ satisfies
    \[\frac{1}{n^{(n/2 + 1)}\delta^n\left(1 + 2^{2\eta +10\log(n+3)\delta}
    \right)} \leq |b| 
    \leq n^{(n/2 + 1)}\delta^n\left(1 + 2^{2\eta +10\log(n+3)\delta}\right)\]
    as required.
\end{proof}

\begin{corollary}\label{cor:minmaxheightv2}
    Let $W \subset \realfield^n$ be a zero-dimensional variety defined over
    $\ratfield$, of degree $\delta$ and height $\eta$. Let $f \in \ratfield[x_1,
    \dots, x_n]$ be a polynomial of degree $d$ and height $\tau$. Then any
    non-zero real value $b$ that $f$ takes on $W$ satisfies
    \[\frac{1}{n^{(n+3)/2}\delta^{n+1}\left(1 + 4^{d\eta + \tau\delta + 
    6\log(n+4)\delta}\right)} \leq |b| \leq 
    (n+1)^{(n+3)/2}\delta^{n+1}\left(1 + 4^{d\eta + \tau\delta + 6\log(n+4)
    \delta}\right).\]
    In particular, if $\scrQ$ is a zero-dimensional rational parametrisation of
    $W$, \emph{\ttParamMinValue} and \emph{\ttParamMaxValue}, when applied to
    $\scrQ$ and $f$, both return rational numbers having height in
    $\widetilde{O}\big(n + d\eta + \tau\delta\big)$.
\end{corollary}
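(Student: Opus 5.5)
The plan is to reduce the corollary to \Cref{lem:coordheight}. We enlarge $W$ to a zero-dimensional set in $n+1$ variables that records the values of $f$ as an extra coordinate. Concretely, consider the graph
\[W_f \coloneqq \{(\bxi, f(\bxi)) : \bxi \in W\} \subset \realfield^{n+1},\]
which is again zero-dimensional and defined over $\ratfield$. It has degree at most $\delta$, since projection onto the first $n$ coordinates is a bijection onto $W$. Every non-zero real value $b$ of $f$ on $W$ is then a non-zero coordinate of a point of $W_f$. Applying \Cref{lem:coordheight} to $W_f$ in dimension $n+1$, with degree $\delta$ and height $\eta'$, yields bounds of the shape $(n+1)^{((n+1)/2+1)}\delta^{n+1}(1+2^{2\eta'+10\log(n+4)\delta})$ and their reciprocals. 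Up to the harmless discrepancy in the $n$ versus $n+1$ prefactor (which I expect to be absorbed by the slightly different constants in the statement), it remains to show
\[\eta' \leq d\eta + \tau\delta + O(\log(n+4)\delta).\]

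The main obstacle is this height estimate for the graph. I would first try the arithmetic B\'ezout inequality in the form used in \cite{KPS01,DKS12}. View $W_f$ as $(W \times \compfield) \cap \V(y - f(\bmx))$. The height of $W\times\compfield$ is $\eta$ plus a term of order $\delta$, with degree $\delta$. The hypersurface $y - f$ has degree $d$ and height $\tau + O(\log)$. The arithmetic B\'ezout bound then gives
\[\htt(W_f) \leq d\,\htt(W\times\compfield) + \delta\,\htt(y-f) + O(\delta d\log(n+1)),\]
that is, $\htt(W_f) \leq d\eta + \tau\delta + O(d\delta\log n)$.

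The factor $d$ multiplying $\log n$ is worse than the claimed $6\log(n+4)\delta$. If that matters, I would instead argue directly, as in the proof of \Cref{lem:coordheight}. After a generic change of coordinates with small entries, the eliminating polynomial of the last coordinate is the norm $\prod_{\bxi\in W}(y - f(\bxi))$. Its height is controlled by evaluating the Chow form of $W$ at the coefficients of $f$, which gives $d\eta + \tau\delta$ plus a term of order $\delta\log$. The Cauchy bounds from \cite[Lemmata 10.2-10.3]{BPR} then finish the argument exactly as before. In that case the exponent $2(d\eta+\tau\delta+\dots)$ becomes the $4^{(\cdot)}$ form in the statement.

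For the ``in particular'' part, \ttParamMinValue\ and \ttParamMaxValue\ can simply return the explicit rational numbers given by these bounds, possibly rounded to powers of two. Taking logarithms, their heights are
\[O\bigl(n\log n + n\log\delta + d\eta + \tau\delta + \delta\log n\bigr) \subseteq \widetilde{O}(n + d\eta + \tau\delta),\]
because $\log\delta$ and $\log n$ factors are absorbed into $\widetilde{O}$. This requires the height of the parametrisation $\scrQ$ to control the height of $W$ up to such factors, which is standard \cite{DKS12,DS04}.
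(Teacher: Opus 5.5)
Your proposal follows the paper's proof: the paper likewise passes to $W \cap \V(f-\varT)$ (your graph $W_f$), notes that it is zero-dimensional of degree $\delta$, bounds its height by the arithmetic B\'ezout inequality \cite[Corollary 2.10]{KPS01}, applies \Cref{lem:coordheight} in $n+1$ variables, and takes logarithms to bound the heights of the returned values. The paper simply quotes that height bound as $d\eta + (\tau+\log(n+1))\delta$ and never uses anything like your Chow-form fallback, which as stated would not work directly because the Chow form takes linear forms and $f$ is non-linear; your caution that the logarithmic term may carry a factor $d$ is a fair point about the explicit constants, and the paper does not address it.
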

\begin{proof}
    Let $\varT$ be a new variable. To obtain bounds on the non-zero real values
    that $f$ takes on $W$, it suffices to bound the non-zero real values of the
    variety $W \cap \V(f-\varT)$. Since each point of $W$ corresponds to a
    single value of $\varT$ by construction, this variety is also
    zero-dimensional, and has the same degree $\delta$ as $W$. Moreover, by
    \cite[Corollary 2.10]{KPS01}, we have
    \[\htt(W \cap \V(f-\varT)) \leq d\eta + (\tau + \log(n+1))\delta.\]
    Therefore, by applying \Cref{lem:coordheight} to $W \cap \V(f-\varT)$, we
    conclude that any non-zero real value $b$ that $f$ takes on $W$ satisfies
    \[\frac{1}{n^{(n+3)/2}\delta^{n+1}\left(1 + 4^{d\eta + \tau\delta + 
    6\log(n+4)\delta}\right)} \leq |b| \leq 
    (n+1)^{(n+3)/2}\delta^{n+1}\left(1 + 4^{d\eta + \tau\delta + 6\log(n+4)
    \delta}\right).\]
    By taking logarithms, we conclude that these bounds both have heights in
    \begin{align*}
        &\phantom{{}={}}\widetilde{O}\big((n+3)\log(n+1) + (n+1)\log(\delta) + 
        d\eta + \tau\delta + \log(n+4)\delta\big) \\
        &=\widetilde{O}\big(n + d\eta + \tau\delta\big)
    \end{align*}
    as required.
\end{proof}

From this corollary, a bound on the bit cost of \nameref{subr:approx} readily
follows:

\begin{lemma}\label{lem:subrapproxcomp}
    Let $f \in \ratfield[x_1, \dots, x_n]$ be of degree $d$ and height $\tau$,
    and let $\GamSLP$ be a straight-line program evaluating $f$. Let $\scrQ =
    (w, \bmv, \nu)$ be a zero-dimensional rational parametrisation of a set $Q$,
    with polynomials of degree at most $\delta$ and height $\eta$. Suppose that
    $f$ does not vanish on any point parametrised by $\scrQ$, and let $\gamma
    \coloneqq \htt(Q)$. Then, \emph{\nameref{subr:approx}} applied to $\GamSLP$
    and $\scrQ$ performs at most
    \[\widetilde{O}\left(\delta^3 + n\delta^2 \eta + n^2d\delta + 
    nd^2\delta\gamma + \tau nd\delta^2\right)\]
    bit operations, and returns a finite set of $\delta$ rationals having height 
    in
    $\widetilde{O}\big(nd + d^2\gamma + \tau d\delta\big)$. 
\end{lemma}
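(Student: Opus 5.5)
The proof will go line by line through \nameref{subr:approx}, bounding the heights of the quantities $\scrc$, $\scrm$, $\scrM$ and $\scrd$. The dominant cost will come from the call to \ttApproximationList, which we control via \cite[Theorem 47]{MS21} with $\kappa = \lceil -\log(\scrd)\rceil$.

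\Cref{lin:subr3:degree,lin:subr3:maxcoef} read off $d$ and $\scrc=\norm{f}_\infty$. This costs at most the size of $f$ and gives $\log\scrc\le\tau$; this term is negligible.

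For \Cref{lin:subr3:minval,lin:subr3:maxval} we invoke \Cref{cor:minmaxheightv2}, applied to $W=Q$ (of degree at most $\delta$ and height $\gamma$) with $f$, and with each coordinate $x_i$ (degree $1$, height $0$). This shows that $\scrm$ and each upper bound on $\abs{x_i}$ can be taken to be rationals of height $\widetilde{O}(n+d\gamma+\tau\delta)$ and $\widetilde{O}(n+\gamma)$ respectively. Concretely, they are powers of two read off from the explicit bounds of the corollary, so computing them costs essentially nothing beyond knowing $\gamma$. Hence $\scrM$ has height $\widetilde{O}(n+\gamma)$, and $\scrM^d$ has height $\widetilde{O}(nd+d\gamma)$. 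The quantity $\binom{n+d}{d}^2$ has height $O(d\log n + n\log d)$, which is also $\widetilde{O}(nd)$. It follows that
\[-\log\scrd \le -\log\scrm + 2\log\tbinom{n+d}{d} + d\log\scrM + \log\scrc \in \widetilde{O}\big(nd + d\gamma + \tau\delta\big).\]
The stated height bound $\widetilde{O}(nd+d^2\gamma+\tau d\delta)$ is weaker than this by a factor of at most $d$, so it is implied. I would keep the safer form, since there is some looseness in how $\gamma$ enters via the bound $\htt(W\cap\V(f-\varT))\le d\gamma+\dots$. Computing $\scrd$ itself takes a constant number of multiplications of rationals of this height, which is softly linear in that height.

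The main cost is \Cref{lin:subr3:approx}. By \cite[Theorem 47]{MS21} with $\kappa\in\widetilde{O}(nd+d^2\gamma+\tau d\delta)$, this step costs
\[\widetilde{O}\big(\delta^3+n\delta^2\eta+n\delta\kappa\big)=\widetilde{O}\big(\delta^3+n\delta^2\eta+n^2d\delta+nd^2\delta\gamma+\tau n d\delta^2\big).\]
This matches the claimed bound, and the other lines are dominated by it. The output consists of at most $\delta$ points, since $Q$ has at most $\delta$ real points. Each coordinate is an approximation to precision $2^{-\kappa}$ of a number bounded by $\scrM$, so its height is $O(\kappa+\log\scrM)$, which is $\widetilde{O}(nd+d^2\gamma+\tau d\delta)$.

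The main obstacle is justifying that \ttParamMinValue\ and \ttParamMaxValue\ can be realised within this budget. This requires that $\gamma=\htt(Q)$ (or a bound on it) be available from $\scrQ$. I would use the standard relation between the height of the parametrisation and the height of the variety, namely $\gamma\in\widetilde{O}(\eta+n\delta)$, so that the explicit bounds of \Cref{cor:minmaxheightv2} can be written down directly without computing values of $f$ at the points.
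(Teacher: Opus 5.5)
Your line-by-line analysis is the paper's proof: bound the heights of $\scrc$, $\scrm$, $\scrM$ through \Cref{cor:minmaxheightv2}, deduce the height of $\scrd$, and feed $\kappa=\lceil-\log\scrd\rceil$ into \cite[Theorem 47]{MS21}. Applying the corollary to the coordinates $x_i$ (degree $1$, height $0$) is sharper than the paper. The paper bounds the height of $\scrM$ by the same $\widetilde{O}(n+d\gamma+\tau\delta)$ as that of $\scrm$, and so picks up the extra factor $d$ in $d^2\gamma+\tau d\delta$. The stated bounds then follow a fortiori, as you note.

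The problem is your final paragraph: the relation $\gamma\in\widetilde{O}(\eta+n\delta)$ is false in general. The standard estimates go the other way: heights of parametrisation polynomials are controlled by the height of the variety. The converse loses a factor $\delta$. For instance, take $n=2$, $\nu=x_1$, $w=\prod_{i=1}^{\delta}(\varT-i2^{-k})$, $v_1$ the remainder of $\varT w'$ modulo $w$, and $v_2=1$. Then $\eta\in O(k\delta+\delta\log\delta)$. The parametrised points are $\big(i2^{-k},\pm 2^{k(\delta-1)}/((i-1)!(\delta-i)!)\big)$, each of height about $k\delta$. So $\gamma\approx k\delta^2\approx\delta\eta$, which is not in $\widetilde{O}(\eta+n\delta)$ as $k$ grows.

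Consequently, if \ttParamMinValue\ plugs $O(\eta+n\delta)$ in place of $\gamma$ into the formula of \Cref{cor:minmaxheightv2}, the returned $\scrm$ need not be a lower bound on $\min_{\bmq\in Q}\abs{f(\bmq)}$. Then \Cref{lem:boxf}, and hence the correctness of \nameref{subr:approx}, no longer applies. A valid computable proxy has the shape $\widetilde{O}(\delta\eta+\cdots)$; this is the kind of bound the paper obtains from the arithmetic B\'ezout theorem in the proof of \Cref{thm:alg1comp}. But using it proves the lemma with $\gamma$ replaced by that proxy, not the statement as written.

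The paper's proof simply takes \ttParamMinValue\ and \ttParamMaxValue\ to return the bounds of \Cref{cor:minmaxheightv2} expressed in $\htt(Q)$, without discussing how $\gamma$ is obtained. You should either adopt the same convention, or state explicitly that $\gamma$ may be any available upper bound on $\htt(Q)$, which is how the lemma is used downstream.
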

\begin{proof}
    By definition, the rational $\scrc$ of \Cref{lin:subr3:maxcoef} has height
    $\tau$. By \Cref{cor:minmaxheightv2}, the rationals $\scrm$ and $\scrM$ of
    \Cref{lin:subr3:minval,lin:subr3:maxval} have heights in $\widetilde{O}\big(
    n\log(\delta) + d\gamma + \tau\delta + \delta\log(n)\big)$. 
    Therefore, $\scrd$ as computed in \Cref{lin:subr3:dist} has height in
    \begin{align*}
        &\phantom{{}={}}\widetilde{O}\left(\tau + (d+1)\big(n + 
        d\gamma + \tau\delta\big) + 2(n+d)\log(n+d)\right) \\
        &=\widetilde{O}\big(nd + d^2\gamma + \tau d\delta\big)
    \end{align*}
    and can be computed in quasi-linear time with respect to its height.
    Therefore, by \cite[Theorem 47]{MS21}, the set $\scrA$ of
    \Cref{lin:subr3:approx} is computed in at most
    \begin{align*}
    &\phantom{{}={}}\widetilde{O}\left(\delta^3 + n\delta^2 \eta + n\delta
    \big(nd + d^2\gamma + \tau d\delta\big)\right) \\
    &= \widetilde{O}\left(\delta^3 + n\delta^2 \eta + n^2d\delta + 
    nd^2\delta\gamma + \tau nd\delta^2\right)
    \end{align*}
    bit operations, and contains rationals having height in
    $\widetilde{O}\big(nd + d^2\gamma + \tau d\delta\big)$, as required.
\end{proof}

\subsection{Bit Complexity of \texorpdfstring{\nameref{algo:smooth}}{}}
\label{subsec:alg1comp}

We now finally analyse the bit complexity of \nameref{algo:smooth}. Recall that
we assume that $d > 1$.

\begin{customthm}{1}\label{thm:alg1comp}
    Suppose that $f \in \ratfield[x_1, \dots, x_n]$ satisfies \emph{\assumS}, is
    of degree $d$ and height $\tau$, and that $\GamSLP$ is a straight-line
    program of length $L$ evaluating $f$. Let $0 < \epsproba < 1$. Then
    \emph{\nameref{algo:smooth}} performs at most
    \[\widetilde{O}\big(\log(1/\epsproba)n^3\multitotaldegree
    \multitotalheight(L+nd+n^3) + nd^5\multitotaldegree^3(\multitotalheight 
    + n\multitotaldegree)\big)\]
    bit operations, with probability of success at least $1-\epsproba$. Upon
    success, it returns a finite subset of $\ratfield^n$ containing at most
    $2n\multitotaldegree + 1$ elements, each of height in
    \[\widetilde{O}\big(d^5\multitotaldegree^2(\multitotalheight +
    n\multitotaldegree)\big).\] 
\end{customthm}
\begin{proof}
    As the complexity of \nameref{subr:critpts} computed in
    \Cref{lem:subrcritcomp}, as well as the degree and height of polynomials in
    the output, are independent of $k$, it suffices to compute the bit
    complexity of a single iteration of the \texttt{for} loop, and multiply the
    result by $n$ to obtain the complexity of \nameref{algo:smooth}. 

    By construction, each entry of the matrix $\changeofvarmatrix$ chosen in 
    \Cref{lin:gen:A} has height 
    \[\alpha \coloneqq \log\left(3\epsproba^{-1}\left(5n^3(2d)^{2n}+
    \frac{n^2-n}{2}\right)\right) \in \widetilde{O}\left(\log(1/\epsproba) + 
    n\log(d)\right)\]
    and each entry of the vector $\genericcoordsigma$ chosen in
    \Cref{lin:gen:sigma} has height no higher than $\alpha$. We can therefore
    apply \Cref{lem:subrcritcomp} to conclude that the bit cost of
    \Cref{lin:gen:crit} is at most
    \begin{align*}
        &\phantom{{}={}}\widetilde{O}\left(\log(1/\epsproba)n^2
        \multitotaldegree\multitotalheight(L+nd+n^3)\right).
    \end{align*}
    Upon success, it returns a zero-dimensional rational parametrisation
    $\scrQ_k = (w_k, \bmv_k, \nu_k)$ of degree at most $\multitotaldegree$ and
    height in \(\widetilde{O}(\multitotalheight + n\multitotaldegree)\). If $w_k
    \equiv 1$, the loop terminates here, so suppose that $w_k \not\equiv 1$. As
    $\alpha \leq \approxheight \leq \multitotalheight \leq \multitotalheight +
    n\multitotaldegree$, we can apply \Cref{lem:subrisolcomp} to conclude that
    the bit cost of \Cref{lin:gen:Pparam} is at most
    \begin{align*}
        &\phantom{{}={}}\widetilde{O}\big(nd\multitotaldegree^2 d^2\tau + 
        n\multitotaldegree^2 d^3(\multitotalheight + n\multitotaldegree) 
        +n\multitotaldegree^3d^3\big) \\
        &= \widetilde{O}\big(nd^3\multitotaldegree^2 (\multitotalheight + 
        n\multitotaldegree) \big),
    \end{align*}
    since $\tau \leq \approxheight \leq \multitotalheight$, and that the output
    parametrisations each have degree bounded by $\multitotaldegree$ and height
    in $\widetilde{O}\big(d^3\multitotaldegree (\multitotalheight +
    n\multitotaldegree)\big)$. By using the arithmetic B\'ezout theorem
    \cite{KPS01}, along with \cite[Lemma 2.6]{KPS01}, we know that the heights
    of the algebraic sets $P^-$ and $P^+$ parametrised by $\scrP^-$ and
    $\scrP^+$ satisfy
    \begin{align*}
        \htt(P^\pm) &\leq n\approxheight\multitotaldegree 
        + (2n+3)\multitotaldegree\log(n+1)
        + \big(d^3 \multitotaldegree (\multitotalheight +
        n\multitotaldegree)\big)\multitotaldegree + 
        5\multitotaldegree\log(n+1) \\
        &\in \widetilde{O}\big(d^3\multitotaldegree^2 (\multitotalheight +
        n\multitotaldegree)\big).
    \end{align*}
    Therefore, by applying \Cref{lem:subrapproxcomp}, we
    conclude that the bit cost of \Cref{lin:gen:approx-,lin:gen:approx+} is at
    most
    \begin{align*}
        &\phantom{{}={}}\widetilde{O}\Big(\multitotaldegree^3 + 
        n\multitotaldegree^2\big(d^3 \multitotaldegree (\multitotalheight +
        n\multitotaldegree)\big) + n^2d\multitotaldegree + 
        nd^2\multitotaldegree \big(d^3 \multitotaldegree^2
        (\multitotalheight + n\multitotaldegree)\big) + 
        \tau nd\multitotaldegree^2\Big) \\
        &=\widetilde{O}\big(nd^5\multitotaldegree^3(\multitotalheight +
        n\multitotaldegree)\big),
    \end{align*}
    and that the output rationals each have height in
    \begin{align*}
        &\phantom{{}={}}\widetilde{O}\left(nd + d^2\big(d^3 \multitotaldegree^2
        (\multitotalheight + n\multitotaldegree)\big) + \tau d\multitotaldegree
        \right)\\
        &= \widetilde{O}\big(d^5\multitotaldegree^2(\multitotalheight +
        n\multitotaldegree)\big)
    \end{align*}
    as required. Therefore, the total bit complexity of \nameref{algo:smooth} is
    at most $n$ times the total complexity of an iteration of the \texttt{for}
    loop, that is, at most
    \begin{align*}
        &\phantom{{}={}}\widetilde{O}\Big(n\big(\log(1/\epsproba)n^2
        \multitotaldegree\multitotalheight(L+nd+n^3) \\
        &\indent \indent + nd^3\multitotaldegree^2 (\multitotalheight + 
        n\multitotaldegree) \\
        &\indent \indent + nd^5\multitotaldegree^3(\multitotalheight +
        n\multitotaldegree)\big)\Big) \\
        &= \widetilde{O}\big(\log(1/\epsproba)n^3\multitotaldegree
        \multitotalheight(L+nd+n^3) + nd^5\multitotaldegree^3(\multitotalheight 
        + n\multitotaldegree)\big)
    \end{align*}
    as required.
\end{proof}

Using \Cref{thm:alg1comp}, we obtain the following result, concerning the case
where $f$ does not have any particular multi-homogeneous structure, that is, $f
\in \ratfield[\bmx_1]$. In this case, to simplify notation, we define
\begin{itemize}
    \item $\delta_i \coloneqq \deg\big(\frac{\partial f}{\partial x_i}\big)$ 
    for $1\leq i \leq n$,
    \item $\totaldegree \coloneqq d\delta_2\cdots\delta_n$,
\end{itemize}
and further assume that $\delta_1 \leq \cdots \leq \delta_n$, since
this can always be ensured by re-labelling the variables.

\begin{customthm}{2}\label{thm:alg1compsimp}
    In the setting of \emph{\Cref{thm:alg1comp}}, when $f \in
    \ratfield[\bmx_1]$, \emph{\nameref{algo:smooth}} performs at most
    \[\widetilde{O}\big(n^2\totaldegree^2(n\log(1/\epsproba) + n^2 + 
    \tau + d)(\log(1/\epsproba)n^2(L+nd+n^3) + 
    d^5\totaldegree^2)\big)\]
    bit operations, with probability of success at least $1-\epsproba$. Upon
    success, it returns a finite subset of $\ratfield^n$ containing at most
    $2n\totaldegree + 1$ elements, each of height in
    \[\widetilde{O}\big(nd^5\totaldegree^3(n\log(1/\epsproba) + n^2 + 
    \tau + d)\big).\]
\end{customthm}
\begin{proof}
    In the particular case where $\bmx_1$ is the only group of variables, we
    have:
    \begin{itemize}
        \item $\alpha \in \widetilde{O}\left(\log(1/\epsproba) + 
        n\log(d)\right)$,
        \item $\approxheight \in \widetilde{O}\left(n\log(1/\epsproba) + n^2 + 
        \tau + d\right)$,
        \item $\deg\left(\frac{\partial f}{\partial x_{\ell}} +
        \sum_{i=1}^k \bigg(\sum_{j=1}^{n-k}a_{i,(j+k)}\lowEnt_{j,\ell-k,k}\bigg)
        \frac{\partial f}{\partial x_i}\right) \leq \delta_{\ell}$ for $\ell >
        k$, since we assume that $\delta_1 \leq \cdots \leq \delta_{\ell}$.
    \end{itemize}
    Using these quantities in the definitions of $\multitotaldegree$ and
    $\multitotalheight$ yield
    \begin{itemize}
        \item $\multitotaldegree = \totaldegree$,
        \item $\multitotalheight \in \widetilde{O}(n\approxheight\totaldegree)$.
    \end{itemize}
    Hence, substituting these values in the bit complexity result of
    \Cref{thm:alg1comp} gives
    \begin{align*}
        &\phantom{{}={}}\widetilde{O}\big(\log(1/\epsproba)n^4\approxheight
        \totaldegree^2(L+nd+n^3) + nd^5\totaldegree^3(n\approxheight\totaldegree
        + n\totaldegree)\big) \\
        &= \widetilde{O}\big(n^2\approxheight\totaldegree^2(
        \log(1/\epsproba)n^2(L+nd+n^3) + d^5\totaldegree^2)\big) \\
        &= \widetilde{O}\big(n^2\totaldegree^2(n\log(1/\epsproba) + n^2 + 
        \tau + d)(\log(1/\epsproba)n^2(L+nd+n^3) + 
        d^5\totaldegree^2)\big)
    \end{align*}
    as a bound for the total number of bit operations of \nameref{algo:smooth}.
    The statement about the height of the output is obtained similarly.
\end{proof}

Finally, if one only wishes to compute the zero-dimensional rational
parametrisations of the points of $S$, without their rational approximation, we
have the following result:

\begin{customthm}{3}\label{thm:smoothparam}
    If one only wishes to obtain the zero-dimensional rational parametrisations
    of the points of $S \coloneqq \left\{\bmx \in \mathbb{R}^n \ | \ f(\bmx)
    \neq 0\right\}$ instead of their rational approximation, this can be done in
    at most
    \[\widetilde{O}\big(\log(1/\epsproba)n^3\multitotaldegree
    \multitotalheight(L+nd+n^3) + nd^3\multitotaldegree^2(\multitotalheight 
    + n\multitotaldegree)\big)\]
    bit operations in the setting of \emph{\Cref{thm:alg1comp}}, or
    \[\widetilde{O}\big(n^2\totaldegree^2(n\log(1/\epsproba) + n^2 + 
    \tau + d)(\log(1/\epsproba)n^2(L+nd+n^3) + 
    d^3\totaldegree)\big)\]
    bit operations in the setting of \emph{\Cref{thm:alg1compsimp}}. There are
    $2n$ such parametrisations computed, consisting of polynomials of degree at
    most $\multitotaldegree$, respectively $\totaldegree$, and height in
    \(\widetilde{O}\big(d^3\multitotaldegree (\multitotalheight +
    n\multitotaldegree)\big)\), respectively \(\widetilde{O}\big(nd^3
    \totaldegree^2 (n\log(1/\epsproba) + n^2 + \tau + d)\big)\).
\end{customthm}
\begin{proof}
    It suffices to apply \nameref{algo:smooth} without computing the
    approximations, that is, returning the $\scrP_k^{\pm}$ once they have been
    computed. The complexity of doing so is thus the same as the one of
    \Cref{thm:alg1comp}, without the complexity of
    \Cref{lin:gen:approx-,lin:gen:approx+}. By \Cref{thm:alg1comp}, this
    therefore gives a bit complexity of
    \begin{align*}
        &\phantom{{}={}}\widetilde{O}\Big(n\big(\log(1/\epsproba)n^2
        \multitotaldegree\multitotalheight(L+nd+n^3) \\
        &\indent \indent + nd^3\multitotaldegree^2 (\multitotalheight + 
        n\multitotaldegree)\big)\Big) \\
        &= \widetilde{O}\big(\log(1/\epsproba)n^3\multitotaldegree
        \multitotalheight(L+nd+n^3) + nd^3\multitotaldegree^2(\multitotalheight 
        + n\multitotaldegree)\big)
    \end{align*}
    in the multi-homogeneous setting. In the case where $f$ has no
    multi-homogeneous structure, substituting $\multitotaldegree = \totaldegree$
    and $\multitotalheight \in \widetilde{O}(n\approxheight\totaldegree)$, as in
    \Cref{thm:alg1compsimp}, yields a bit complexity of
    \[\widetilde{O}\big(n^2\totaldegree^2(n\log(1/\epsproba) + n^2 + 
    \tau + d)(\log(1/\epsproba)n^2(L+nd+n^3) + 
    d^3\totaldegree)\big)\]
    as required. The statements about the heights of the polynomials in the
    output are obtained similarly.
\end{proof}

These results are precisely the bit complexity statements appearing in
\Cref{thm:mainresult,cor:mainresult}.

We remark that the complexities obtained in \Cref{thm:alg1comp} and
\Cref{thm:smoothparam} are, in the worst case scenario, essentially quartic and
cubic, respectively, in the Bézout bound $d(d-1)^{n-1}$, since this quantity is
a bound for $\totaldegree$.

\section{Experiments}\label{sec:exp}

All computations were performed on a single thread of an Intel(R) Xeon(R) Gold
6244 CPU @ 3.60GHz with 1500 GB of memory. We have implemented
\nameref{algo:smooth} in \texttt{SageMath} \cite{sagemath}, using the library
\texttt{msolve} \cite{msolve} for the computation of zero-dimensional rational
parametrisations and real root isolation. The code has been submitted as a
ticket to \texttt{SageMath} to be made publicly available within the software in
a later version\!\footnotemark[1].

\footnotetext[1]{A functional prototype of the code can be found at
\url{https://gillot.perso.lip6.fr/smooth.sage}.}

We compare these computations with our implementation of the algorithm presented
in \cite{SS03,EGS20,EGS23}, which computes points per connected components of
real algebraic sets. When the input polynomial of \nameref{algo:smooth} is $f
\in \ratfield[x_1, \dots, x_n]$, the algorithm of \cite{SS03} applied to
$x_{n+1}f-1$ for a new variable $x_{n+1}$ returns the same information about
sample points of connected components of $S$.
We also compare our results to the Cylindrical Algebraic Decomposition
\cite{collins}, more precisely to the dedicated function \textsc{SamplePoints},
from the \texttt{RegularChains} package present in the computer algebra system
\texttt{Maple} \cite{maple}. 

Finally, we also compare our results to the numerical
solver \textsc{HypersurfaceRegions} \cite{BSW25}, available as a
package in the \texttt{Julia} language \cite{julia}, and to the satisfiability
modulo theory (SMT) solver implemented in the \texttt{SMTLIB} package of
\texttt{Maple}.
These two solvers are fundamentally different to the previous ones. The latter
simply answers the question ``Is $S^+ \coloneqq \{\bmx \in \realfield^n \ | \
f(\bmx) > 0\}$ empty?'', and returns a sample point if it is not; it therefore
computes less information that the other algorithms.
The former computes sample points as our algorithm does, but also determines
whether two sample points lie in the same connected component or not; it
therefore computes more information than our algorithm. However, it is
numerical, and hence does not guarantee that its solutions are accurate, as
shown below: 

\begin{figure}[H]
\centering
\includegraphics{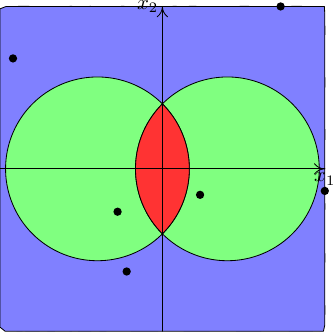}
\caption{Illustration of the output of \textsc{HypersurfaceRegions} applied to
$f = ((x_0-1)^2 + x_1^2 - 1 - 1/2^{16})((x_0+1)^2 + x_1^2 - 1 - 1/2^{16})$.
Despite the success of the computation, the central component (in
\textcolor{red}{red}) is missed (exaggerated size for visual clarity).}
\label{fig:numeric}
\end{figure}

We apply these algorithms to three types of examples:
\begin{itemize}
    \item generic dense polynomials of fixed degree $4$ and height $6$, with
    varying number $n$ of variables,
    \item generic polynomials of fixed degree $12$, with varying number $n$ of
    variables, such that $n-k$ of their partial derivatives have degree $1$, for
    some varying $0 \leq k \leq n$,
    \item and small deformations of singular examples arising from applications
    of other scientific fields, namely \texttt{Vor1} and \texttt{Vor2} from
    computational geometry problems \cite{vor12}, \texttt{P3PGen} from computer
    vision \cite{P3PGen}, \texttt{K1}--\texttt{K4} and
    \texttt{Kalto1}--\texttt{Kalto4} from real algebra theorem-proving
    \cite{K1K2K3K4}, and \texttt{Sot1} and \texttt{Sot2} from problems appearing
    in enumerative geometry \cite{Sot99}.
    These are all polynomials with $4$ to $8$ variables and
    having degrees between $4$ and $18$, of varying density\!\footnotemark[2].
\end{itemize}
\footnotetext[2]{The examples can be found at
\url{https://gillot.perso.lip6.fr/Examples/list.html}.}
The $\infty$ symbol indicates that no result has been obtained after $25$ days
of computation, the \texttt{OOM} symbol indicates memory requirements exceeding
1500 GB, and the \texttt{Err} symbol indicates that the program returned an
error.

Along with computation time, we also compare the total number of points obtained
using the different algorithms. Note that this number may vary between runs for
both the algorithm of \cite{SS03} and \Cref{algo:smooth}, as it depends on the
randomly chosen change of variables matrix $ \changeofvarmatrix$ and
specialisation point $\genericcoordsigma$.

% Rule of thumb for centering in multicolumn: add 2\tabcolsep per sub-column
\newcolumntype{C}[1]{>{\centering\arraybackslash}m{#1}}
\renewcommand{\multirowsetup}{\centering}

\begin{table}[H]
\makebox[1 \textwidth][c]{
\resizebox{1.2 \textwidth}{!}{
\begin{tabular}{|c|C{1.2cm}|C{1.2cm}|C{1.3cm}|C{1.2cm}|C{1.6cm}|C{1.2cm}||C{1.6cm}|C{1.2cm}||C{1.4cm}|}
    \hline
    \multirow{2}{*}{$\bm{n}$} & 
    \multicolumn{2}{C{\dimexpr2.4cm+2\tabcolsep\relax}|}{\textsc{Regular\-Chains.mm}} & 
    \multicolumn{2}{C{\dimexpr2.5cm+2\tabcolsep\relax}|}{\cite[Algorithm 1]{SS03}} & 
    \multicolumn{2}{C{\dimexpr2.8cm+2\tabcolsep\relax}||}{\nameref{algo:smooth}} &
    \multicolumn{2}{C{\dimexpr2.8cm+2\tabcolsep\relax}||}{\textsc{Hypersurface\-Regions.jl}} & 
    \multicolumn{1}{C{1.4cm}|}{\textsc{SMT\-LIB.mm}} \\
    \cline{2-10}
    & Time & Points & Time & Points & Time & Points & Time & Points & Time \\
    \hline
    $4$ & $11 \,000$ & $20 \,607$ & $14$ & $23$ &
    $3$ & $13$ & $84$ & $19$ & $1$ \\
    $5$ & $\infty$ & --- & $500$ & $61$ &
    $10$ & $65$ & $110$ & $33$ & $1$ \\
    $6$ & $\infty$ & --- & $37\, 000$ & $97$ &
    $120$ & $73$ & $230$ & $53$ & $40$ \\
    $7$ & $\infty$ & --- & $\infty$ & --- &
    $8\, 800$ & $177$ & $1\, 100$ & $117$ & $1$ \\
    $8$ & $\infty$ & --- & $\infty$ & --- &
    $1 \,800 \,000$ & $183$ & $7 \,400$ & $113$ & $1$ \\
    $9$ & $\infty$ & --- & $\infty$ & --- & $\infty$ & --- &
    $38 \,000$ & $269$ & $1$ \\
    $10$ & $\infty$ & --- & $\infty$ & --- & $\infty$ & --- &
    $170 \,000$ & $349$ & $430$ \\
    \hline
\end{tabular}
}}
\caption{Comparisons (in seconds) on generic dense polynomials of fixed degree
$4$ and height $6$ with varying number $n$ of variables.}
\label{tab:gen}
\end{table}

In \Cref{tab:gen}, we observe that, on random examples, although
\nameref{algo:smooth} outperforms other symbolic algorithms, it is less
effective than \textsc{HypersurfaceRegions.jl} in dealing with examples of large
dimensions. Of course, unlike other symbolic algorithms,
\textsc{HypersurfaceRegions.jl} does not guarantee that all connected components
have been sampled.
Moreover, as expected, \textsc{SMT\-LIB.mm} easily terminates, since finding a
point of $S^{+}$ (resp.\ $S^-$) is not expected to be a difficult problem when
$f$ is randomly generated.

Furthermore, we observe computationally that, in our implementation of
\nameref{algo:smooth}, solving the polynomial system to compute the critical
points (that is, \Cref{lin:gen:crit}) is the most time-consuming step. This
differs from our complexity analysis, where
\Cref{lin:gen:approx-,lin:gen:approx+}, the solution approximation steps, were
limiting. This can be explained by the fact that the height bounds on computed
quantities, such as the quantity $\lambda$ of \nameref{subr:isol}, are
worst-case bounds that are seldom reached in practice. 

The last case that terminates for our algorithm, $n=8$, requires to solve a
polynomial system of degree $8 748$, and the case $n=9$, which did not
terminate, of degree $26 244$. This degree is known since \texttt{msolve} is
based on multimodular techniques: in the allowed time frame, the system was
solved modulo some prime numbers, but not enough to perform rational
reconstruction, suggesting that speed-ups may be obtained through
multi-threading (as these modular computations are independent).

\begin{table}[H]
\makebox[1 \textwidth][c]{
\resizebox{1.2 \textwidth}{!}{
\begin{tabular}{|c|C{1.2cm}|C{1.2cm}|C{1.4cm}|C{1.2cm}|C{1.6cm}|C{1.2cm}||C{1.6cm}|C{1.2cm}||C{1.4cm}|}
    \hline
    \multirow{2}{*}{$\bm{n}$} & 
    \multicolumn{2}{C{\dimexpr2.4cm+2\tabcolsep\relax}|}{\textsc{Regular\-Chains.mm}} & 
    \multicolumn{2}{C{\dimexpr2.6cm+2\tabcolsep\relax}|}{\cite[Algorithm 1]{SS03}} & 
    \multicolumn{2}{C{\dimexpr2.8cm+2\tabcolsep\relax}||}{\nameref{algo:smooth}} &
    \multicolumn{2}{C{\dimexpr2.8cm+2\tabcolsep\relax}||}{\textsc{Hypersurface\-Regions.jl}} & 
    \multicolumn{1}{C{1.4cm}|}{\textsc{SMT\-LIB.mm}} \\
    \cline{2-10}
    & Time & Points & Time & Points & Time & Points & Time & Points & Time \\
    \hline
    $8,2$ & $260$ & $3 \,105$ & $19 \,000$ & $49$ &
    $24$ & $117$ & $220$ & $41$ & $1$ \\
    $8,3$ & $64 \,000$ & $21 \,343$ & $\infty$ & --- &
    $11 \,000$ & $237$ & $10 \,000$ & $105$ & $1$ \\
    $8,4$ & $\infty$ & --- & $\infty$ & --- &
    $\infty$ & --- & $480 \,000$ & $243$ & $\infty$ \\
    $12,2$ & $20\, 000$ & $25 \,181$ & OOM & --- &
    $31$ & $89$ & $850$ & $37$ & $\infty$ \\
    $12,3$ & $\infty$ & --- & OOM & --- &
    $20 \,000$ & $241$ & $19 \,000$ & $95$ & $\infty$ \\
    $20,2$ & $\infty$ & --- & OOM & --- &
    $110$ & $205$ & $3 \,300$ & $45$ & $1$ \\
    $20,3$ & $\infty$ & --- & OOM & --- &
    $56 \,000$ & $345$ & $140 \, 000$ & 
    \textcolor{red}{Err}\!\footnotemark[4] & $1$ \\
    $50,1$ & $\infty$ & --- & OOM & --- &
    $210$ & $261$ & $5 \, 900$ & $9$ & $1$ \\
    $50,2$ & $\infty$ & --- & OOM & --- &
    $960$ & $325$ & $200 \, 000$ & $43$ & $2$ \\
    $100,0$ & $1$ & $4$ & $22 \,000$ & $247$ &
    $1 \,200$ & $257$ & $15 \,000$ & $13$ & $19$ \\
    $150,0$ & \textcolor{red}{Err}\!\footnotemark[3] & --- &
    $260 \,000$ & $387$ & $7 \,300$ & $397$ & $100 \,000$ &
    17 & 10 \\
    \hline
\end{tabular}
}}
\caption{Comparisons (in seconds) on generic polynomials of fixed degree $12$
and height $6$, with varying number $n$ of variables, such that $n-k$ partial
derivatives have degree $1$. When $k=0$, the polynomial becomes a generic dense
polynomial of degree $2$.}
\label{tab:partial}
\end{table}
\footnotetext[3]{\textsc{RegularChains} returns an error due to a type
\texttt{list} containing too many elements for \texttt{Maple}. We however still
expect the CAD to outperform our algorithm in these degree $2$ examples, both in
computation time and number of sample points, were it not for this type error.}
\footnotetext[4]{This is an intentional error by
\textsc{HypersurfaceRegions.jl}, occurring when it computes critical points
that are numerically too close to being singular points.}

In \Cref{tab:partial}, we have randomly generated polynomials such that most of
their partial derivatives have much lower degree than expected, which is
precisely the case where $\totaldegree$ is much lower than $d(d-1)^{n-1}$, and
hence the case where we expect \nameref{algo:smooth} to perform well, according
to our complexity analysis. Indeed, as can be observed in \Cref{tab:partial},
\nameref{algo:smooth} is competitive with \textsc{HypersurfaceRegions.jl} in
many such examples, and even outperforms \textsc{SMTLIB.mm} in some examples.
\textsc{RegularChains.mm} remains however the most efficient algorithm in
practice when $d=2$.

The one example, $(8,4)$, that we did not manage to compute involved a polynomial
system of degree $29282$. The other examples had degree at most $2662$.

\begin{table}[H]
\makebox[1 \textwidth][c]{
\resizebox{1.2 \textwidth}{!}{
\begin{tabular}{|c|C{1.2cm}|C{1.2cm}|C{1.4cm}|C{1.2cm}|C{1.6cm}|C{1.2cm}||C{1.6cm}|C{1.2cm}||C{1.6cm}|}
    \hline
    \multirow{2}{*}{$\bm{n}$} & 
    \multicolumn{2}{C{\dimexpr2.4cm+2\tabcolsep\relax}|}{\textsc{Regular\-Chains.mm}} & 
    \multicolumn{2}{C{\dimexpr2.6cm+2\tabcolsep\relax}|}{\cite[Algorithm 1]{SS03}} & 
    \multicolumn{2}{C{\dimexpr2.8cm+2\tabcolsep\relax}||}{\nameref{algo:smooth}} &
    \multicolumn{2}{C{\dimexpr2.8cm+2\tabcolsep\relax}||}{\textsc{Hypersurface\-Regions.jl}} & 
    \multicolumn{1}{C{1.6cm}|}{\textsc{SMT\-LIB.mm}} \\
    \cline{2-10}
    & Time & Points & Time & Points & Time & Points & Time & Points & Time \\
    \hline
    \texttt{Vor1} & $4$ & $48$ & $230$ & $47$ & $17$ 
    & 5$3$ & $200$ & \textcolor{red}{Err}\!\footnotemark[4] & $1$ \\
    \texttt{Vor2} & $\infty$ & --- & $830 \, 000$ & $87$ & 
    $17 \, 000$ & $149$ & $220 \, 000$ & $23$ & OOM \\
    \texttt{P3PGen} & $\infty$ & --- & $\infty$ & --- & $80 \, 000$ & $709$ &
    $8 \, 100$ & $234$ & $1$ \\
    \texttt{K1} & OOM & --- & $11 \, 000$ & $643$ & $280$
    & $381$ & $8 \, 400$ & \textcolor{red}{Err}\!\footnotemark[4] & $1$ \\
    \texttt{K2} & $1$ & $112$ & $9 \, 600$ & $693$ & 
    $290$ & $345$ & $3 \, 400$ & \textcolor{red}{Err}\!\footnotemark[4] & $1$ \\
    \texttt{K3} & $\infty$ & --- & $560$ & $355$ & 
    $56$ & $201$ & $1 \, 600$ & \textcolor{red}{Err}\!\footnotemark[4] & $1$ \\
    \texttt{K4} & $2$ & $224$ & $11 \, 000$ & $717$ & 
    $340$ & $373$ & $6 \,700$ & \textcolor{red}{Err}\!\footnotemark[4] & $1$ \\
    \texttt{Kalto1} & $\infty$ & --- &
    $\infty$ & --- & $\infty$ & --- & $\infty$ & --- & $1 \, 100 \, 000$ \\
    \texttt{Kalto2} & $\infty$ & --- &
    $\infty$ & --- & $\infty$ & --- & $\infty$ & --- & $850 \, 000$ \\
    \texttt{Kalto3} & $\infty$ & --- &
    $\infty$ & --- & $\infty$ & --- & $\infty$ & --- & $950 \, 000$ \\
    \texttt{Kalto4} & $\infty$ & --- &
    $\infty$ & --- & $\infty$ & --- & $\infty$ & --- & $1 \, 000 \, 000$ \\
    \texttt{Sot1} & $\infty$ & --- & $\infty$ & --- & $200 \, 000$ & $125$ & $2
    \, 000 \, 000$
    & $47$ & $1$  \\
    \texttt{Sot2} & $\infty$ & --- & $\infty$ & --- & $190 \, 000$ & $125$ &
    $\infty$ & --- & OOM  \\
    \hline
\end{tabular}
}}
\caption{Comparisons (in seconds) on deformed ($1/2^{32}$ subtracted) singular
examples arising from applications.}
\label{tab:sing}
\end{table}

Finally, in \Cref{tab:sing}, we can see that \textsc{HypersurfaceRegions.jl}
fails in most examples due to internal checks, as expected given that the input
polynomials are very close to defining a singular hypersurface. Depending on the
examples, however, either \textsc{RegularChains.mm} or \nameref{algo:smooth}
outperform the other. We also do manage to outperform \textsc{SMTLIB.mm} on some
examples. The highest degree of a polynomial system appearing in an example
which we managed to solve was $10296$.

\paragraph*{Acknowledgements.} The authors are supported by ANR Project
ANR-22-CE91-0007 “EAGLES” and the FA8655-25-1-7469 of the European Office of
Aerospace Research and Development of the Air Force Office of Scientific
Research (AFOSR).

%
% SMTLIB Solver (Maple) timings:
% Vor1:            < 1s
% Vor1_def:        < 1s
% Vor2:            OOM
% Vor2_def:        OOM
% P3PGen:          < 1s
% P3PGen_def:      < 1s
% K1:              < 1s
% K1_def:          < 1s
% K2:              < 1s
% K2_def:          < 1s
% K3:              < 1s
% K3_def:          < 1s
% K4:              < 1s
% K4_def:          < 1s
% Kalto1:          1107271s
% Kalto1_def:      ?
% Kalto2:          850556s
% Kalto2_def:      ?
% Kalto3:          953902s
% Kalto3_def:      ?
% Kalto4:          993124s
% Kalto4_def:      ?
% Gen4var:         < 1s
% Gen5var:         < 1s
% Gen6var:         40s
% Gen7var:         < 1s
% Gen8var:         < 1s
% Gen9var:         < 1s
% Gen10var:        427s
% n8r6:            < 1s
% n8r5:            < 1s
% n8r4:            infty
% n12r10:          infty
% n12r9:           infty
% n20r18:          < 1s
% n20r17:          < 1s
% n50r48:          < 1s
% n50r49:          2s
% n100r100:        19s
% n150r150:        10s
%
%
% Sot1_def:        < 1s
% Sot2_def:        OOM
% BX1_def:         3s
% BX2_def:         194s
% BX3_def:         267s
%
% Other examples:
%
% BX3:               HyRe: 2d1h - 29 points // infty for the rest 
% Sot1_def:            us: 2d7h - 125 points // 22d18h for HyRe - 47 points //
%                          infty for rest
% Sot2_def:            us: 2d6h - 125 points // infty for rest

% \begingroup
% \phantomsection
% \addcontentsline{toc}{chapter}{References}
% \renewcommand{\bibname}{References}
% %\let\clearpage\relax
% \printbibliography
% \endgroup

\end{document}